%% file: main.tex
\documentclass[11pt]{article}
\usepackage{graphicx,fullpage} 
\usepackage{amsfonts,amsmath,amsthm,amssymb,mathtools, comment}\usepackage{xcolor}
\usepackage[T1]{fontenc}
\usepackage[colorlinks,citecolor=blue,linkcolor=blue,urlcolor=red,pagebackref]{hyperref}

\usepackage[capitalise]{cleveref}
\usepackage{url}
\usepackage{thmtools, thm-restate}
\usepackage{algorithm, algorithmicx, algpseudocode}
\usepackage{tikz}
\usepackage{bm}
\usepackage{thm-restate}

\newcommand{\eps}{\varepsilon}

\newcommand{\patrascu}{P\u{a}tra\c{s}cu}

\newtheorem{theorem}{Theorem}[section]  
\newtheorem{lemma}[theorem]{Lemma}
\newtheorem{proposition}[theorem]{Proposition}
\newtheorem{claim}[theorem]{Claim}

\theoremstyle{definition}

\title{New Diameter Approximations via Distance Oracle Techniques}

\author{Yael Kirkpatrick\thanks{\texttt{yaelkirk@mit.edu}, supported by NSF Grant No 2141064.}\\MIT \and Liam Roditty\thanks{\texttt{liam.roditty@biu.ac.il}, supported by BSF Grant 2024233.}\\Bar Ilan University \and Richard Qi\thanks{\texttt{rqi@mit.edu}.}\\MIT \and Virginia Vassilevska Williams\thanks{\texttt{virgi@mit.edu}, supported by NSF Grant CCF-2330048, BSF Grant 2024233 and a Simons Investigator Award.}\\MIT}
\date{}

\begin{document}

\maketitle

\begin{abstract}
    Computing the diameter of a graph is a problem of great interest both in general algorithms research and specifically within fine-grained complexity, where it is a cornerstone hard problem. As computing the exact diameter in $m$-edge graphs requires $m^{2-o(1)}$ time under the Strong Exponential Time Hypothesis, much work has gone into approximating this parameter. Recent work has achieved a full conditional lower bound tradeoff curve for both directed and undirected graphs [Dalirrooyfard, Li and Vassilevska W., FOCS'21]. However, the best known upper bounds do not match the lower bounds. In particular, the best known approximation scheme for undirected graph diameter [Cairo-Grossi-Rizzi, SODA 2016] has not been improved. Moreover, this scheme is randomized and no similar deterministic scheme is known.

    Another fundamental field of research in shortest paths computation is the construction of approximate distance oracles. Thorup and Zwick [JACM'05] provided the first such distance oracle with constant query time and (conditionally) optimal space, and in the years since many advances have led to a vast toolbox of techniques and data structures.

    These two areas of research seem natural to combine since they both concern approximating shortest paths. However, the known diameter approximation algorithms only use a small subset of the techniques used in distance oracles research. In this work we show that in fact approximate diameter and distance oracles are intricately connected.

    We first demonstrate a strong connection between the current best known diameter approximation scheme of Cairo, Grossi and Rizzi ("CGR") and the $(2k-1)$-approximate distance oracle of Thorup and Zwick. This allows us to derandomize the CGR algorithm and obtain the first deterministic diameter approximation tradeoff.

    We further derandomize other central techniques in the field of distance oracles and use them to achieve new deterministic diameter approximation algorithms, including a simpler $3/2$-approximation with no additive error and a new $5/3$-approximation, the first new step in the diameter approximation tradeoff in almost a decade. Finally, we show how these new techniques can be used to derandomize many current best known results in various fields of shortest paths approximations.
\end{abstract}

\newenvironment{proofof}[1]
  {\begin{proof}[#1]}
  {\end{proof}}

    \input{1_introduction}
    \input{2_prelim}
    \input{3_deterministic_cgr}
    \input{4_deterministic_small_clusters}

    \input{5_additional_diam_approx}
    \input{6_deterministic_cgr_radius_ecc}

    \bibliographystyle{abbrv} 
    \bibliography{ref}
\end{document}

%% file: 1_introduction.tex
\section{Introduction}
One of the most fascinating problems in shortest paths algorithms is estimating the {\em diameter} of a given graph. The diameter is the largest shortest path distance and is a natural parameter that measures how fast information can spread in a network. Computing the diameter can be accomplished by first computing All-Pairs Shortest Paths (APSP), going through all computed distances and returning the largest.
This approach is largely unsatisfying because the output size of APSP in $n$-node graphs is $n^2$ and hence even in very sparse graphs, one cannot hope to obtain sub-$n^2$ time algorithms for the diameter if one first computes APSP. The situation is even worse in dense weighted graphs, where the fastest exact algorithm for APSP runs in $n^3/\exp(\sqrt{\log n})$ time \cite{ryanapsp}. The fastest approximation algorithms use fast matrix multiplication: Zwick \cite{zwickbridge} provided an $\tilde{O}(n^{\omega}/\eps)$ time \footnote{$\tilde{O}$ subsumes polylogarithmic factors.} $(1+\eps)$-approximation algorithm where $\omega<2.372$ is the exponent of square matrix multiplication \cite{AlmanDVXXZ25}. (Boolean) matrix multiplication is also known \cite{doi:10.1137/S0097539797327908} to be necessary for any finite directed (or factor $<2$ undirected) approximation algorithm for APSP.

It remains a big open problem whether computing the diameter of a graph exactly requires computing APSP. Roditty and Vassilevska W. \cite{rv13} showed that the Strong Exponential Time Hypothesis (SETH) (\cite{ip1,cip10}) implies that computing the exact diameter requires $m^{2-o(1)}$ time in sparse graphs.
Because of this, a lot of research has focused on obtaining fast, subquadratic time approximation algorithms.

Aingworth, Chekuri, Indyk and Motwani \cite{aingw} gave an $\tilde{O}(n^2+m\sqrt n)$ time deterministic algorithm that achieves an ``almost'' $3/2$ approximation\footnote{Here ``almost'' means that there is extra small additive error, in addition to the multiplicative approximation factor.}: an estimate $\tilde{D}$ of the diameter $D$ such that $2D/3-M/3\leq \tilde{D}\leq D$ in any directed or undirected graph with nonnegative integer edge weights bounded by $M$.
Roditty and Vassilevska W. \cite{rv13} obtained the same guarantees as \cite{aingw} but with improved (expected) running time $\tilde{O}(m\sqrt n)$ {\em at the cost of randomization}.

Cairo, Grossi and Rizzi \cite{cgr} extended the techniques of \cite{rv13,aingw} to obtain a runtime/approximation tradeoff for undirected graphs. For every $k\geq 1$, they construct an algorithm with {\em expected} running time
$\tilde{O}(mn^{1/k})$ achieving $2^{k-1} D/(2^k-1)-(2^{k-1}-1)M/(2^k-1)\leq \tilde{D}\leq D$, an almost $(2-1/2^{k-1})$-approximation. We call this approximation scheme ``CGR''. Abboud et al. \cite{directeddiam} obtain a weaker scheme for directed graphs.

Chechik et al. \cite{chechikdiam} removed the additive error in the almost $3/2$-approximation algorithm of \cite{rv13} at the cost of  higher running time $\tilde{O}(\min\{mn^{2/3},m^{3/2})$. Backurs et al. \cite{towardsdiam} provide an $\tilde{O}(n^2)$ almost-$3/2$ approximation and an $O(n^{2.05})$ time almost-$3/2$ approximation with better additive error. 

All known diameter approximation algorithms except for the original Aingworth et al.~\cite{aingw} algorithm and the $\tilde{O}(mn^{2/3})$ time algorithm of \cite{chechikdiam} are {\em randomized}. This raises our first natural question:

\begin{center}
{\em Question 1: Can one obtain deterministic diameter approximation algorithms with the same guarantees as the known randomized algorithms?}
\end{center}

Another important area of research in APSP approximation for undirected graphs is the construction of space-efficient approximate {\em distance oracles} that answer distance queries in constant time. Thorup and Zwick \cite{thorupzwickdos} first provided a randomized construction that for every integer $k\geq 1$ and every undirected $n$-node $m$-edge graph with nonnegative weights preprocessed a distance oracle in $\tilde{O}(mn^{1/k})$ expected time that could provide
a $(2k-1)$-approximation to any distance in $O(k)$ query time. Later, Roditty, Thorup and Zwick \cite{RodittyTZ05det} derandomized the construction. Chechik \cite{chechik2014approximate}, following Wulff-Nilsen \cite{wulff2013approximate}, brought down the query time to a global constant, independent of $k$.

The space usage of these oracles is known to be optimal under the Erd\"os Girth Conjecture. 
Recent work \cite{abboud1,abboud2,ceyinzhan} has focused on showing that the preprocessing time of the distance oracles might be optimal. The latest of these results is that under the $3$SUM Hypothesis, $n^{1+1/(2k-1)-o(1)}$ time is needed to compute any $(2k-\delta)$-distance oracle of an $O(n)$-edge graph that can support $n^{o(1)}$ time queries.  

The literature on distance oracles is vast and there is a huge toolbox of techniques for constructing them. By and large, the known diameter approximation algorithms use a very small subset of the techniques for distance approximation: mainly hitting sets of balls around vertices, combined with Dijkstra's algorithm. Moreover, there are many different constructions of distance oracles (with different guarantees) besides the original Thorup and Zwick construction: e.g. \cite{akavroditty2do2020,beyondTZ14,newinfinity}. Meanwhile, there are only a small handful of diameter approximation algorithms, the main one being the Cairo-Grossi-Rizzi construction (which itself is based on \cite{aingw,rv13}).
Here we ask:

\begin{center}
{\em Question 2: Can we obtain new diameter algorithms by employing the vast techniques from the distance oracle literature? Can the many known distance oracle constructions lead to new diameter algorithms?}
\end{center}

\subsection{Our results}

In this paper we establish a strong connection between distance oracles and diameter approximations, which allows us to address both Question 1 and Question 2. 

We first employ a classic deterministic distance oracle construction to obtain a diameter approximation algorithm matching the tradeoff curve of Cairo, Grossi and Rizzi. Beyond derandomization, this new algorithm is a simpler version of the longstanding state of the art diameter approximation and the first improvement over it in almost a decade. 

We further develop tools to derandomize other known diameter approximations, showing that randomization is not needed in most of the current best known diameter approximation algorithms.
Additionally, we leverage other existing distance oracles to obtain new diameter approximation algorithms.

While this work takes only a small step towards closing the gap between the best known upper and lower bounds for diameter approximations, we hope the powerful connection we demonstrate between distance oracles and diameter approximations will open the door to new advancements in this field using the extensive range of techniques used in approximate distance oracles.





\subsection{Derandomization of Diameter Algorithms}
Our first result is a full derandomization of the Cairo-Grossi-Rizzi (CGR) \cite{cgr} approximation scheme for diameter in undirected graphs. We show how to achieve this best known runtime/approximation tradeoff using the classic $(2k-1)$-approximate distance oracle structure of Thorup and Zwick \cite{thorupzwickdos}, which in turns allows us to use the subsequent derandomization of this distance oracle to derandomize our diameter approximation scheme.

\begin{theorem}[\autoref{thm:detcgrdiam} in the body]\label{thm:cgr}
    Given an undirected graph with nonnegative edge weights bounded by $M$ and an integer $k\geq 2$, one can compute in \emph{deterministic} $\tilde{O}(mn^{1/k})$ time an estimate $\tilde{D}$ of the diameter $D$ satisfying $\frac{2^{k-1}}{2^k - 1}D - \alpha \leq \tilde{D}\leq D$ where $\alpha = \frac{2^{k-1}-1}{2^k-1}M$.
\end{theorem}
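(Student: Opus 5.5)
We plan to replace the random samples of CGR by the deterministic Thorup--Zwick hierarchy of Roditty, Thorup and Zwick~\cite{RodittyTZ05det}. That construction computes, in deterministic $\tilde{O}(mn^{1/k})$ time, a chain $V=A_0\supseteq A_1\supseteq\cdots\supseteq A_{k-1}\supseteq A_k=\emptyset$ with $|A_i|=\tilde{O}(n^{1-i/k})$. For every vertex $v$ it also gives the pivots $p_i(v)$, the closest vertices of $A_i$, and the bunches $B_i(v)=\{w\in A_i\setminus A_{i+1}: d(v,w)<d(v,A_{i+1})\}$, with $\sum_v|B(v)|=\tilde{O}(kn^{1+1/k})$. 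It also computes the clusters $C(w)=\{v: w\in B(v)\}$, together with the exact distances from each $w$ to its cluster by a truncated Dijkstra. The point is that, for $v$, the ball of radius $d(v,A_{i+1})$ contains no $A_{i+1}$ vertex. Its $A_i$-vertices are exactly $\bigcup_{j\le i}B_j(v)$. Hence $|B_i(v)|=\tilde O(n^{1/k})$ on average, and these bunches play the role of the CGR balls $N_{s}(v)$ together with the hitting sets $S_i$.

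The algorithm mimics CGR level by level.

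\textbf{Step 1.} Run Dijkstra from every vertex of $A_{k-1}$. This costs $\tilde O(m n^{1/k})$ since $|A_{k-1}|=\tilde O(n^{1/k})$. From the results, pick $w_{k-1}$ maximizing $d(w_{k-1},A_{k-1})$; more precisely, the vertex farthest from $A_{k-1}$, computed by one multi-source Dijkstra.

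\textbf{Step 2.} Descending from level $k-1$ to level $1$, the current vertex $w_i$ defines the next set $S_{i-1}$. We take $S_{i-1}$ to be the vertices of $A_{i-1}$ within the ball of $w_i$ of radius $d(w_i,A_i)$. We run Dijkstra from all of them and let $w_{i-1}$ be the vertex farthest from $S_{i-1}$ (or from $p_{i-1}(w_i)$, as in CGR). Each such ball has $\tilde O(n^{1/k})$ vertices of $A_{i-1}$ in the relevant sense. So each level costs $\tilde O(mn^{1/k})$, and the total is $\tilde O(kmn^{1/k})$. The output $\tilde D$ is the maximum eccentricity found. We need that the bunch size bound holds for the specific vertex $w_i$, not just on average. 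For this we rely on the RTZ deterministic guarantee that every bunch satisfies $|B_i(v)|=\tilde O(n^{1/k})$ individually, which their construction provides via a deterministic hitting-set choice of $A_{i+1}$ for the balls of size $n^{1/k}$ in $A_i$. The upper bound $\tilde D\le D$ is immediate.

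\textbf{Step 3.} For the lower bound we copy the CGR analysis. Let $a,b$ be diametral endpoints and $h=D/(2^k-1)$ roughly. Suppose no computed eccentricity reaches $\frac{2^{k-1}}{2^k-1}D-\alpha$. We show inductively that $d(w_i,A_i)\le 2^{i}h$, up to additive $M$ terms from rounding distances along edges. The step uses that $S_{i-1}$ contains all $A_{i-1}$ vertices closer to $w_i$ than its $A_i$-pivot. Then the vertex of $S_{i-1}$ on the shortest path toward $a$ or $b$ has high eccentricity, by the triangle inequality. At level $0$, $A_0=V$, which forces a contradiction. The additive $\alpha=\frac{2^{k-1}-1}{2^k-1}M$ arises from the $k-1$ levels where we stop at the last vertex inside a ball, losing one edge weight each, scaled geometrically.

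The main obstacle is Step 2's per-vertex size bound. The deterministic RTZ hierarchy controls bunch sizes only through hitting sets of fixed-size neighborhoods $N_{A_i}(v,n^{1/k})$. The CGR argument, however, needs the ball truncated exactly at the pivot distance, with ties at the boundary distance. I would first try defining $S_{i-1}$ as the $n^{1/k}$ nearest $A_{i-1}$-vertices of $w_i$. I would then verify that the RTZ choice of $A_i$ hits this set, so that $d(w_i,A_i)\le$ its radius. This makes the $S_{i-1}$ used in the analysis contain the required ball, and the inequalities in Step 3 go through with the same constants.
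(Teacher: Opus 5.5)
Your architecture matches the paper's: the greedy-hitting-set Thorup--Zwick hierarchy, Dijkstra from all of $A_{k-1}$, and at each level Dijkstra from the $A_{i-1}$-vertices closer to a far vertex than $A_i$ is. The per-vertex size bound you call the main obstacle is immediate. Since $A_i$ hits the $q$ nearest $A_{i-1}$-vertices of \emph{every} vertex, $|\{x\in A_{i-1}: d(v,x)<d(v,A_i)\}|\le q$ for all $v$.

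The genuine gap is in Step~2, in how you choose the next far vertex. You take $w_{i-1}$ farthest from $S_{i-1}$ (or from a pivot $p_{i-1}(w_i)$). But $S_{i-2}$ is cut off at radius $d(w_{i-1},A_{i-1})$, so the analysis needs \emph{that} radius to be large. Being far from a small subset of $A_{i-1}$ (either $S_{i-1}$ or a single pivot) gives no such bound. Nothing prevents the farthest vertex from $S_{i-1}$ from lying in, or next to, $A_{i-1}\setminus S_{i-1}$. Then $S_{i-2}$ is empty or tiny and every later level is useless. For $k=2$ this transition never occurs, but for $k\ge 3$ it does. Relatedly, the invariant in Step~3 points the wrong way. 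An upper bound $d(w_i,A_i)\le 2^ih$ says nothing useful. What must be propagated is the lower bound $d(w_j,A_j)>\gamma_j(D+M)$ with $\gamma_j=\frac{2^j-1}{2^k-1}$, roughly $(2^j-1)h$.

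The missing idea is the dichotomy behind the inductive step, and it is also what forces the levels to be decoupled. Let $\varepsilon=\frac{2^{k-1}-1}{2^k-1}$. Assume $d(w_i,A_i)>\gamma_i(D+M)$ and $d(a,w_i)<(1-\varepsilon)D-\varepsilon M$; the latter needs a Dijkstra from $w_i$ itself. Take $c$ on a shortest $w_i$--$a$ path with $d(w_i,c)\le(\gamma_i-\gamma_{i-1})(D+M)$ and $d(a,c)\le(1-\varepsilon-\gamma_i+\gamma_{i-1})(D+M)$. Contrary to your sketch, in general no vertex of $S_{i-1}$ lies on this path. Instead there are two cases.

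\begin{itemize}
\item If $d(c,A_{i-1})\le\gamma_{i-1}(D+M)$, let $q\in A_{i-1}$ be the nearby vertex. It satisfies $d(w_i,q)\le\gamma_i(D+M)<d(w_i,A_i)$, so $q\in S_{i-1}$. Also $d(a,q)\le\varepsilon(D+M)$, so the Dijkstra from $q$ sees $d(q,b)\ge(1-\varepsilon)D-\varepsilon M$.
\item If $d(c,A_{i-1})>\gamma_{i-1}(D+M)$, note that $c$ is unknown because it depends on $a$. The only computable vertex guaranteed to inherit this bound is the global maximizer of $d(\cdot,A_{i-1})$.
\end{itemize}

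So $w_{i-1}$ must be the farthest vertex from all of $A_{i-1}$. It is found by one multi-source Dijkstra and chosen independently of $w_i$, exactly as you already do for $w_{k-1}$. With that change your algorithm becomes the paper's, and $\gamma_0=0$, $A_0=V$ close the induction as you intended.
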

We note that our deterministic algorithm runs in the exact same time as the original algorithm of \cite{cgr}, without adding any logarithmic factors to the runtime, which is often the cost of derandomizations. Furthermore, our algorithm also derandomizes the radius and eccentricity approximation schemes of \cite{cgr}. While there is an $\tilde{O}(m)$ expected time $2$-approximation algorithm for all eccentricities \cite{choudhary2020extremal,towardsdiam} which subsumes the CGR eccentricities scheme, it is not known how to derandomize these $2$-approximation algorithms. Thus, our derandomization of CGR presents the best known deterministic approximation algorithms for all nodes eccentricities.


Our next contribution is a derandomization of another tool frequently used in distance oracles: the computation of clusters and balls that are simultaneously small, first introduced by Thorup and Zwick \cite{thorup2001compact}.
For a vertex $v$ and a set of vertices $S$, the $S$-ball of $v$, $B_S(v)$ is the set $\{x\in V~|~d(v,x)<d(v,S)\}$. The $S$-cluster $C_S(v)$ is $\{x\in V ~|~v\in B_S(x)\}$. If the graph is directed, we can define
$C_S^R(v)$ and $B_S^R(v)$ to be $C_S(v)$ and $B_S(v)$ in the graph with edge directions reversed.

Clusters and balls are in a sense inverses of each other. The fast computation of small clusters and balls is at the heart of fast distance oracle preprocessing. Approximate distance oracles often use a random resampling technique to construct a set with worst case guarantees on the size of \emph{all} of its clusters (this technique originates from \cite{thorup2001compact}). Here, we show that the same can be achieved deterministically in the same time complexity. 


\begin{restatable}{theorem}{ThmBallsCluster}\label{thm:ballsclusters}
    Given a weighted, directed or undirected graph $G$ and $\ell >0$, one can compute a set $S\subseteq V$ of size $O(\frac{n}{\ell}\log n)$ deterministically in time $O((m+n \log{n})\ell)$ such that for every $v\in V$, $|C_S(v)|,|B_S(v)|,|C_S^R(v)|,|B_S^R(v)|= O(\ell)$. 
\end{restatable}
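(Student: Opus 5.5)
Our plan is to derandomize the Thorup--Zwick resampling procedure. That procedure repeatedly samples a set and then adds the centers of oversized clusters. We keep the same iterative structure and replace the random sampling by a deterministic hitting-set computation.

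First, for every vertex $v$ we compute the $\ell$ nearest vertices in $G$ and in the reversed graph, by truncated Dijkstra. The out-ball $N^{out}_\ell(v)$ and the in-ball $N^{in}_\ell(v)$ are taken with ties broken consistently, for instance by vertex id. Truncated Dijkstra from $v$ only scans edges out of the first $\ell$ settled vertices. Summing over all $v$, each vertex $u$ is settled by at most $O(\ell)$ searches in the averaged sense, which gives the $O((m+n\log n)\ell)$ bound; this is the standard argument, and we will check that the averaging indeed holds. We then greedily build a set $S_0$ of size $O(\frac n\ell\log n)$ hitting every $N^{out}_\ell(v)$ and every $N^{in}_\ell(v)$. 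Once $S_0\subseteq S$, any $S\supseteq S_0$ gives $|B_S(v)|,|B^R_S(v)|\le \ell$ for all $v$, because balls only shrink when $S$ grows.

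Second, we handle the clusters. Balls and clusters are inverses: $x\in C_S(v)$ iff $v\in B_S(x)$. Hence $\sum_v |C_S(v)| = \sum_x |B_S(x)| \le n\ell$, so on average clusters are small. Following Thorup--Zwick, we iterate. Let $W$ be the set of vertices $w$ with $|C_S(w)|>4\ell$ (out-direction) or $|C^R_S(w)|>4\ell$. Note $|W|\le n/(2\ell)$ by the averaging bound. We cannot simply add $W$ to $S$, since adding vertices changes other clusters. Clusters only shrink when $S$ grows, though, so adding all of $W$ to $S$ fixes them: $C_{S\cup W}(w)=\emptyset$ for $w\in W$. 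Thus the step is to set $S\leftarrow S\cup W$ once. Since $S_0\subseteq S$ always and clusters are monotone decreasing in $S$, a single round makes every cluster have size $\le 4\ell$, and the size increases only by $n/(2\ell)$.

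Third, we must compute the cluster sizes within the time bound. Given $S$, the clusters $C_S(w)$ are computed by the modified Dijkstra of Thorup--Zwick, which grows $C_S(w)$ only through vertices $x$ with $d(w,x)<d(x,S)$. This costs time proportional to the edges incident to the cluster. We abort once $4\ell+1$ vertices are reached, so each run costs $O(\ell\cdot\deg)$-type work. The main obstacle is exactly this bound: a cluster's vertices can have large degree, so the aborted search costs the sum of the degrees of the explored vertices. We would bound the total by charging: each vertex $x$ lies in $\sum_w [x\in C_S(w)] = |B_S(x)|\le \ell$ clusters fully, plus the aborted ones. The aborted explorations also touch only vertices $x$ with $w\in B_S(x)$, so the total edge work is $\sum_x \deg(x)|B_S(x)|\le m\ell$, with the heap costing $n\ell\log n$. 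This yields $O((m+n\log n)\ell)$ overall, and likewise in the reversed graph. Throughout, a deterministic greedy hitting set replaces sampling, giving the claimed $|S|=O(\frac n\ell\log n)$.
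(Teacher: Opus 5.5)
Your argument breaks at the claim $|W|\le n/(2\ell)$. The averaging identity $\sum_w |C_S(w)| = \sum_x |B_S(x)|\le n\ell$ only gives $|\{w : |C_S(w)|>4\ell\}| < n\ell/(4\ell) = n/4$. It gives the same bound, fewer than $n/4$, for oversized reverse clusters. To get $n/(2\ell)$ you would need average ball size $O(1)$.

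Your observations that $C_{S\cup W}(w)=\emptyset$ for $w\in W$ and that clusters shrink as $S$ grows are correct. But the averaging bound allows $|W|$ up to $n/2$, and nothing in your argument rules this out. Adding all of $W$ can therefore put $\Theta(n)$ vertices into $S$, which destroys the $O(\frac{n}{\ell}\log n)$ size bound.

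This is exactly why Thorup and Zwick resample instead of adding all oversized centers. The real difficulty of the theorem is to add only $O(n/\ell)$ vertices per round while still forcing the number of oversized clusters to drop by a constant factor. Your plan has no deterministic tool for this; a greedy hitting set only controls the balls.

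The paper gets this step from the early hitting set lemma of Roditty, Thorup and Zwick. In round $i$, let $W_i$ be the vertices with oversized clusters. Each row of the $\ell$-nearest matrix is restricted to the entries of $B_{A^*}(v)\cap W_i$, and the lemma is applied with $s=|W_i|$, $p=8n/(|W_i|\ell)$, $P=0$. It returns $A$ with $|A|\le 24n/\ell$ and hitting sum at most $3|W_i|\ell/8$. This hitting sum bounds $\sum_{w\in W_i}|C_{A^*\cup A}(w)|$, so $|W_{i+1}|<|W_i|/2$. Hence $\log n$ rounds of $O(n\ell)$ work give $|S|=O(\frac{n}{\ell}\log n)$. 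The reverse direction is handled separately and the two sets are unioned.

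There is also a secondary problem in your first step: the truncated-Dijkstra averaging does not hold. Its cost is $\sum_v\sum_{u\in N^{out}_\ell(v)}\deg(u)$. In a star with constant $\ell\ge 3$, every leaf's search settles the center and scans all $n-1$ of its edges, for $\Theta(n^2)$ total. You need the $q$-SSSP technique of Roditty--Thorup--Zwick (the paper's lemma on computing $U_q(v)$) to get $O((m+n\log n)\ell)$.

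Your charging argument for computing clusters in the third step is fine. Once the nearest-neighbor matrix is available, though, the clusters can simply be read off as inverses of the balls in $O(n\ell)$ time.
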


\subsection{New Diameter Approximation Algorithms from Known Distance Oracles}
Using this new deterministic tool, together with ideas from two known distance oracle constructions, we obtain new deterministic diameter approximation algorithms.

The first result we address concerns removing the additive error from the original $3/2$-diameter approximation of Aingworth et al. \cite{aingw}. Chechik et al. \cite{chechikdiam} gave a $O(mn^{2/3}\log^{5/3}(n))$ time deterministic $3/2$-approximation to the diameter of any directed or undirected graph, getting rid of the additive error at the cost of a slower runtime.

Using the tools from a $(2,1)$ distance oracle data structure \cite{beyondTZ14}, combined with our deterministic cluster and ball computation tool, we obtain an arguably simpler algorithm that achieves roughly the same result as \cite{chechikdiam}, and even removes a logarithmic factor in the running time. 

\begin{theorem}[\autoref{thm:diam32approx} in the body]
    Given a directed graph with nonnegative edge weights and (unknown) diameter $D$, one can compute in deterministic $O(mn^{2/3}\log^{2/3}n)$ time an estimate $\tilde{D}$ satisfying $\frac{2D}{3}\leq \tilde{D}\leq D$.
\end{theorem}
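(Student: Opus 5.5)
The plan is to follow the Chechik et al.\ template for removing additive error, with the randomized or costly ingredients replaced by the deterministic small-balls-and-clusters set of \cref{thm:ballsclusters}. Set $\ell = (n\log n)^{1/3}$ (to be tuned at the end). Apply \cref{thm:ballsclusters} to obtain $S$ with $|S|=O(\frac{n}{\ell}\log n)$ and $|B_S(v)|,|C_S(v)|,|B^R_S(v)|,|C^R_S(v)|=O(\ell)$ for all $v$, in time $O((m+n\log n)\ell)$. Then run Dijkstra in and out of every $s\in S$, costing $\tilde O(m\cdot n\log n/\ell)$. Every value we output will be a true distance, or a lower bound on one, so $\tilde D\le D$ is automatic. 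The work is in showing that some computed value is at least $2D/3$.

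Fix a diametral pair $(a,b)$ with $d(a,b)=D$, and argue by cases.

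\emph{Case 1: $S$ is close to $a$.} Suppose $d(a,S)\le D/3$, witnessed by $s$. Then $d(s,b)\ge d(a,b)-d(a,s)\ge 2D/3$, and the out-Dijkstra from $s$ certifies this. The symmetric statement holds for $b$ with in-distances. So from now on assume $d(a,S)>D/3$ and $d^R(b,S)>D/3$, which means $B_S(a)$ contains every vertex within $D/3$ of $a$, and $B^R_S(b)$ contains every vertex within $D/3$ of $b$.

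\emph{Case 2: both balls are far.} For each $v$ compute $B_S(v)$ and $B^R_S(v)$ explicitly by truncated Dijkstra. This takes time $O(\ell^2)$ per vertex using the standard trick of scanning only edges inside the ball, or alternatively $O(m\ell)$ overall via cluster computations as in the proof of \cref{thm:ballsclusters}. Next, pick the vertex $w$ maximizing $d(w,S)$ (respectively $d^R(w,S)$), and let $q$ be the farthest vertex of $B_S(w)$, so that $B_S(w)$ has radius roughly $d(w,S)\ge d(a,S)>D/3$. Run out- and in-Dijkstra from every vertex of $B_S(w)$ and $B^R_S(w)$, which costs $O(\ell m)$. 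The standard Aingworth/Chechik argument then applies. Suppose every $x\in B_S(w)$ has $\mathrm{ecc}(x)<2D/3$. Then the vertex $t$ of $B_S(w)$ at distance just at least $D/3$ from $w$ (it exists since the radius exceeds $D/3$) has eccentricity below $2D/3$. The idea is then to derive $d(a,b)<D$, using that $a$'s ball also has radius greater than $D/3$ and contains a path prefix toward $b$. Here we have to exploit the maximality of $w$'s ball, together with a Dijkstra from the vertices $a'$ achieving a large in-ball. Handling the directed asymmetry requires taking both the out-ball maximizer and the in-ball maximizer and combining them. This is where weighted edges cause the usual $M$ additive loss, and we avoid that loss because the balls are defined with strict inequality and we search over all vertices of the ball rather than over a fixed threshold vertex.

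\emph{Main obstacle.} I expect the hardest step to be the Case 2 argument in directed graphs without additive error, i.e.\ choosing exactly which vertices to Dijkstra from so that one of them is guaranteed eccentricity at least $2D/3$. If a single ball does not suffice, I would first try to adapt Chechik et al.'s second level: a further small set $S'$ of size about $n/\ell^2$ hit by larger balls, so that three cases ($d(a,S')$ small; $a$'s small ball deep; otherwise) close the argument. Finally, balancing $mn\log n/\ell$ against $m\ell$ gives $\ell=(n\log n)^{1/2}$, which is too slow. The two-level scheme instead balances $m\ell$ against $m n\log n/\ell^2$ plus clusters, giving $\ell=(n\log n)^{1/3}$ and total time $O(mn^{2/3}\log^{2/3}n)$, as claimed.
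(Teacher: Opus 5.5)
Your Case 1 matches the paper. Case 2, however, is the only nontrivial part, and it is not proved; the mechanism you offer for it is wrong. Running Dijkstra out of and into every vertex of $B_S(w)$ (and $B^R_S(w)$), for $w$ maximizing $d(w,S)$, is precisely the Roditty--Vassilevska W.\ algorithm. Searching from \emph{all} ball vertices with strict-inequality balls does not remove its additive error. The shortest $w\rightsquigarrow b$ path may contain a heavy edge $(x,y)$ with $x\in B_S(w)$, $y\notin B_S(w)$ and $d(x,b)>D/3$. Then no vertex of $B_S(w)$ on that path is within $D/3$ of $b$, and none of your searches is forced to see a distance $\ge 2D/3$. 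For the same reason, the vertex of $B_S(w)$ ``at distance just at least $D/3$'' from $w$, on which your sketch relies, need not exist in a weighted graph. The fallback via a second set $S'$ is not developed. Your runtime reasoning also signals the gap: balancing $m\ell$ against $\frac{mn}{\ell}\log n$ gives $\tilde O(m\sqrt n)$, which is \emph{faster} than the claimed bound, not too slow. The $n^{2/3}$ in the theorem comes from a step costing $m\ell^2$ that your algorithm does not have. Likewise, you never use the cluster bounds of \autoref{thm:ballsclusters}, and that step is exactly what needs them.

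The missing ingredient is an explicit treatment of the straddling edge, adapted from the $(2,1)$ distance oracle.
\begin{itemize}
\item For all pairs, compute $\hat d(u,v)$, the length of the shortest path of the form $u\rightsquigarrow x\to y\rightsquigarrow v$ with $x\in B_S(u)$, $(x,y)\in E$ and $y\in B^R_S(v)$. Do this by enumerating $u$, $x\in B_S(u)$, $y\in N(x)$ and $v\in C^R_S(y)$.
\item By \autoref{clm:totalsizeofplusballs} and $|C^R_S(y)|=O(\ell)$, this takes $O(m\ell^2)$ time. Balancing this against $O(\frac{mn}{\ell}\log n)$ is what gives $\ell=(n\log n)^{1/3}$.
\item Always $\hat d\ge d$. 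Moreover, if $d(u,S),d(S,v)\ge D/3$ and $d(u,v)<2D/3$, then the edge of a shortest path that crosses distance $D/3$ from $u$ goes from $B_S(u)$ to $B^R_S(v)$. Hence $\hat d(u,v)=d(u,v)$ for such pairs.
\item Let $\hat\epsilon(w)=\max\{\hat d(w,v): d(S,v)\ge d(w,S)\}$. Choose $w$ maximizing $\min(3d(w,S),\hat\epsilon(w))$, and symmetrically $w^R$ with reversed roles. Run Dijkstra from $w$ and into $w^R$.
\item Assume w.l.o.g.\ that $d(a,S)\ge d(S,b)>D/3$. The pair $(a,b)$ certifies $\min(3d(a,S),\hat\epsilon(a))\ge D$. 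So the chosen $w$ has $d(w,S)\ge D/3$ and some $v$ with $d(S,v)\ge d(w,S)$ and $\hat d(w,v)\ge D$. By the previous observation, $d(w,v)\ge 2D/3$, and the search from $w$ finds it.
\end{itemize}
Without this step, or an equivalent one, your proposal only recovers the almost-$3/2$ approximation with additive error.
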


We then combine the basis of the  $(2+\varepsilon, 5)$ distance oracle of \cite{akavroditty2do2020} with our new deterministic tools to obtain a brand new diameter approximation algorithm, the first such algorithm whose multiplicative approximation is not of the form $2-\frac{1}{2^k}$.

\begin{theorem}[\autoref{thm:diam53approx} in the body]
    Given an unweighted, undirected graph with (unknown) diameter $D$, one can compute in deterministic $O(nm^{3/5}\log^{8/5} n)$ time an estimate $\tilde{D}$ satisfying $\frac{3D}{5} - \alpha\leq \tilde{D}\leq D$ for $\alpha =\max(\frac{6}{5}, \frac{5}{3}-\frac{D}{15}))$.
\end{theorem}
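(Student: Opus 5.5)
The plan is to follow the template of the Aingworth et al./CGR algorithms and run BFS from a few well-chosen vertex sets. The sets will be chosen using the two-level ``bunch'' structure underlying the $(2+\eps,5)$ oracle of \cite{akavroditty2do2020}, with all random sampling replaced by \cref{thm:ballsclusters}.

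\textbf{Preprocessing.} Fix a diametral pair $a,b$ with $d(a,b)=D$. Apply \cref{thm:ballsclusters} twice, with parameters $\ell_1<\ell_2$, to obtain hitting sets $S_1$ and $S_2$. We have $|S_i|=O(\frac{n}{\ell_i}\log n)$, and all $S_i$-balls and $S_i$-clusters have size $O(\ell_i)$. For every vertex $v$ we compute $p_i(v)$, its nearest vertex in $S_i$, together with $r_i(v)=d(v,S_i)$. A truncated BFS gives us $B_{S_i}(v)$; since the graph is unweighted, it explores only edges incident to the ball. The cost of this step is dominated by the number of edges touched inside balls, and this is where $m$ rather than $n$ enters the bound.

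\textbf{Algorithm.} The estimate $\tilde D$ is the maximum eccentricity found by BFS from each of the following sources:
\begin{itemize}
\item every vertex of $S_2$, at cost $O(m\frac{n}{\ell_2}\log n)$;
\item the vertex $w$ maximizing $r_1(w)$, and a farthest vertex $w'$ from $w$;
\item every vertex of the $S_1$-ball of $w'$ (and, if needed, of the balls of the $O(\ell_1)$ vertices in $C_{S_1}(w')$), together with their $S_2$-pivots.
\end{itemize}
Every value returned is a true eccentricity, so $\tilde D\le D$ holds automatically.

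\textbf{Case analysis for the lower bound.} It splits on $r_2(a)$ and $r_1(w)$ measured against $D$, roughly at thresholds $D/5$ and $2D/5$.
\begin{itemize}
\item If $r_2(a)\le 2D/5$, then BFS from $p_2(a)\in S_2$ gives eccentricity at least $D-r_2(a)\ge 3D/5$.
\item If $r_1(w)$ is small, then every vertex lies close to $S_1$. In this case $b$, or the vertex $w'$ found by the sweep, lies within about $D/5$ of a pivot. The ``pivot of a pivot'' argument of Akav--Roditty, following the pivots $p_1$ and then $p_2$, shows that some BFS source sits within $2D/5$ of $a$ or of $b$.
\item If $r_1(w)$ is large, then the ball of $w$ is big in radius yet contains only $O(\ell_1)$ vertices. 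We exploit this by BFS from $w$: either $\mathrm{ecc}(w)\ge 3D/5$, or every vertex, $b$ in particular, is within $3D/5$ of $w$. In the latter case the path from $w'$ towards the far side passes through $w'$'s ball, and BFS from the ball vertices and their pivots recovers $3D/5$.
\end{itemize}
Integrality and rounding at these thresholds produce the additive term $\alpha=\max(\frac65,\frac53-\frac{D}{15})$. I will track floors carefully to obtain exactly this $\alpha$.

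\textbf{Parameters and main obstacle.} The total time is roughly $O\bigl(m\frac{n}{\ell_2}\log n + m\ell_1 + (\text{ball-edge work})\bigr)$. I will balance these terms by choosing $\ell_1,\ell_2$ as suitable powers of $m$ and $n$ so that all terms equal $nm^{3/5}\log^{8/5}n$; the exponent $8/5$ comes from the $\log n$ factors in $|S_i|$. The main obstacle is the middle case, where $a$ is far from $S_2$ but not far from $S_1$. There a single pivot hop loses too much, and the triangle-inequality chain must be routed through both levels, $a\to p_1(a)\to$ ball intersection $\to p_2$, without exceeding $2D/5$ of total slack. My first attempt would be to mimic the $(2+\eps,5)$ stretch proof directly, replacing ``query pair $(u,v)$'' by the pair $(a,b)$. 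I would then verify that the vertices the oracle would need to inspect are exactly among our BFS sources.
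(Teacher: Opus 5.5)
\textbf{There is a genuine gap, and it sits in the case you flag as the main obstacle.} After your first bullet, applied to both endpoints, you are left with $r_2(a),r_2(b)>\frac{2D}{5}$. So every vertex of $S_2$ is more than $\frac{2D}{5}$ from both $a$ and $b$, and this includes all the $S_2$-pivots you add as extra sources. The pivot chain $a\to p_1(a)\to p_2(p_1(a))$ therefore only reaches vertices you never search from ($S_1$) or have already ruled out ($S_2$).

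Your remaining sources are $w$, $w'$ and the $O(\ell_1^2)$ ball vertices around $w'$. These come from a 2-sweep and have no relation to $a$ or $b$. When $r_1(w)$ is small, the ball vertices all lie within $2r_1(w)$ of $w'$. Nothing in your argument then guarantees more than the sweep bound $\mathrm{ecc}(w')\ge\mathrm{ecc}(w)\ge D/2$.

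In the third bullet you search from the ball of $w'$, whose radius is uncontrolled, rather than from $B_{S_1}(w)$. Even the corrected classical ball argument needs BFS from all of $S_1$ to handle $a$ close to $S_1$, and you cannot afford that. In effect, your algorithm only computes exact eccentricities from about $n/m^{2/5}$ sources. For $m$ near $n^{5/3}$ that is about $n^{1/3}$ searches, the budget at which CGR-type arguments give only $7/4$.

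The shape of $\alpha$ also tells you something. $\max(\frac65,\frac53-\frac{D}{15})$ is exactly what additive corrections of $-4$ and $-5$ force, via $\frac{4D}{5}-5+2\alpha\ge\frac{3D}{5}-\alpha$ and $\frac{3D}{5}-6+4\alpha\ge\frac{3D}{5}-\alpha$. It does not come from integrality rounding.

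\textbf{The missing idea is that the paper does not output only eccentricities.} In its indexing, $S_1$ is a large set of $O(\frac{n}{\ell}\log n)$ vertices, and it runs BFS from all of them. $S_2$ is a \emph{tiny} set of $O(\ell\log n)$ vertices whose balls have size $O(n/\ell)$. The paper runs BFS from $B_{S_2}(w)$ for the vertex $w$ farthest from $S_2$.

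The surviving case is $d(a,S_1),d(b,S_1)>\frac{2D}{5}+\alpha$, with some $x\in S_2$ within $\frac{D}{5}+1-2\alpha$ of $a$. There the paper builds the Akav--Roditty spanner $H$. It contains all edges at vertices of degree at most $L$, plus shortest-path trees of $B_{S_1}(u)$ for every $u$ in a hitting set $T$ of the high-degree neighborhoods. The paper runs BFS in $H$ from every $x\in S_2$. It then outputs the certified lower bound $\min(d_H(x,v)-4,\,d(x,S_1)+d(v,S_1)-5)\le d(x,v)$, which is at least $\frac{3D}{5}-\alpha$ for the pair $(x,b)$.

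\textbf{Your preprocessing also breaks the time bound.} Keeping $m\frac{n}{\ell_2}\log n$ within $nm^{3/5}\log^{8/5}n$ forces $\ell_2\ge m^{2/5}/\log^{3/5}n$. Building $S_2$ with the balls-and-clusters theorem then costs $\Omega(m\ell_2)$, and so can computing all $S_2$-balls. That is roughly $m^{7/5}$, polynomially more than $nm^{3/5}$ once $m$ is polynomially larger than $n^{5/4}$. This is precisely the regime where the result is new. The paper avoids this in two ways. It guarantees small $S_1$-balls only for the $\tilde{O}(n/L)$ vertices of $T$. It applies the balls-and-clusters theorem only to build the tiny set $S_2$, with parameter $n/\ell$.
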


 This $\tilde{O}(nm^{3/5})$ almost-$5/3$ approximation algorithm is faster than all the known almost-$3/2$-diameter approximation algorithms (which run in $\tilde{O}(\min\{n^2,m\sqrt n\})$ time) for every graph density $m$ in the interval $[n^{5/4},n^{5/3}]$. 
 
 Notice that the approximation guarantee $5/3$ is between the $3/2$ and $7/4$ guarantees of the CGR algorithm \cite{cgr} for $k=2$ and $k=3$ respectively, so that this is a brand new point on the trade-off curve for diameter approximation algorithms.
 
Over the last decade, there has been significant progress in fine-grained conditional lower bounds for diameter. 
Starting with the work of \cite{rv13,towardsdiam,bonnetdiamhardness} and culminating in the results of \cite{raylidiamhardness,diamhardnessdir,diamhardnessundir}, we now have a full lower bound trade-off curve for diameter even in undirected and unweighted graphs: under the Strong Exponential Time Hypothesis, for every integer $k\geq 2$, any $2-1/k-\eps$ approximation algorithm needs $mn^{1/(k-1)+o(1)}$ time. Using $k=2,3$, we get that the $3/2$-approximation algorithms of \cite{rv13,chechikdiam} are optimal under SETH, and using $k$ in the limit, we get that the folklore linear time $2$-approximation cannot be improved.



If we believe that the known conditional lower bounds are best possible, we should suspect that (for $k=3,4$ in the trade-off) there should be a $5/3$-approximation algorithm running in $\tilde{O}(m^{4/3})$ time. Unfortunately, the known $\tilde{O}(m^{4/3})$ time approximation algorithm on the CGR curve \cite{cgr} only achieves a $7/4$ approximation. Our new $5/3$-approximation algorithm can be seen as a step towards the desired next algorithmic point on the lower bound curve after $3/2$.


\subsection{Applications and Further Derandomizations}
To illustrate the usefulness of our new techniques, we obtain derandomizations of several known results related to various problems in the field of shortest path computations.


    

First we consider the $<3$ APSP approximations of Baswana and Kavitha \cite{baswanakavitha10}. They construct a $2$-approximation and a $7/3$-approximation algorithm using a hierarchy of randomly sampled sets and a randomized version of \autoref{thm:ballsclusters}. This randomized hierarchy needs to satisfy two conditions. First, that for every set $S$ of size $\tilde{O}(q)$ the set of edges $E_S(v)$ defined as the edges incident to $v$ of weight $<d(v,s)$ satisfies $|E_S(v)|=O(n/q)$.  This can be achieved deterministically by constructing a greedy hitting set to the $n/q$ highest weight incident edges of every vertex, which can be done in $O(n^2)$ time. Second, the hierarchy of sets $\emptyset = A_k \subseteq A_{k-1}\subseteq \ldots, \subseteq A_0 = V$ needs to satisfy that for every vertex $v$, $|B_i(v)|=|\{x\in A_i:d(v,x) < d(v, A_{i+1}\}|\leq n^{1/k}$, which can be achieved using the deterministic distance oracle construction of Roditty, Thorup and Zwick \cite{RodittyTZ05det}. Taking the sets $A_0, \ldots, A_k$ obtained the construction of \cite{RodittyTZ05det}, and taking a union of each $A_i$ with a hitting set to the $n/|A_i|$ highest weight incident edges of every vertex guarantees that both conditions hold. Together with \autoref{thm:ballsclusters}, this gives us a deterministic version of the following result:

\begin{theorem}[Derandomization of Algorithms 9 and 10 in \cite{baswanakavitha10}]
    A given weighted undirected graph on $n$ vertices can compute all-pairs $2$-approximate distance in \emph{deterministic} time $\tilde{O}(m\sqrt{n} + n^2)$ or all pairs $7/3$-approximate distance in deterministic time $\tilde{O}(m^{2/3}n + n^2)$.
\end{theorem}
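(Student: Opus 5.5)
The plan is to take the Baswana--Kavitha algorithms (Algorithms 9 and 10 of \cite{baswanakavitha10}) as black boxes and replace each randomized ingredient with a deterministic one running within the same time budget. Their correctness argument uses randomness only to produce sets with certain worst-case size guarantees. Once those guarantees hold deterministically, the stretch analysis carries over verbatim, since it is a purely structural argument about balls, clusters and pivots. I would isolate three randomized components: the hierarchy $A_0\supseteq\cdots\supseteq A_k$ with small bunches, the bounded incident-edge sets $E_S(v)$, and the resampled sets with small clusters and balls. I would then verify, for each, that a deterministic substitute meets the stated bound.

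\textbf{Step 1 (hierarchy).} Run the deterministic construction of Roditty, Thorup and Zwick \cite{RodittyTZ05det}, taking $k=2$ for the $2$-approximation and $k=3$ for the $7/3$-approximation. This yields sets $A_i$ with $|A_i|=\tilde O(n^{1-i/k})$ such that $|B_i(v)|\le n^{1/k}$ up to polylog factors. Its running time is $\tilde O(mn^{1/k})$, which is within budget in both cases.

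\textbf{Step 2 (edge condition).} For each $i$, let $q=|A_i|$. For every vertex $v$, take the $n/q$ heaviest edges incident to $v$ and compute a greedy hitting set $H_i$ of these $n$ sets of size $n/q$; this set has size $O(q\log n)$. Sorting adjacency lists and running the greedy algorithm costs $O(n^2)$ total. Replace $A_i$ by $A_i\cup H_i\cup A_{i+1}\cup\cdots$ so that the sets remain nested.

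Adding vertices only decreases $d(v,A_{i+1})$, so the condition $E_{A_i}(v)=O(n/q)$ holds: some endpoint of one of the $n/q$ heaviest edges lies in the set, so every edge lighter than $d(v,A_i)$ is among the rest. The bunch bound needs a short argument, because enlarging $A_i$ could enlarge $B_i(v)=\{x\in A_i: d(v,x)<d(v,A_{i+1})\}$. I would address this by adding each $H_i$ into the hierarchy \emph{before} running the RTZ construction, seeding its sets so that the added vertices are present at every level. Alternatively, I would note that RTZ's bound is proved for any given superset chain whose sizes are within polylog factors. I expect this to be the main obstacle: checking that augmenting the levels preserves the bunch bound up to polylogarithmic factors. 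The fallback is to accept $\tilde O$-losses, which the theorem's runtime tolerates.

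\textbf{Step 3 (clusters and balls).} Wherever Baswana--Kavitha resample to get $|C_S(v)|,|B_S(v)|=O(\ell)$, invoke \autoref{thm:ballsclusters} with the same $\ell$. This gives $|S|=O((n/\ell)\log n)$ in deterministic time $O((m+n\log n)\ell)$, matching their randomized bounds up to a log factor. Finally, I would recount the running time of each algorithm with these substitutions. The result is $\tilde O(m\sqrt n+n^2)$ and $\tilde O(m^{2/3}n+n^2)$ respectively, because every step costs at most a polylogarithmic factor more than in the randomized analysis.
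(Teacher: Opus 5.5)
Your decomposition is the same as the paper's: the Roditty--Thorup--Zwick hierarchy for the bunch condition, greedy hitting sets over incident edges for the $E_S(v)$ condition, and \autoref{thm:ballsclusters} for the small-cluster sets. However, two steps do not go through as written.

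First, the justification in Step~2 fails. $E_S(v)$ is the set of edges at $v$ of weight $<d(v,S)$. If $S$ contains an endpoint $x$ of one of the $n/q$ \emph{heaviest} edges at $v$, you only learn $d(v,S)\le w(v,x)$, and $w(v,x)$ is a large weight. Every edge lighter than that may then lie in $E_S(v)$, which can be all but $n/q$ of $v$'s edges, so ``among the rest'' bounds nothing. You must instead hit the endpoints of the $n/q$ \emph{lightest} edges at each vertex of degree at least $n/q$. Then $d(v,S)\le w(v,x)$ for some $(v,x)$ among those lightest edges. Every edge of $E_S(v)$ is strictly lighter than $(v,x)$, so $|E_S(v)|<n/q$. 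The paper's sketch also says ``highest weight'', but with this definition of $E_S(v)$ it has to be lowest. The cost and the bound $|H_i|=O(q\log n)$ are unchanged.

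Second, you are right that a post-hoc union can enlarge $B_i(v)$; the paper's one-line ``take the union'' glosses over this. But of your three remedies, only seeding works. A superset chain does not inherit the bunch bound, and the loss need not be polylogarithmic. The only a priori bound on the number of $H_i$-vertices closer to $v$ than $A_{i+1}\cup H_{i+1}$ is $|H_i|=\tilde O(|A_i|)$. For $k=3$, $i=1$ that is $\tilde O(n^{2/3})$ against a target of $n^{1/3}$.

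To make seeding precise, proceed as follows.
\begin{itemize}
\item Compute all $H_j$ first; they depend only on $G$ and the target sizes.
\item Set $A_0=V$ and $A_{i+1}=\mathrm{hit}_i\cup H_{i+1}\cup\cdots\cup H_{k-1}$. Here $N_i(v)$ is the set of the $q$ nearest $A_i$-vertices of $v$, and $\mathrm{hit}_i\subseteq A_i$ is a greedy hitting set of the sets $N_i(v)$.
\item Inductively $A_i\supseteq H_i\cup\cdots\cup H_{k-1}$, so the chain is nested.
\item $A_{i+1}$ still meets every $N_i(v)$, so $B_i(v)\subseteq N_i(v)$ and $|B_i(v)|\le q$.
\item The sizes stay controlled: $|A_{i+1}|=O(|A_i|\log n/q)+\tilde O(n^{1-(i+1)/k})$.
\end{itemize}
Each $H_j$ belongs only in levels $0,\dots,j$, not ``at every level'' as you wrote; otherwise the deep levels blow up. Since balls and clusters only shrink as a set grows, the set from \autoref{thm:ballsclusters} can be seeded into its level the same way. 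The construction also allows an arbitrary $q_i$ per level, so it adapts to whatever level sizes Algorithms 9 and 10 actually prescribe.
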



Using a similar idea, we can derandomize the $\tilde{O}(n^2)$-time construction of a $(2k-1)$-approximate distance oracle of Baswana and Kavitha \cite{baswanakavitha10}. In this algorithm, they use a similar hierarchy of sets $A_{k-1}\subseteq \ldots\subseteq A_0=V$ of sizes $A_i = \tilde{O}(n^{1-i/k})$ which satisfy the same properties. To avoid the $\tilde{O}(mn^{1/k})$ runtime, they authors construct a $3$-spanner on $\min(m, n^{3/2})$ edges and use it to construct a graph on $\bar{m}=\min(m, n^{2-1/k})$ in expected linear time. They proceed to compute $A_{k-1}$-balls using only the edges of this graph and compute the sets $B_i(v)$ for every vertex $v$ in quadratic time.

We can derandomize this algorithm by beginning with sets $ S_{k-i}$ which are deterministic hitting sets to the $n^{i/k}$ highest weight edges adjacent to every vertex.  We can then compute the same $3$-spanner in deterministic linear time \cite{RodittyTZ05det} and compute the $n^{1/k}$ closest vertices to each vertex in the corresponding subgraph in time $\tilde{O}(\bar{m}n^{1/k})=\tilde{O}(n^2)$. We take $A_{k-1}$ to be the union of deterministic hitting set to these neighborhoods and $S_{k-1}$. For subsequent $A_i$'s, we compute $B_{i+1}(v)$ for every vertex in the graph $(V, E_{S_{i}})$. We then take $A_i$ to be the union of $S_i$ and a deterministic hitting set of all $B_{i+1}$. This approach gives us the following deterministic result: 

\begin{theorem}[Derandomization of Theorem 6.6 of \cite{baswanakavitha10}]
    An undirected, weighted graph on $n$ vertices and $m$ edges can be preprocessed in \emph{deterministic} time $\tilde{O}(\min(n^2, kmn^{1/k}))$ to compute a $(2k-1)$-distance oracle of size $\tilde{O}(kn^{1+1/k})$ for any integer $k > 2$.
\end{theorem}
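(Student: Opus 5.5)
The natural split is by which term of the minimum is smaller. If $kmn^{1/k}\le n^2$, I would simply invoke the deterministic Thorup--Zwick construction of Roditty, Thorup and Zwick \cite{RodittyTZ05det}. It already gives a $(2k-1)$-oracle of size $O(kn^{1+1/k})$ in $\tilde O(kmn^{1/k})$ time. So all the work is in the dense regime, where I must produce deterministically, in $\tilde O(n^2)$ time, a hierarchy $A_{k-1}\subseteq\cdots\subseteq A_0=V$ with two properties:
\begin{itemize}
\item[(a)] $|A_i|=\tilde O(n^{1-i/k})$;
\item[(b)] every bunch piece $B_i(v)=\{x\in A_i: d(v,x)<d(v,A_{i+1})\}$ has size $\tilde O(n^{1/k})$.
\end{itemize}
I also need the corresponding distances $d(v,x)$ for $x\in B_i(v)$ and $d(v,A_i)$. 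Once these are available, the Thorup--Zwick query algorithm and its $(2k-1)$ stretch analysis apply verbatim, since that analysis only uses the bunch structure and not how the sets were chosen. Note that $\tilde O(kn^{1+1/k})$ summed over all $v$ and $i$ is the size bound, where (b) gives at most $\tilde O(kn^{1/k})$ per vertex.

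\textbf{Construction.} I would follow the outline given above the statement.

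First, for each $i$ let $S_{k-i}$ be a greedy deterministic hitting set of the $n^{i/k}$ heaviest edges incident to each vertex. This has size $\tilde O(n^{1-i/k})$ and is computable in $O(n^2)$ time after sorting adjacency lists. Since $S_i$ hits the $n^{1-i/k}$ heaviest incident edges of every vertex $v$, any edge at $v$ lighter than the lightest edge into $S_i$ lies outside that set, so $|E_{S_i}(v)|\le n^{1-i/k}$, where $E_S(v)$ is the set of edges at $v$ of weight $<d(v,S)$.

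Second, build the $3$-spanner deterministically in linear time \cite{RodittyTZ05det} and form the sparse graph on $\bar m=\min(m,n^{2-1/k})$ edges. In it, compute the $n^{1/k}$ nearest vertices of each vertex by truncated Dijkstra in $\tilde O(\bar m n^{1/k})=\tilde O(n^2)$ time. Set $A_{k-1}$ to be a greedy hitting set of these neighborhoods united with $S_{k-1}$.

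Third, proceed downward in $i$. Given $A_{i+1}$, compute $B_{i+1}(v)$ for all $v$ by running Dijkstra from $v$ in the graph $(V,E_{S_i})$, stopping at distance $d(v,A_{i+1})$. Then set $A_i$ to be $S_i$ united with a greedy hitting set of the sets $B_{i+1}(v)$, which enforces (b) at the next level. Property (a) follows from the standard greedy hitting-set bound: $n$ sets each of size $\ge n^{1/k}$, or the edge sets of size $n^{i/k}$, are hit by $\tilde O(n^{1-1/k})$ or $\tilde O(n^{1-i/k})$ elements respectively.

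\textbf{Correctness of the truncated searches.} This is the standard ball-closure lemma. Suppose $x\in B(v)$ and $u$ lies on a shortest $v$--$x$ path. Then $x\in B(u)$, and the path edge $(u,w)$ has weight at most $d(u,x)<d(u,A)$. Hence the edge lies in $E_A(u)$, so restricting to such light edges preserves all in-ball distances. I need this with respect to $S_i$, which requires $S_i\subseteq A_{i+1}$-compatible nesting; I would add $S_i$ into the sets in a nested way to guarantee $d(u,A_{i+1})\le d(u,S_i)$ where needed.

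\textbf{Main obstacle: the running time.} I expect the hardest step to be showing that each level costs $\tilde O(n^2)$. The search from $v$ only settles vertices of $B_{i+1}(v)$, of which there are $\tilde O(n^{1/k})$ once the previous level's hitting set is in place. Each settled vertex scans at most $|E_{S_i}(u)|\le n^{1-i/k}$ edges. Per source this costs $\tilde O(n^{1/k}\cdot n^{1-i/k})$, and over $n$ sources $\tilde O(n^{2+(1-i)/k})$. This is $\le \tilde O(n^2)$ only for $i\ge1$, and level $0$ is exactly what the spanner step handles.

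Getting the indices to align, namely which $S$ bounds the degrees while which $A$ bounds the ball sizes, is where I would be most careful. If the naive pairing is off by one level, my first fix would be to shift the hitting sets to $S_{i+1}$ so that degree times ball size equals $n^{1-i/k}\cdot n^{i/k}$ or better. Summing over $k$ levels then gives $\tilde O(kn^2)=\tilde O(n^2)$ for constant-factor purposes, absorbing $k\le\log n$.
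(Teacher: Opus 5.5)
\textbf{Gap 1: the hierarchy construction does not give (a) or (b).} Your outline follows the paragraph before the theorem, but the construction as you state it fails in several places.

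\emph{Heaviest versus lightest edges.} If $S$ hits the $q$ heaviest edges at $v$, you only learn that $d(v,S)$ is at most one of those heavy weights. Every lighter edge at $v$ may then lie in $E_S(v)$, so your one-line justification does not give $|E_{S_i}(v)|\le n^{1-i/k}$. You need $S$ to hit the $q$ lightest edges. Then $d(v,S)$ is at most the $q$-th smallest weight at $v$, and $|E_S(v)|<q$.

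\emph{Indexing of $S_i$.} Since $S_i\subseteq A_i$, you need $|S_i|=\tilde O(n^{1-i/k})$, so $S_i$ should hit the $n^{i/k}$ lightest edges. With your indexing $|S_i|=\tilde O(n^{i/k})$, which is incompatible with (a) once $i>k/2$.

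\emph{Size of $A_{k-1}$.} Your $A_{k-1}$ is a hitting set of $n^{1/k}$-element neighbourhoods, so its size can be as large as $\tilde\Theta(n^{1-1/k})$. That is the size of $A_1$, not the $\tilde O(n^{1/k})$ that (a) requires.

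\emph{The downward step cannot enforce (b).} $B_{i+1}(v)$ depends only on $A_{i+1}$ and $A_{i+2}$, which are already fixed. For fixed $A_{i+1}$, enlarging $A_i$ can only enlarge $B_i(v)$. Also, a search stopped at radius $d(v,A_{i+1})$ meets no vertex of $A_{i+1}$ at all.

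\emph{What works.} Bunch sizes are controlled by putting a hitting set of the nearest $A_i$-vertices into $A_{i+1}$, i.e.\ by building bottom-up, as in Roditty--Thorup--Zwick and Section~\ref{sec:detcgr}. The sketch before the theorem has to be read that way, and with ``lightest'' in place of ``highest weight''.
\begin{itemize}
\item Make the $S_i$ nested by replacing $S_i$ with $\bigcup_{j\ge i}S_j$.
\item Let $A_{i+1}=S_{i+1}\cup H_{i+1}$, where $H_{i+1}\subseteq A_i$ greedily hits every set $N_i(v)$ of size $q=\tilde O(n^{1/k})$. Here $N_i(v)$ is the set of the $q$ nearest $A_i$-vertices lying in $B_{S_{i+1}}(v)=\{u: d(v,u)<d(v,S_{i+1})\}$.
\item By your ball-closure argument, these sets and their exact distances come from Lemma~\ref{lm:getneighbors} applied to the subgraph $(V,E_{S_{i+1}})$. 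This takes time $\tilde O(q\,|E_{S_{i+1}}|)=\tilde O(n^{1+(i+2)/k})\le\tilde O(n^2)$ for $i\le k-2$.
\item $B_i(v)\subseteq N_i(v)$ holds because $d(v,A_{i+1})\le d(v,S_{i+1})$.
\end{itemize}
This also replaces your running-time count, which is not valid. A truncated search settles \emph{every} vertex of $V$ within the radius, not only members of $A_{i+1}$, and nothing bounds their number by $\tilde O(n^{1/k})$. There is in fact no obstruction at level $0$.

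\textbf{Gap 2: the top level is never handled.} This is the genuine obstruction. Because $A_k=\emptyset$, we have $B_{k-1}(v)=A_{k-1}$. The unmodified Thorup--Zwick query therefore needs exact $d(x,v)$ for all $x\in A_{k-1}$ and all $v\in V$. There is no truncation radius, so the light-edge lemma gives nothing. Running Dijkstra from every vertex of $A_{k-1}$ costs $\tilde O(mn^{1/k})$, which is exactly the dense-regime cost the theorem must avoid.

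In your plan the spanner-derived sparse graph only serves to pick a set, and a spanner cannot supply exact distances anyway. So your claim that the TZ query and analysis apply ``verbatim'' within $\tilde O(n^2)$ is unsupported at precisely the step that makes Baswana--Kavitha's result nontrivial.

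The paper does not reprove that step either. It treats Baswana--Kavitha's algorithm as given, including their $3$-spanner and the derived graph on $\bar m=\min(m,n^{2-1/k})$ edges. It replaces only the random ingredients: the sets, and the spanner, which is now built deterministically via Roditty--Thorup--Zwick. To complete your proof you must likewise import how Baswana--Kavitha obtain the top-level information from that $\bar m$-edge graph. You then need to check that their argument still goes through with the deterministic sets, rather than invoking the unmodified TZ construction.
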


Because \autoref{thm:ballsclusters} is a direct derandomization of Corollary 3.3 from \cite{thorup2001compact}, we also get a deterministic version of their algorithm for constructing a stretch $3$ routing scheme that works in the same time complexity (note that their scheme cites a \textit{deterministic} construction of a 2-level hash table that supports lookups in worst-case constant time). 

Backurs et al. \cite{towardsdiam}\footnote{We note that Theorem 38 of \cite{towardsdiam} (a randomized $O(n^{2.05})$ algorithm that yields a slightly better almost-$3/2$ diameter approximation and almost-$5/3$ eccentricity approximation for undirected, unweighted graphs) can directly be derandomized using the deterministic distance oracle techniques from \cite{RodittyTZ05det}.} use the randomized small clusters technique from \cite{thorup2001compact} to obtain an almost-$3/2$ approximation of undirected unweighted diameter in \textit{expected} $O(n^2\log{n})$ time. The randomness in their algorithm arises from randomly sampling a hitting set $A$ of size $|A| = O(\sqrt{n}\log{n})$ such that $|C_A(w)| = O(\sqrt{n})$ for all $w$. Then, the algorithm iterates over all pairs of vertices within every cluster, resulting in an $O(n \times |C_A(w)|^2) = O(n^2)$ contribution in time complexity. The algorithm also uses a result from Knudsen \cite{additivespanners} to obtain an additive $2$ spanner $H$ with $O(n^{1.5})$ edges in deterministic $O(n^2)$ time, and computes $|A|$ shortest path threes in $H$ in $O(|A| \cdot n^{1.5}) = O(n^2\log{n})$ time.

To derandomize the result, we first compute the nearest $\ell=O(\sqrt{n})$ neighbors to each node, which can been listed out in $O(n\ell^2)$ time (see the construction of a \textit{k-partial-BFS tree} in \cite{doi:10.1137/S0097539797327908}). Then, we use our \autoref{lemma:smallclusters} to compute the hitting set $A$ with the same guarantees in deterministic $O(n\ell\log{n}) = O(n^{1.5}\log{n})$ time, giving us the following result: 

\begin{theorem}[Derandomization of Algorithm 32 in \cite{towardsdiam}]
    There is a deterministic $O(n^2\log{n})$ time algorithm that, for an unweighted undirected graph $G$ with diameter $D = 3h + z$ where $h$ is a positive integer and $z \in [0, 1, 2]$, outputs a value $\hat{D}$ satisfying

\[
\begin{cases}
2h - 1 & \text{if } z \in [0, 1] \\
2h     & \text{if } z = 2
\end{cases}
\quad \leq \hat{D} \leq D
\]
\end{theorem}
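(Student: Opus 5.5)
The plan is to follow Algorithm 32 of \cite{towardsdiam} unchanged, and to replace only the single randomized step: sampling a hitting set $A$ with $|A| = O(\sqrt{n}\log n)$ and $|C_A(w)| = O(\sqrt{n})$ for every $w$. The approximation guarantee of \cite{towardsdiam} depends only on these deterministic properties of $A$, not on how $A$ was obtained. So once such an $A$ is built deterministically within budget, the stated bounds $2h-1$ (for $z\in\{0,1\}$) and $2h$ (for $z=2$) carry over verbatim.

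\textbf{Step 1: nearest neighbors.} Set $\ell=\Theta(\sqrt n)$. For each vertex $v$ compute $N_\ell(v)$, the $\ell$ nearest vertices to $v$ with ties broken consistently, by building a $k$-partial-BFS tree as in \cite{doi:10.1137/S0097539797327908}. The search from $v$ stops after $\ell$ vertices have been settled, and it only needs to scan, for each settled vertex, its first $\ell$ adjacency entries. This is because some neighbor among those first $\ell$ is either new or already settled, and a standard argument shows the scan is sufficient. The cost is $O(\ell^2)$ per vertex, so $O(n\ell^2)=O(n^2)$ in total.

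\textbf{Step 2: the hitting set $A$.} Apply \autoref{lemma:smallclusters}, the deterministic small-cluster construction underlying \autoref{thm:ballsclusters}, to these neighborhoods. It iteratively greedily hits the current balls, which are contained in the $N_\ell(v)$, and then handles vertices whose clusters are too large. The output is $A$ with $|A|=O((n/\ell)\log n)=O(\sqrt n\log n)$ and $|C_A(w)|,|B_A(w)|=O(\ell)$ for all $w$, in $O(n\ell\log n)$ time. In an unweighted graph, $B_A(v)$ is the set of vertices strictly closer to $v$ than $A$. Hence $B_A(v)\subseteq N_\ell(v)$ once $A$ hits every $N_\ell(v)$, so the lemma can operate entirely on the precomputed lists and never needs to rerun searches on the full graph.

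\textbf{Step 3: the rest of the algorithm.} Run the deterministic components exactly as described above. Enumerate all pairs inside each cluster in $O(\sum_w|C_A(w)|^2)=O(n\cdot n)=O(n^2)$ time. Build Knudsen's additive-2 spanner $H$ deterministically in $O(n^2)$ time \cite{additivespanners}. Run BFS in $H$ from every vertex of $A$ in $O(|A|n^{1.5})=O(n^2\log n)$ time. The remaining bookkeeping of Algorithm 32 is then applied as in \cite{towardsdiam}, giving the same estimate $\hat D$.

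I expect the main obstacle to be Step 2. One must check that the deterministic cluster-bounding procedure can be carried out using only the truncated $N_\ell(v)$ lists, in $O(n\ell\log n)$ time rather than the $O(m\ell)$ of \autoref{thm:ballsclusters}. The key point to verify is that every cluster- or ball-size test the procedure performs only concerns balls of radius below $d(v,A)$, and these remain inside $N_\ell(v)$ throughout. If this fails, the fallback is the $O(m\ell)=O(n^{2.5})$ route, which is too slow. So I would first prove that this containment invariant is maintained as $A$ grows.
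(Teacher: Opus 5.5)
Your proposal matches the paper's argument. You compute the $\ell=\Theta(\sqrt n)$ nearest neighbors of every vertex by partial BFS in $O(n\ell^2)=O(n^2)$ time, feed this $n\times\ell$ matrix to \autoref{lemma:smallclusters} to obtain $A$ in $O(n\ell\log n)$ time, and leave the remaining, already deterministic, parts of Algorithm 32 unchanged. The step you flag as the main obstacle is exactly what \autoref{lemma:smallclusters} already provides: it takes only the nearest-neighbor matrix as input, its initial greedy hitting set hits every row, and $A^*$ only grows, so all balls stay inside the rows. The $O(m\ell)$ cost in \autoref{thm:ballsclusters} comes solely from computing the neighbor lists via \autoref{lm:getneighbors}, which your Step 1 already replaces.
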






\paragraph{Related work.}
Besides the typical definition of diameter, there are several other notions of diameter in directed graphs, using different symmetric notions of distance (\cite{fixedparamsubq}). Several works \cite{fixedparamsubq,chechiklinearmindiam2022,bergermindiam2023,mindistance2019,mindistanceDAG2021} provide approximation algorithms for the so-called min-diameter which is defined as $\max_{u,v}\min\{d(u,v),d(v,u)\}$. The roundtrip diameter is $\max_{u,v} d(u,v)+d(v,u)$, and approximation algorithms and conditional lower bounds for it were studied by \cite{directeddiam,fixedparamsubq}. Other notions such as $ST$ diameter and bichromatic diameter were studied by \cite{towardsdiam,bichromatic}. Most of the known algorithms are randomized. We suspect that at least some of them can be derandomized using our techniques.

There is a lot of work on APSP approximation. Besides the approximation algorithms mentioned so far, there is work on additive approximations for APSP \cite{doi:10.1137/S0097539797327908,deng2022new,durr2023improved,saha2024faster} and multiplicative $2$-approximations \cite{aingw,doi:10.1137/S0097539797327908,CohenZ01,baswanakavitha10,Kavitha12,deng2022new,durr2023improved,Roditty23,dory2024fast,saha2024faster}. We can't hope to capture all of the vast literature on shortest paths approximation. We invite the reader to look at the references in the above papers for more.


%% file: 2_prelim.tex
\section{Preliminaries}
Let $G=(V,E)$ be a directed or undirected graph on $n$ vertices and $m$ edges. When $G$ is a weighted  graph denote by $w(u,v)$ the weight of the edge $(u,v)$. Given a pair of vertices $u,v\in V$ denote by $d_G(u,v)$ the length of the shortest path from $u$ to $v$ in $G$. When $G$ is clear from context we drop the subscript.

Denote by $\epsilon(u)=\max_{v\in V}d(u,v)$ the eccentricity of a vertex $u$ and let $D=\max_{u\in V}\epsilon(u)$ be the diameter of the graph and $R=\min_{u\in V}\epsilon(u)$ be the radius of the graph. 

We say that a value $\tilde{D}$ is an $(\alpha, \beta)$-approximation to a parameter $D$ if $\alpha D + \beta \leq \tilde{D} \leq D$. When $\beta =0$ we have a multiplicative approximation and call it an $\alpha$-approximation. When $\beta >0$ is a constant we refer to $\tilde{D}$ as an `almost' $\alpha$-approximation.

Given a vertex $v$, denote by $N(v)=\{u\in V : (v,u)\in E\}$ its neighborhood. Given an additional set $S$ define the $S$-ball of a vertex $v$ by $B_S(v)=\{u\in V: d(v,u) < d(v,S)\}$. Define the $S$-cluster as $C_S(x)=\{v\in V:x\in B_S(v)\}$. In some case we are interested in the neighborhood of the $S$-ball of a vertex, we denote this set by $B_S^+(x) = \bigcup_{w\in B_S(x)}N(w)$. 

If $G$ is a directed graph, we further define the incoming $S$-balls and clusters $B_S^R(v), C_S^R(v)$ to be the $S$-balls and clusters in the graph with edge direction reversed.




Given a subset  $U\subseteq V$, and integer $q\geq 1$, for $v\in V$ the set $U_q(v)$ is defined to be the $q$ closest vertices of $U$ to $v$, i.e. $|U_q(v)|=q$ and whenever $x\in U_q(v), y\in U\setminus U_q(v)$, then $d(x,v)\leq d(y,v)$. When creating $U_q(v)$, ties are broken according to some permutation of the vertices, e.g. the lexicographic order of their names.

We use the following result concerning the fast, deterministic computation of the sets $U_q$.

\begin{lemma}[Theorem 2 of \cite{RodittyTZ05det}]\label{lm:getneighbors}
For any directed $n$-node, $m$-edge $G=(V,E)$ with positive edge weights, any $U\subseteq V$, $1\leq q\leq |U|$, one can compute $U_q(v)$ for all $v\in V$ in $O((m+n\log n)q)$ time deterministically 
by performing $q$ SSSP\footnote{Single source shortest paths, either BFS in $O(m)$ time in undirected graphs or Dijkstra's algorithm in $O(m+n\log n)$ time in directed graphs. We make this distinction since in some cases we wish to be careful about our log factors.} computations on graphs with $O(n)$ nodes and $O(m)$ edges.
\end{lemma}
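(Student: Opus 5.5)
The plan is to build the sets $U_q(v)$ incrementally: in round $j=1,\dots,q$, every vertex $v$ learns its $j$-th closest vertex $u_j(v)$ of $U$, so that $U_j(v)=U_{j-1}(v)\cup\{u_j(v)\}$. Each round is a single Dijkstra-like computation on the reversed graph, which gives the claimed $O((m+n\log n)q)$ bound. Throughout I compare distances lexicographically as pairs $(d(v,x),\pi(x))$, where $\pi$ is the fixed tie-breaking permutation. With this convention, ``closest'' is a strict total order for each $v$.

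\textbf{Structural property.} The key fact is a prefix-closure statement along shortest paths. Suppose $u\in U_j(v)$ and $w$ lies on a shortest $v\to u$ path. Then $u\in U_j(w)$. To see this, take any $x\in U$ that precedes $u$ in $w$'s order. Then
\[
d(v,x)\le d(v,w)+d(w,x)\le d(v,w)+d(w,u)=d(v,u),
\]
and the tie-breaking comparison carries over, so $x$ also precedes $u$ in $v$'s order. Hence $u$ has at most $j-1$ predecessors for $w$. Since edge weights are positive, if $w$ is the successor of $v$ on the path (with $v\neq u$), then $d(w,u)<d(v,u)$.

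\textbf{Round $j$.} Assume every vertex stores $U_{j-1}(v)$, both as an ordered list and in a membership structure such as a marked array or per-vertex hash. Run Dijkstra ``backwards'' with key $\mathrm{key}(v)$, meant to equal $d(v,u_j(v))$.
\begin{itemize}
\item Initialization: if $v\in U\setminus U_{j-1}(v)$, set $\mathrm{key}(v)=0$ with candidate $v$.
\item Extraction: when $w$ is extracted, fix $u_j(w)$ to be its current candidate. At this point $U_j(w)$ is final.
\item Relaxation: for each edge $(v,w)$ with $v$ not yet extracted, find the first element $u$ of the ordered list $U_j(w)$ with $u\notin U_{j-1}(v)$, and relax $\mathrm{key}(v)$ with $w(v,w)+d(w,u)$.
\end{itemize}

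Correctness is by induction on extraction order. Let $w$ be the next vertex on a shortest path from $v$ to $u_j(v)$. By the structural property, $u_j(v)\in U_j(w)$ and $w$ is extracted before $v$. So $v$ receives a candidate of value at most $d(v,u_j(v))$. Every candidate is a genuine vertex of $U\setminus U_{j-1}(v)$ with an upper-bound distance, so the minimum candidate is exactly $u_j(v)$.

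\textbf{Main obstacle: the running time of relaxation.} Scanning $U_j(w)$ from the beginning on every relaxation would cost $j$ per edge per round, which is $O(mq^2)$ in total. To avoid this, I keep a persistent pointer for each edge $(v,w)$ into the list of $w$, carried across rounds. The pointer only moves forward, because an element skipped for lying in $U_{j-1}(v)$ stays in $U_{j'}(v)$ for all $j'\ge j$. Lists only grow at the end, so the pointer can resume where it stopped. Each edge therefore advances its pointer at most $q$ times overall. Together with one heap operation per vertex per round, the total is $O((m+n\log n)q)$.

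The remaining care is to confirm that candidates skipped because they lie in $U_{j-1}(v)$ never need to be revisited. I would also phrase each round as one SSSP computation on an auxiliary graph with $O(n)$ nodes and $O(m)$ edges, so that it matches the formulation of \cref{lm:getneighbors}.
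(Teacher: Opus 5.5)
The paper does not prove this lemma; it imports it as Theorem~2 of \cite{RodittyTZ05det}, so your argument has to stand on its own. Its combinatorial core does. The prefix-closure property is right, including the lexicographic tie-breaking. The Dijkstra-type correctness argument works: every candidate is a genuine element of $U\setminus U_{j-1}(v)$, and the next hop $w$ has strictly smaller true key $(d(w,u_j(w)),\pi(u_j(w)))$. The per-edge pointer amortization is also sound. Lists grow only at the end, skipped elements stay skipped, and by pigeonhole $U_j(w)\not\subseteq U_{j-1}(v)$, so the pointer never runs off the list.

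The gap is in the running time, at the step you treat as an implementation detail. The $O(mq)$ tests ``$u\in U_{j-1}(v)$'' must each take $O(1)$ time \emph{deterministically}, and neither structure you name delivers this.
\begin{itemize}
\item A marked array per vertex costs $\Theta(n|U|)$, up to $\Theta(n^2)$, merely to allocate and clear. That exceeds $O((m+n\log n)q)$ for sparse graphs and small $q$, which are precisely the regimes ($q\approx n^{1/k}$, $\ell=(n\log n)^{1/3}$) where the paper uses the lemma.
\item A hash table gives only expected $O(1)$ time. That reintroduces the randomness the lemma exists to remove.
\item Sorted arrays or search trees cost $\log q$ per test. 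That is an extra logarithmic factor the paper is explicitly careful not to pay.
\end{itemize}

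The missing ingredient is a deterministic constant-time dictionary for the pairs $(v,u)$. Use one direct-address table indexed by $V\times U$ with the classical lazy-initialization trick: each cell holds a back-pointer into a stack of touched cells, and counts as set only if that stack entry points back to it. This supports insertion and lookup in worst-case $O(1)$ time with no initialization cost, and only $O(nq)$ cells are ever touched. With it, your total $O((m+n\log n)q)$ goes through.

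Your other loose end, phrasing a round as a genuine SSSP computation, closes with one observation. If $U_{j-1}(w)\subseteq U_{j-1}(v)$, then the two sets are equal (both have $j-1$ elements), so $u_j(w)\notin U_{j-1}(v)$ automatically. Hence each edge falls into one of two cases. Either its relaxation value is fixed before the round, via the first element of $U_{j-1}(w)\setminus U_{j-1}(v)$, which you fold into an edge from a super-source into $v$. Or it acts as an ordinary edge of weight $w(v,w)$ that propagates $u_j(w)$. This gives an SSSP instance on $n+1$ nodes and at most $m+n$ edges, with keys compared lexicographically.
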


Another fundamental tool we use is the construction of a deterministic hitting set:

\begin{lemma}[Greedy hitting set]
\label{lm:determhit}
Given $S_1,\ldots, S_N\subseteq [n]$ such that $N=\textrm{poly}(n)$ and for all $i\in [N]$, $|S_i|\geq L$, one can compute deterministically in $O(NL)$ time a set $H$ of size $O(n/L \log n)$ such that for all $i$, $S_i\cap H\neq \emptyset$.
\end{lemma}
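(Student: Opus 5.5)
The plan is to run the classical greedy set-cover algorithm on truncated sets and to analyze it with the standard averaging argument. Two things need care: showing the greedy choice shrinks the number of unhit sets geometrically, and implementing the greedy choice with total work proportional to the input size $NL$.

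\textbf{Preprocessing.} First, replace each $S_i$ by an arbitrary subset $S_i'\subseteq S_i$ with exactly $L$ elements. This costs $O(L)$ per set, since we just read off the first $L$ elements. Any set hitting every $S_i'$ also hits every $S_i$, so from now on all sets have size exactly $L$ and the total input size is $NL$. Build incidence lists recording, for each element $x\in[n]$, the indices $i$ with $x\in S_i'$. Maintain a counter $c(x)$ equal to the number of currently unhit sets containing $x$.

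\textbf{Greedy rule and size bound.} The algorithm repeatedly picks an element $x$ maximizing $c(x)$, adds it to $H$, and marks all unhit sets containing $x$ as hit. It stops when no unhit set remains. For the size bound, suppose $R$ sets are still unhit. They contribute $RL$ incidences spread over $n$ elements, so the maximizer has $c(x)\ge RL/n$. After the step, at most $R(1-L/n)$ sets remain unhit. After $t$ steps at most $N(1-L/n)^t\le Ne^{-tL/n}$ sets remain, which is below $1$ once $t> (n/L)\ln N$. Since $N=\mathrm{poly}(n)$, $\ln N=O(\log n)$, giving $|H|=O(\frac nL\log n)$.

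\textbf{Running time.} This is the step needing the most care, because a naive implementation recomputes a maximum over $n$ elements in every round. Instead, keep the elements in buckets indexed by their current value $c(x)\in\{0,\dots,N\}$, with a pointer to the highest nonempty bucket.
\begin{itemize}
\item When a set $S_i'$ becomes hit, scan its $L$ elements, decrement each counter, and move each element one bucket down in $O(1)$.
\item Each set becomes hit exactly once, so the total number of decrements is $NL$.
\item Counters never increase, so the maximum is nonincreasing. The max pointer therefore only moves downward, traversing at most $N+1$ buckets in total.
\item Processing the chosen element's incidence list to find the newly hit sets is charged to those sets, $O(1)$ each.
\end{itemize}
Initialization costs $O(NL+n)$. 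We may assume every element of $[n]$ lies in some set (discard the rest), so $n\le NL$. Altogether the running time is $O(NL)$.
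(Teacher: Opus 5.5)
Your argument is correct and is the standard greedy construction that the lemma's name refers to; the paper gives no proof of its own and simply invokes the lemma as a known tool. One accounting step needs tightening: the chosen element's incidence list can also contain sets that were already hit, so charging the scan only to newly hit sets misses those entries. Instead, note that each element is chosen at most once (its counter drops to $0$ and the maximum stays positive while unhit sets remain), so all such scans together cost at most $\sum_x |\mathrm{inc}(x)| = NL$.
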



As mentioned, we derandomize a stronger variant of the balls and clusters construction, one that guarantees that the clusters of all vertices are of bounded size. The proof of this theorem can be found in \autoref{sec:detclusters}.

\ThmBallsCluster*


In such a setting, when all vertices have small clusters, we note that in $O(C\cdot \ell)$ time we can compute not only $B_S(v)$ for all vertices but also $B_S^+(v)$, where $C$ represents the time it takes to compute a single graph SSSP search. We can do this by scanning the edges out of every $B_S(v)$ and using the following claim.
\begin{claim}\label{clm:totalsizeofplusballs}
    $\sum_{v\in V}|B_S^+(v)|=O(m\ell)$.
\end{claim}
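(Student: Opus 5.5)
We bound $\sum_{v\in V}|B_S^+(v)|$ by charging each element of $B_S^+(v)$ to an edge leaving a vertex of $B_S(v)$, and then exchanging the order of summation so that the bound is controlled by cluster sizes rather than ball sizes.

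First, by definition $B_S^+(v)=\bigcup_{w\in B_S(v)}N(w)$, so
\[
|B_S^+(v)|\le \sum_{w\in B_S(v)}|N(w)| = \sum_{w\in B_S(v)}\deg(w),
\]
where $\deg(w)$ is the out-degree in the directed case. Summing over $v$ gives
\[
\sum_{v\in V}|B_S^+(v)| \le \sum_{v\in V}\sum_{w\in B_S(v)}\deg(w).
\]

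Next, we swap the order of summation. A pair $(v,w)$ with $w\in B_S(v)$ is exactly a pair with $v\in C_S(w)$, by the definition of the $S$-cluster. Hence
\[
\sum_{v\in V}\sum_{w\in B_S(v)}\deg(w)=\sum_{w\in V}\deg(w)\,|C_S(w)|.
\]

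Finally, we use the guarantee of \autoref{thm:ballsclusters} that $|C_S(w)|=O(\ell)$ for every $w$. This bounds the sum by $O(\ell)\sum_{w}\deg(w)=O(\ell)\cdot O(m)=O(m\ell)$, since the degrees sum to $2m$ in the undirected case and to $m$ in the directed case.

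The only step that needs care is the exchange of summation. Bounding ball sizes alone would give only $\sum_v\sum_{w\in B_S(v)}\deg(w)\le \ell\cdot\max\deg$ per vertex, which is not enough. The cluster bound from \autoref{thm:ballsclusters} is what makes the total $O(m\ell)$. The same argument also confirms the running-time remark: scanning all out-edges of every $B_S(v)$ costs exactly $\sum_v\sum_{w\in B_S(v)}\deg(w)$, which is $O(m\ell)$.
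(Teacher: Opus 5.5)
Your proposal is correct and matches the paper's proof: you bound $|B_S^+(v)|$ by $\sum_{x\in B_S(v)}\deg(x)$, exchange the order of summation to obtain $\sum_{x\in V}\deg(x)\,|C_S(x)|$, and apply the $O(\ell)$ cluster-size guarantee to get $O(m\ell)$. Your remark that the cluster bound, and not merely the ball bound, is what drives the estimate is exactly the point of the argument.
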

\begin{proof}
\[
\sum_{v\in V}|B_S^+(v)|\leq\sum_{v\in V}\sum_{x\in B_S(v)}deg(x) = \sum_{x\in V}\sum_{v\in C_S(x)}deg(x)\leq \ell \sum_{x\in V}deg(x) = O(m\ell).
\]
\end{proof}

%% file: 3_deterministic_cgr.tex
\section{Deterministic $(2-1/2^{k-1})$-Diameter Approximation} \label{sec:detcgr}
In this section we derandomize the longstanding best known diameter approximation tradeoff of Cairo, Grossi and Rizzi \cite{cgr}, achieving a deterministic algorithm that matches the multiplicative and additive approximation guarantees of \cite{cgr} without incurring any additional cost on the runtime (as our algorithm matches the runtime of \cite{cgr} down to the logarithmic dependency). To do so, we avoid randomized sampling techniques and reduce the problem of approximating the graph diameter to a series of queries to the classic approximate distance oracle of Thorup and Zwick \cite{thorupzwickdos}. Using the deterministic construction of this distance oracle of Roditty, Thorup and Zwick \cite{RodittyTZ05det}, we obtain the first deterministic diameter approximation tradeoff for undirected graphs. 

Just as the algorithm of Cairo, Grossi and Rizzi also provides a $(2-\frac{1}{2^{k-1}})$-approximation to the graph radius and a $(3-\frac{4}{2^{k-1}+1})$-approximation to all eccentricities, so can our algorithm. We defer the proof of this result to \autoref{sec:radiuseccapprox}.

We note that by using \autoref{lm:determhit} to construct a hitting set of neighborhoods constructed using \autoref{lm:getneighbors}, we can easily derandomize both the $\tilde{O}(m\sqrt{n})$ time almost-$3/2$ approximation of \cite{rv13} and the $\tilde{O}(m^{3/2})$ time genuine-$3/2$ approximation of \cite{chechikdiam} which work in directed graphs. Derandomizing the full CGR construction is not as simple since the original randomized construction uses random samples to hit large node neighborhoods that would be expensive to construct explicitly. We get around this by adapting the ideas behind the Thorup-Zwick distance oracles.


\begin{theorem}\label{thm:detcgrdiam}
    Given an undirected graph with edges weight bounded by $M$ and an integer $k\geq 2$, one can compute in \emph{deterministic} $\tilde{O}(mn^{1/k})$ time an estimate $\tilde{D}$ satisfying $\frac{2^{k-1}}{2^k - 1}D - \alpha \leq \tilde{D}\leq D$ where $\alpha = \frac{2^{k-1}-1}{2^k-1}M$.
\end{theorem}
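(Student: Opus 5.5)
We replace CGR's random samples with the deterministic Thorup--Zwick hierarchy of Roditty, Thorup and Zwick \cite{RodittyTZ05det}. That construction deterministically builds, in $\tilde{O}(mn^{1/k})$ time, sets $V=A_0\supseteq A_1\supseteq\cdots\supseteq A_{k-1}\supseteq A_k=\emptyset$ with $|A_i|=\tilde{O}(n^{1-i/k})$. For every $v$ it also provides the bunches $B_i(v)=\{x\in A_i: d(v,x)<d(v,A_{i+1})\}$ of size $O(n^{1/k})$ (up to logs), the pivots $p_i(v)\in A_i$ nearest to $v$, and the distances $d(v,p_i(v))$. All of this is obtained through truncated Dijkstra searches. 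In particular, for each $i$ we know $A_{i+1}$-balls restricted to $A_i$, and the clusters $C_i(w)=\{v: w\in B_i(v)\}$ have total size $\tilde{O}(n^{1+1/k})$. Each cluster can be explored by a modified Dijkstra from $w$ in time proportional to the edges incident to it. Summing over $w$ gives $\tilde{O}(mn^{1/k})$ per level. These play the role of CGR's neighborhoods $N_{s_i}(v)$ hit by random samples $S_i$.

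The algorithm follows CGR's iterative scheme. First, run a full SSSP from every vertex of $A_{k-1}$; there are $\tilde{O}(n^{1/k})$ of them, so this costs $\tilde{O}(mn^{1/k})$. Let $a_{k-1}$ be the vertex farthest from $A_{k-1}$, i.e. maximizing $d(v,p_{k-1}(v))$, found by one multi-source Dijkstra. Then, going down levels $i=k-1,\dots,1$, take the current candidate vertex $a_i$. Run SSSP from $a_i$ and from every vertex of its bunch/ball at level $i-1$, which is a set of size $O(n^{1/k})$. Then pick $a_{i-1}$ as the vertex whose relevant ball radius is largest, or the farthest point from $a_i$, following CGR. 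The output $\tilde{D}$ is the maximum eccentricity computed by any SSSP. This is automatically $\le D$, so only the lower bound needs proof. Each step costs $\tilde{O}(mn^{1/k})$ and there are $k$ steps, which gives the runtime.

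For correctness, fix a diametral pair $(x,y)$ with $d(x,y)=D$ and mirror CGR's induction. Let $r_i$ denote the radius of the level-$i$ ball around the relevant vertex (essentially $d(\cdot,A_{i+1})$). At each level, two cases arise. If some vertex of the ball we searched from is within $r$ of $x$, its eccentricity is at least $D-r$, by the triangle inequality. Otherwise the ball around $x$ is larger than the ball we searched, and the pivot of the next level is pushed farther. This yields a recurrence forcing $r_{i-1}\gtrsim$ a constant fraction of $D-r_i$ with the rounding loss of one edge weight $M$. In CGR, the thresholds are chosen as $D\cdot 2^{j}/(2^k-1)$. Solving the recurrence (each level halves the slack) gives $\tilde{D}\ge \frac{2^{k-1}}{2^k-1}D-\frac{2^{k-1}-1}{2^k-1}M$. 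The $M$ terms accumulate geometrically, one per level with weights $2^{j}$, which matches $\alpha$.

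The main obstacle is that CGR's analysis uses the fact that the random sample $S_{i}$ hits the $n^{i/k}$ closest vertices of every vertex. Consequently, ball radii are comparable across vertices at the right sizes. In the TZ hierarchy we only know ball containment relative to $A_{i+1}$, and bunches are computed only within $A_i$. I would handle this by rephrasing every CGR step purely in terms of $d(v,A_{i+1})$ and $d(v,p_i(v))$. The key inequality to establish is that if $x\notin$ the explored ball around $a_i$, then $d(x,A_{i})$ can be bounded via $d(a_i,p_i(a_i))$ and the TZ triangle inequality $d(v,p_{i+1}(v))\le d(v,p_i(v))+d(u,v)$-type bounds (the standard $(2k-1)$ oracle query argument). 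Checking that these oracle-style inequalities give exactly CGR's halving recurrence, rather than a weaker one, is the step I expect to need the most care.
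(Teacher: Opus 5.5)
\textbf{Genuine gap: the inductive correctness argument, which is the content of the theorem, is missing.} Your skeleton matches the paper: the Roditty--Thorup--Zwick hierarchy $V=A_0\supseteq\cdots\supseteq A_{k-1}$, SSSP from all of $A_{k-1}$, SSSP from one bunch $\{x\in A_i: d(v,x)<d(v,A_{i+1})\}$ per level, and the $\tilde{O}(mn^{1/k})$ count. The problems are these:
\begin{itemize}
\item You leave the choice of the next vertex open (``largest ball radius, or the farthest point from $a_i$'').
\item You assert without justification that otherwise ``the ball around $x$ is larger \dots\ and the pivot of the next level is pushed farther.''
\item You explicitly defer the key inequality.
\end{itemize}
The tool you propose for that inequality, pivot bounds from the Thorup--Zwick query algorithm, is not what drives the induction. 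The ``farthest point from $a_i$'' option gives no control over $d(a_{i-1},A_i)$, so the invariant below could not be propagated.

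\textbf{The missing argument.} Fix a diametral pair $s,t$ and let $\gamma_j=\frac{2^j-1}{2^k-1}$, $\varepsilon=\gamma_{k-1}$ and $\alpha_{i+1}=\gamma_{i+1}-\gamma_i$. At every level choose $v_i=\arg\max_v d(v,A_{i+1})$, a global maximizer found by one multi-source Dijkstra, independently of earlier levels. Maintain the invariant $d(v_i,A_{i+1})>\gamma_{i+1}(D+M)$, so that $B_i(v_i)$ contains every vertex of $A_i$ within $\gamma_{i+1}(D+M)$ of $v_i$.

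\emph{Base case.} If some vertex of $A_{k-1}$ is within $\varepsilon(D+M)$ of $s$, the SSSP from $A_{k-1}$ already wins. Otherwise maximality gives $d(v_{k-2},A_{k-1})>\gamma_{k-1}(D+M)$.

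\emph{Step.} Assume $d(v_i,s)<(1-\varepsilon)D-\varepsilon M$; otherwise the SSSP from $v_i$ suffices. Take $c$ on a shortest $v_i$--$s$ path with $d(v_i,c)\le\alpha_{i+1}(D+M)$ and $d(c,s)\le(1-\varepsilon-\alpha_{i+1})(D+M)$; the $+M$ absorbs overshooting by one edge. Now split on $d(c,A_i)$:
\begin{itemize}
\item If $d(c,A_i)>\gamma_i(D+M)$, maximality gives $d(v_{i-1},A_i)>\gamma_i(D+M)$, which is the invariant one level down.
\item Otherwise some $q\in A_i$ has $d(c,q)\le\gamma_i(D+M)$. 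Then $d(v_i,q)\le\gamma_{i+1}(D+M)$, so $q\in B_i(v_i)$ was searched. Also $d(s,q)\le(1-\varepsilon-\alpha_{i+1}+\gamma_i)(D+M)=\varepsilon(D+M)$, so $d(q,t)\ge(1-\varepsilon)D-\varepsilon M$.
\end{itemize}
The identity $1-\varepsilon-\alpha_{i+1}+\gamma_i=\varepsilon$, equivalently $\gamma_{i+1}=2\gamma_i+(1-2\varepsilon)$, is exactly the halving recurrence you were looking for.

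\emph{Level $0$.} Since $A_0=V$, $B_0(v_0)$ is a full ball of radius greater than $(1-2\varepsilon)(D+M)$. It therefore contains a vertex of the $v_0$--$s$ path within $\varepsilon(D+M)$ of $s$, which is far enough from $t$.

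No pivot of a diameter endpoint appears anywhere. What is needed is the dichotomy on the intermediate point $c$, combined with the global maximality of $v_{i-1}$.
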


\begin{proof}
    We begin by recursively constructing the data structure used in the deterministic $(2k-1)$-approximate distance oracle of \cite{RodittyTZ05det}. Begin with $A_0=V$. 
    Now, assuming we have the set $A_i$, let $N_i(v)$ be the $q=\tilde{O}(n^{1/k})$ closest vertices to $v$ in $A_i$, we will set the exact logarithmic dependency of $q$ in our runtime analysis. Using \autoref{lm:getneighbors}, we can construct these sets in $\tilde{O}(mn^{1/k})$ time. Define $A_{i+1}$ to be a greedy hitting set of the sets $N_i(v)$ for all $v$ using \autoref{lm:determhit}. Note that the lemma gives the following property:

    \begin{claim}
        $|A_i|\leq \tilde{O}(n^{1-i/k})$.
    \end{claim}

    For every $i$, run Dijkstra's algorithm from the set $A_{i+1}$ and let $v_i$ be the furthest vertex of $V$ from the set. Define $B_i(v_i) \coloneqq \{x\in A_i : d(v_i,x) < d(v_i, A_{i+1})\}$. By the definition of $A_i$, $|B_i(v_i)|= \tilde{O}(n^{1/k})$. 

    Finally, we run Dijkstra's algorithm from $A_{k-1}$ and from every vertex $y\in B_i(v_i)$ for $0\leq i\leq k-2$ and return the largest distance found. See Algorithm \ref{alg:detcgr} for the full pseudo-code.

    \begin{algorithm}
         \caption{($2-\frac{1}{2^{k-1}}$)-Diameter Approximation}\label{alg:detcgr}
         \begin{algorithmic}[1]
            \item \textbf{Input:} Weighted, undirected graph $G = (V, E)$.
            \item \textbf{Output:} Diameter approximation $D\geq \tilde{D}\geq (2-1/2^{k-1})D - \alpha$.
            \State $A_0\leftarrow V$.
            \For{$i=0,\ldots, k-2$}
                \State $N_i(v)\leftarrow$ the $q=\tilde{O}(n^{1/k})$ closest vertices to $v$ in $A_i$.\label{ln:startfor}
                \State $A_{i+1}\leftarrow$ hitting set for all the sets $N_i(v)$.
                \State $v_i \leftarrow$ furthest vertex in $V$ from $A_{i+1}$.
                \State $B_i(v_i) \leftarrow \{x\in A_i : d(x,v_i) < d(v_i, A_{i+1})\}$.
                \State Run Dijkstra's algorithm from every $y\in B_i(v_i)$.\label{ln:endfor}
            \EndFor
            \State Run Dijkstra's algorithm from every vertex in $A_{k-1}$.
            \State \Return Largest distance computed in one of the Dijkstra searches.
            \end{algorithmic}
    \end{algorithm}

    \paragraph{Correctness:} 
    We want to show that the distance $\tilde{D}$ returned by the algorithm satisfies the desired distance approximation. Clearly $\tilde{D}\leq D$ as it is a true distance in the graph, so we are left to show the lower bound. To do so we define a series of parameters and use them to prove that at every step either we find a pair of vertices of distance $\geq \frac{2^{k-1}}{2^k-1}D - \alpha$ or the vertex $v_i$ is sufficiently far from $A_{i+1}$ and we proceed inductively.
    
    Fix a pair of diameter endpoints, $d(s,t) = D$ and define the values $\varepsilon,\gamma_1\ldots, \gamma_{k-1},\alpha_2,\ldots, \alpha_{k-1}$ as follows. Let $\varepsilon = \frac{2^{k-1}-1}{2^k-1}$,  we want to show that $\tilde{D}\geq (1-\varepsilon)D - \varepsilon M$. For every $r=0,\ldots, k-2$ define $\gamma_{k-1-r}=\frac{1}{2^r}(1-\varepsilon)-(1-2\varepsilon)$. Finally, define for every $i=1,\ldots, k-2,~\alpha_{i+1} = \gamma_{i+1}-\gamma_i$. 
    
    Note that since the $\gamma_i$'s are increasing we always have that $\alpha_i>0$. Furthermore, $\gamma_1 = 1-2\varepsilon=\frac{1}{2^k-1}>0$ so all $\gamma_i > 0$. Next we note the following property relating our chosen parameters to each other and motivating their definition.

    \begin{proposition}\label{prop:paramprops}
        $1-\varepsilon - \alpha_{i+1} + \gamma_i = \varepsilon$.
    \end{proposition}
    \begin{proof}
        \begin{align*}
            \alpha_{i+1} - \gamma_{i} &= \gamma_{i+1}-2\gamma_i = 
            \frac{1}{2^{k-i-2}}(1-\varepsilon)-(1-2\varepsilon) - \frac{2}{2^{k-i-1}}(1-\varepsilon) + 2(1-2\varepsilon)\\
            &= \left(\frac{1}{2^{k-i-2}}-\frac{2}{2^{k-i-1}}\right)(1-\varepsilon) + (1-2\varepsilon) = 1-2\varepsilon.
        \end{align*}
        Thus, $1-\varepsilon - \alpha_{i+1} + \gamma_i = \varepsilon$.
    \end{proof}

    We wish to show inductively that at every step of the algorithm we either find a pair of point far enough apart or claim a lower bound on the distance between $v_i$ and $A_{i+1}$. We begin with the base case.

    \begin{claim}\label{clm:approxbase}
        If there exists a vertex $x\in A_{k-1}$ such that $d(s,x)\leq \varepsilon (D + M)$ then $\tilde{D}\geq d(x,t)\geq (1-\varepsilon)D - \varepsilon M$.
    \end{claim}
    The proof follows from the triangle inequality. Thus, if after running Dijkstra's from $A_{k-1}$ we don't have a good enough approximation then $d(v_{k-2}, A_{k-1}) > \varepsilon(D + M) = \gamma_{k-1} (D+M)$.

    \begin{lemma} \label{lm:approxstep}
        Suppose $d(v_i, A_{i+1}) > \gamma_{i+1} (D+M)$, then after running Dijkstra's from $B_i(v_i)$ we either have an estimate $\tilde{D} \geq (1-\varepsilon)D - \varepsilon M$, or $d(v_{i-1}, A_i) > \gamma_i (D + M)$.
    \end{lemma}

    \begin{proof}
        When we run Dijkstra's from $v_i$, if we do not get the desired approximation we must have that $d(s, v_i) < (1-\varepsilon) D - \varepsilon M$. Since $d(v_i, A_{i+1}) > \gamma_{i+1} (D+M)$, the set $B_i(v_i)$ contains all vertices of $A_i$ at distance $\leq \gamma_{i+1}(D+M)$ from $v_i$.

        Consider the shortest path between $s$ and $v_i$. There must exists a vertex $c$ on this shortest path such that $d(v_i, c) \leq \alpha_{i+1}(D+M)$ and 
        \[
        d(s, c) \leq ((1-\varepsilon)D - \varepsilon M) - \alpha_{i+1}(D+M) + M = (1-\varepsilon - \alpha_{i+1})(D+M).
        \]

        Consider the distance $d(c, A_i)$. If this distance is greater than $\gamma_i (D+ M)$ then we have that $d(v_{i-1}, A_i) > \gamma_i (D+M)$. Otherwise, there exists $q\in A_i$ such that $d(c,q) \leq \gamma_i (D+M)$. So by the triangle inequality, 
        \[
        d(v_i, q) \leq \alpha_{i+1}(D+M) + \gamma_i (D+M) = \gamma_{i+1}(D+M).
        \]
        Therefore, $q\in B_i(v_i)$ and so we have computed Dijkstra's from it. Furthermore, by \autoref{prop:paramprops},
        \begin{align*}
        d(s,q)&\leq d(s,c) + d(c,q)\leq (1-\varepsilon - \alpha_{i+1})(D+M) + \gamma_i (D+M) \\
        &= (1-\varepsilon - \alpha_{i+1} +\gamma_{i})(D+M) = \varepsilon (D+M).
        \end{align*}

        Meaning $d(q, t) \geq (1-\varepsilon)D - \varepsilon M$ and we have computed the desired approximation. We conclude that after running Dijkstra's from $B_i(v_i)$ either $\tilde{D}$ is sufficiently large or $d(v_{i-1}, A_i) > \gamma_i (D+M)$.
    \end{proof}

    By applying \autoref{lm:approxstep} $(k-1)$ times, using \autoref{clm:approxbase} as the base case, we either compute an estimate $\tilde{D}\geq (1-\varepsilon)D - \varepsilon M$ or have that $d(v_0, A_1) > \gamma_1 (D+ M) = (1-2\varepsilon) (D+M)$.

    In this case, $B_0(v_0)$ contains all vertices of $V$ at distance $\leq (1-2\varepsilon)(D+M)$ from $V_0$. Since we run Dijkstra's from $v_0$ we can assume $d(v_0, s) < (1-\varepsilon)D - \varepsilon M$ or we have already found a sufficiently large distance. Thus, there exists a vertex $y$ on the shortest path between $v_0$ and $s$ such that $d(s, y) \leq \varepsilon(D+M)$ and \[
    d(v_0, y) \leq ((1-\varepsilon)D - \varepsilon M) - \varepsilon(D+M) + M = (1-2\varepsilon)(D+M).
    \]

    Therefore $y\in B_0(v_0)$ and $d(y,t) \geq (1-\varepsilon) D - \varepsilon M$ and so running Dijkstra's from all vertices in $B_0(v_0)$ gives us the desired approximation. We conclude that,
    \[
    \tilde{D}\geq (1-\varepsilon)D - \varepsilon M = \frac{2^{k-1}}{2^k-1}D - \frac{2^{k-1}-1}{2^k - 1}M.
    \]

    \paragraph{Runtime:} 
    Using \autoref{lm:getneighbors}, we can compute the sets $N_i(v)$ in total time $O(q\cdot C)$ where $C=O(m+n\log n)$ is the time it takes to run a single graph search. Next, using \autoref{lm:determhit}, computing $A_{i+1}$ takes $O(n\cdot q)$ time and produces a set of size $|A_{i+1}|=O(\frac{|A_i|}{q}\log n) = O(\frac{n}{q^i}\log^i n)$. Finding $v_i$ and computing $B_i(v_i)$ can be done in $O(C)$ time and running Dijkstra's algorithm from every $y\in B_i(v_i)$ takes $O(q\cdot C)$ as $|B_i(v_i)|\leq |N_i(v_i)| = q$.

    Thus, running lines \ref{ln:startfor}-\ref{ln:endfor} takes $O(q\cdot C + n\cdot q) = O(q\cdot C)$ time, for a total runtime of $O(kq\cdot C)$.

    Finally, running Dijkstra's algorithm from all vertices of $A_{k-1}$ takes time $O(|A_{k-1}|\cdot C) = O(\frac{n}{q^{k-1}}\log^{k-1} n\cdot C)$.

    Setting $q=O(\frac{n^{1/k}\log^{(k-1)/k}n}{k^{1/k}})$ gives a final runtime of $O(C\cdot n^{1/k}\cdot \log^{(k-1)/k}n\cdot k^{(k-1)/k}) = \tilde{O}(mn^{1/k})$.
    
\end{proof}

%% file: 4_deterministic_small_clusters.tex
\section{Deterministic Small Clusters}\label{sec:detclusters}

A random hitting set of size $\Theta(\frac{n}{\ell})$ results in an expected average cluster size $\frac{1}{n}\sum_{w \in V}|C_A(w)|$ of $O(\ell)$. A deterministic greedy hitting set of size $\Theta(\frac{n}{\ell}\log{n})$ results in an $O(\ell)$ on the size of every ball, and furthermore the average cluster size is small as a result: $$\forall v: |B_A(v)| \leq \ell \implies \frac{1}{n}\sum_{w \in V}|C_A(w)| = \frac{1}{n}\sum_{v \in V}|B_A(w)| \leq \ell$$

However, sometimes we want the size of \textit{all} clusters to be bounded by $\ell$. Theorem 3.1 from \cite{thorup2001compact} gave a Las Vegas algorithm for doing so that worked in $O(nl\log{n})$ expected time. Here, we present a deterministic algorithm that works in the same running time.

\begin{lemma}\label{lemma:smallclusters}
    There exists a \textbf{deterministic} algorithm running in $O(n\ell\log{n})$ time that, when given an $n \times \ell$ matrix $M^*$ consisting of the closest $\ell$ vertices to every node (in a directed or undirected graph), constructs a set $A^*$ of size $|A^*| = O(\frac{n}{\ell}\log{n})$ such that all clusters and balls with respect to $A^*$ are of size at most $\ell$.
\end{lemma}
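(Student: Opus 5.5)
The plan is a deterministic, round-based version of the Thorup--Zwick resampling argument. Two facts drive it. First, once $A$ hits every row of $M^*$, each ball satisfies $B_A(v)\subseteq M^*[v]$. Hence $w\in B_A(v)$ forces $w\in M^*[v]$, and every cluster satisfies $C_A(w)\subseteq \{v : w\in M^*[v]\}$. Second, adding vertices to $A$ can only shrink balls and clusters; moreover, adding $w$ itself empties $C_A(w)$. I will therefore invert $M^*$ once, in $O(n\ell)$ time, into lists $L(w)=\{v: w\in M^*[v]\}$. Every subsequent cluster computation is then a scan of $L(w)$ that tests $d(v,w)<d(v,A)$, using the stored distances and a maintained value $d(v,A)$ restricted to $M^*[v]$. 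So one full recomputation of all balls and clusters costs $O(n\ell)$.

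The construction starts by letting $A_0$ be a greedy hitting set (\autoref{lm:determhit}) of the prefixes of length $\ell/2$ of all rows of $M^*$. This has size $O(\frac{n}{\ell}\log n)$, gives $|B_{A_0}(v)|\le \ell/2$ for every $v$, and hence $\sum_w |C_{A_0}(w)|\le n\ell/2$. Call $w$ \emph{heavy} if $|C_A(w)|>\ell$. The balls stay bounded by $\ell$ forever by monotonicity, so only the clusters need attention. In round $j$, let $W_j$ be the current heavy set and let $V_j=\bigcup_{w\in W_j}C_A(w)$ be the vertices whose balls contain a heavy vertex. For each $v\in V_j$, I take the set $H_j(v)$ of heavy vertices in $B_A(v)$, or rather its closest half in $M^*[v]$-order. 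I then add to $A$ a greedy hitting set of those $H_j(v)$ that have size at least a threshold. Any $H_j(v)$ below the threshold I handle by adding the heavy vertices directly. Counting gives $|W_j|\cdot \ell<\sum_{w\in W_j}|C_A(w)|\le \sum_{v\in V_j}|B_A(v)|$, which bounds $|W_j|$ by $|V_j|$ times the ball size divided by $\ell$.

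The intended invariant is that the potential $\Phi_j=\sum_{w\in W_j}|C_A(w)|$ drops by a constant factor per round. Hitting the near half of $v$'s heavy ball-members cuts $v$'s contribution to $\Phi_j$ at least in half. Consequently, after $O(\log n)$ rounds no heavy vertices remain. Each round costs $O(n\ell)$ for the recomputation plus the cost of the hitting set, which is linear in the total size of the $H_j(v)$. This gives $O(n\ell\log n)$ time in total.

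The main obstacle is the size bound: naively, each round's greedy hitting set carries its own $\log n$ factor over $\log n$ rounds, which yields $O(\frac{n}{\ell}\log^2 n)$. To avoid this, I would make the number of vertices added in round $j$ proportional to $\Phi_j/\ell$, since at most $\Phi_j/\ell$ heavy vertices exist. Concretely, I would add heavy vertices directly whenever that is cheaper, and otherwise bound the greedy set by $O(\frac{|V_j|}{\ell}\log n)$ with $|V_j|\le\Phi_j$. The geometric decay of $\Phi_j$, starting from $\Phi_0\le n\ell/2$, would then sum to $O(\frac{n}{\ell}\log n)$ overall. Making this charging tight, so that the per-round greedy $\log$ is absorbed by the geometric decay of $\Phi_j$ rather than multiplied by the number of rounds, is the step I expect to require the most care.
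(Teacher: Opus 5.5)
Your round structure matches the paper's: an initial greedy hitting set, the heavy set $W_j$, an $O(n\ell)$ recomputation of balls and clusters per round, and $O(\log n)$ rounds. However, the step you flag as delicate is a genuine gap, and the charging you sketch does not close it. A greedy hitting set must hit \emph{every} qualifying row, and that is where each round loses a $\log n$. The rows below your threshold can together contribute up to $n\tau$, so for them to carry only a constant fraction of $\Phi_j$ you need $\tau=O(\Phi_j/n)$. Greedy then only guarantees $O(\frac{|W_j|}{\tau}\log n)=O(\frac{n|W_j|}{\Phi_j}\log n)$ new vertices, which is $O(\frac{n}{\ell}\log n)$ per round when $\Phi_j=\Theta(|W_j|\ell)$.

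The proposed absorption into the decay of $\Phi_j$ fails numerically. The only available bound is $\Phi_0\le n\ell/2$, so a per-round budget of $\Phi_j/\ell$ sums to $O(n)$, not $O(\frac{n}{\ell}\log n)$. Likewise $\sum_j\frac{|V_j|}{\ell}\log n\le\sum_j\frac{\Phi_j}{\ell}\log n=O(n\log n)$. Using $|V_j|\le n$ instead just returns you to $O(\frac{n}{\ell}\log n)$ per round, hence $O(\frac{n}{\ell}\log^2 n)$ overall. Similarly, adding the heavy vertices of below-threshold rows directly can cost $|W_j|$, which your bounds only control by $\Phi_j/\ell\le n/2$.

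The missing ingredient is a deterministic substitute for random sampling that bounds the \emph{sum} of first-hit positions, rather than forcing every row to be hit, and therefore pays no $\log$ per round. The paper uses the early hitting set lemma of Roditty, Thorup and Zwick (\autoref{lemma:earlyhit}). In round $i$, let row $v$ of a matrix $M$ list $B_{A^*}(v)\cap W_i$ in distance order, padded with a dummy. Apply the lemma with $s=|W_i|$, $P=0$ and $p=\frac{8n}{|W_i|\ell}$. Since $P=0$, both terms on the left are at most $3n/p$. This gives $|A|\le 3ps=24n/\ell$ with no logarithm, and $hit(M,A)\le 3|W_i|\ell/8$. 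Since $hit(M,A)\ge\sum_{w\in W_i}|C_{A^*\cup A}(w)|>\ell|W_{i+1}|$, it follows that $|W_{i+1}|<|W_i|/2$.

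So the right progress measure is $|W_i|$ rather than $\Phi_j$. Each of the $\log n$ rounds adds only $O(n/\ell)$ vertices; when $|W_i|\le 8n/\ell$ one can simply add $W_i$ itself. Together with the initial greedy set, this yields $|A^*|=O(\frac{n}{\ell}\log n)$ in $O(n\ell\log n)$ time.
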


To prove lemma \ref{lemma:smallclusters}, we make use of the following "early hitting set" lemma from \cite{RodittyTZ05det}.

\begin{lemma}[Definition 2, Theorem 3 from \cite{RodittyTZ05det}]\label{lemma:earlyhit}
Let $M$ be an $n \times \ell$ matrix whose elements are taken from a finite set $S$ of size $|S|=s$, let $A$ be a set, and let $P \geq 0$ be a penalty. Let $hit(M_i, A)$ be the index of the first element of $M_i$, the $i$-th row of $M$, that belongs to $A$, or $\ell+P$, if no element of $M_i$ belongs to $A$. Let $hit(M, A) = \sum_{i=1}^{n} hit(M_i, A)$ be the \text{hitting sum} of $A$ with respect to $M$.

Then, for every $0 < p < 1$ there is an $O(n\ell)$ time algorithm which finds a set $A \subseteq S$ for which $\frac{n}{p^2s}|A|+hit(M, A) \leq 3n/p + (1-p)^\ell Pn$.
\end{lemma}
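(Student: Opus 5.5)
The plan is to use the probabilistic method and then derandomize it with the method of conditional expectations, using a pessimistic estimator. This is the route of \cite{RodittyTZ05det}, and the constant $3$ (rather than $2$) in the bound is the slack the estimator needs. Throughout I will assume the entries within a row are distinct, which holds in the application since the rows of $M^*$ list distinct vertices. If a row had repeats, only the first occurrence of each element matters for $hit(M_i,A)$, and I would first delete later repeats.

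\textbf{Random construction.} Put each element of $S$ into $A$ independently with probability $p$. Then $\mathbb{E}\bigl[\tfrac{n}{p^2 s}|A|\bigr]=\tfrac{n}{p^2 s}\cdot ps=\tfrac{n}{p}$. For a fixed row $i$, since its entries are distinct,
\[
\mathbb{E}[hit(M_i,A)]=\sum_{j=1}^{\ell}\Pr[hit(M_i,A)\ge j]+P(1-p)^{\ell}=\sum_{j=0}^{\ell-1}(1-p)^{j}+P(1-p)^{\ell}\le \tfrac1p+P(1-p)^{\ell}.
\]
Summing over the rows, the expected objective $\Phi=\tfrac{n}{p^2s}|A|+hit(M,A)$ is at most $2n/p+(1-p)^{\ell}Pn$. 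So some set $A$ satisfies even the stronger bound with $2$ in place of $3$.

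\textbf{Derandomization.} I would fix the elements of $S$ one at a time, each time choosing in or out so that the conditional expectation (or an upper bound on it) does not increase. Fixing an element $x$ changes the $|A|$-term by an amount that is trivial to compute. It affects only those rows containing $x$, and the total number of occurrences is $n\ell$. The issue is evaluating each row's conditional expectation fast enough. For a row, the conditional expected hit value depends on three things: the position $h$ of its first entry already fixed \emph{in}; the entries before $h$ already fixed \emph{out}; and the undecided entries before $h$. An exact update at one occurrence can therefore cost $O(\ell)$, giving $O(n\ell^2)$ in total.

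\textbf{Ordering and estimator.} To fix this, I would scan $M$ column by column (column-major) and decide each element the first time it is seen. Each row keeps a frontier $f_i$, the column currently being scanned, and is marked done once an entry up to $f_i$ is in $A$. For an unfinished row, I would replace the exact conditional expectation by the pessimistic estimator
\[
\Psi_i=(f_i-1)+\sum_{j\ge 0}(1-p)^{j}+\ldots,
\]
in which the already-scanned prefix is charged exactly and the suffix is charged as if every later entry were undecided. This needs care, because a later entry fixed \emph{out} (it appeared earlier in another row) makes this estimate an underestimate rather than an overestimate. I would handle this by charging each such entry a full $+1$ when the scan reaches it. This should cost at most an extra $n/p$ overall, since it is exactly the cost of one additional geometric sum per row, and it is what the third $n/p$ is budgeted for. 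With this estimator, each occurrence triggers an $O(1)$ update: a row either moves its frontier by one, or becomes hit and contributes its final value. The greedy choice at each element then compares two sums of $O(\text{occurrences})$ terms, so the total time is $O(n\ell+s)$.

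\textbf{Main obstacle.} The hard step is designing the pessimistic estimator: it must be a genuine upper bound on the conditional expectation, its expectation over the current element's choice must not increase, and it must be updatable in $O(1)$ per matrix entry. If the column-major estimator turns out not to be monotone, my fallback is a rowwise potential with two parts: an exact count $(1-p)^{-(\#\text{undecided before }h)}$-style weight, maintained multiplicatively per row, together with lazily accumulated suffix sums. This should still give amortized $O(1)$ per occurrence.
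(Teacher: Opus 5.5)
\textbf{There is a genuine gap: the $O(n\ell)$-time deterministic construction is not established.}

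The paper does not prove this lemma. It imports it verbatim from Roditty--Thorup--Zwick \cite{RodittyTZ05det}, so your argument has to stand on its own.

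Your existence half is correct. With independent sampling at rate $p$ and distinct entries in each row, the expected objective is at most $2n/p+(1-p)^{\ell}Pn$. But the content of the cited theorem is the \emph{deterministic $O(n\ell)$-time} construction, and your proposal does not establish it. The estimator $\Psi_i$ is never fully defined (the ``$+\ldots$''). The claim that is supposed to make it work is unjustified and false: that charging $+1$ whenever the scan reaches an entry already fixed out costs at most an extra $n/p$. The number of such entries in a row is governed by decisions taken in \emph{other} rows, not by a geometric sum.

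Here is a counterexample. Take $P=0$, $\ell\gg 1/p$ and $n\gg\ell$. For $2\le j\le\ell$, put $x_j$ in column $j$ of every row, except that row $j-1$ carries $x_j$ in column $1$ (and a fresh element in column $j$). Rows $i\ge\ell$ get a fresh element in column $1$.
\begin{itemize}
\item Then $s=n+\ell-1$, so $\frac{n}{p^2s}>\frac1p$, and every element is first seen in a single row.
\item Your local comparison fixes an element in iff $\frac{n}{p^2s}\le g_{k+1}$, where $g_{k+1}\le 1/p$ is the fresh geometric tail of that one row. So it fixes everything out.
\item Hence $A=\emptyset$, the $+1$ charges total about $n\ell$, and $hit(M,A)=n\ell$. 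A random set costs at most $2n/p$ in expectation.
\end{itemize}

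Suppose instead you book the future $+1$'s at the moment $x$ is fixed, so the decision sees every occurrence of $x$ in an active row. Then the estimator is no longer a supermartingale. An occurrence beyond a row's frontier raises it by $1$ if $x$ is fixed out. It earns no credit if $x$ is fixed in, because the row's geometric tail ignores later in-entries. So the expected change is $+(1-p)$ per such occurrence, and the averaging step that justifies the greedy choice yields nothing.

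This variant also fails concretely. Take rows $(x_i,y,x_1,\dots,\widehat{x_i},\dots,x_n)$ with $P=0$, $n\ge 2/p^2$ and $p<1/7$. The rule fixes $x_i$ in whenever $\frac{n}{p^2s}\le n-i+g_2$, hence for all $i\le n/2$. This gives $\frac{n}{p^2s}|A|\ge\frac{n^2}{2p^2(n+1)}>\frac{3n}{p}$.

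\textbf{What is missing.} You need an estimator that correctly accounts for entries already fixed beyond a row's frontier, both out and in, yet can be updated in $O(1)$ amortized time per matrix entry. Your fallback names such an object but does not construct it. With the exact conditional expectation, fixing an entry changes the row's value by a multiple of the suffix sum after that entry. Fixing it out rescales that whole suffix. Naive maintenance therefore costs $O(\ell)$ per update, and a balanced tree still costs $O(\log\ell)$. Reaching $O(n\ell)$ overall is precisely the part that needs proof.

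\textbf{A smaller point.} Distinctness within rows is a genuine hypothesis, not a convenience. Deleting later repeats shifts indices and changes $hit$. With repeats the statement is false: for $s=1$, all entries equal, $P=0$, $p=1/10$ and $\ell=40$, both $A=\emptyset$ (cost $40n$) and $A=S$ (cost $101n$) exceed $3n/p=30n$.
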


\begin{proof}[Proof of \autoref{lemma:smallclusters}:]

The idea is that early hitting sets satisfy the same property that random sets do in the proof of Theorem 3.1 from \cite{thorup2001compact}.
\paragraph{Algorithm:} 

Construct a greedy hitting set (Lemma \ref{lm:determhit}) of size $O(\frac{n}{\ell}\log{n})$ that hits every row of this matrix. Initialize $A^*$ to be this greedy hitting set.

For $i \in [\log{n}]$, we repeat the following:

Compute the balls $B_{A^*}(v)$ for every vertex $v$ by iterating through each row of $M^*$ and finding the first element in each row which is in $A^*$, and compute the clusters $C_{A^*}(w)$ as the inverse of the balls. Let $W_i = \{w \in V\ |\ |C_{A^*}(w)| > \ell\}$, i.e. all clusters that are still “too big”. Now, we construct a new matrix $M$, where for each row in $M^*$, consider only the elements before the first occurrence of $A^*$ that are also members of $W_i$. Pad all rows of $M$ with a dummy element to make them the same length. 

In Lemma \ref{lemma:earlyhit}, plug in $p = \frac{8n}{|W_i|\ell}, s = |W_i|, k = \ell, P = 0$.

Lemma \ref{lemma:earlyhit} returns a set $A$. If the dummy element is not included in $A$, we add it in. We set $A^* \leftarrow A^* \cup A$. 

\paragraph{Correctness:}

Consider the set $A$ returned by Lemma \ref{lemma:earlyhit} after iteration $i$, and the set $A^*$ after $A$ is merged in.

Lemma \ref{lemma:earlyhit} guarantees $\frac{n}{p^2s}|A| + hit(M, A) \leq 3n/p + (1-p)^\ell Pn$. Because $P = 0$ and all terms are non-negative, we have $\frac{n}{p^2s}|A| \leq 3n/p$ and $hit(M, A) \leq 3n/p$. The first inequality yields $|A| \leq ps = \frac{24n}{\ell}$. From the second inequality, we have $hit(M, A) \leq \frac{3|W_i|\ell}{8}$. 

For a row corresponding to vertex $v$, the set of elements that occur before the first hit (which is the entire row, if no hit occurs), is precisely $B_{A^*}(v) \cap W_i$, and so $hit(M, A) = \sum_{v \in V}|B_{A^*}(v)\cap W_i|$. 

We have $\sum_{w \in W_i} |C_{A^*}(w)| = \sum_v |B_{A^*}(w) \cap W_i|  \leq \frac{3|W_i|\ell}{8}$. Recall that $W_{i+1} \subseteq W_i$ are the clusters still satisfying $|C_{A^*}(w)| > \ell$ at the end of the $i$th iteration. So, we have 

\begin{align*}
&\frac{3|W_i|\ell}{8}\geq hit(M, A) = \sum_{v} |B_{A^*}(w) \cap W_i| = \sum_{w \in W_{i}} |C_{A^*}(w)| \geq \sum_{w \in W_{i+1}} |C_{A^*}(w)| > \ell|W_{i+1}|  \\
&\implies |W_{i+1}| < \frac{|W_i|}{2} \\
\end{align*}

$W_0 \subseteq V \implies |W_0| \leq n$, so in $\log{n}$ iterations, $W_{i}$ becomes the empty set, at which point all clusters satisfy $|C_{A^*}(w)| \leq \ell$. 

We add at most $\frac{24n}{\ell}+1$ elements into $A^*$ at every iteration, so the total number of added elements is $|A^*| \leq O(\frac{n}{\ell}\log{n})$. 

\paragraph{Runtime:}

In each iteration, the size of the matrix $M$ is bounded by $n\ell$. Lemma \ref{lemma:earlyhit} constructs the hitting set in linear time in the size of the matrix. The greedy hitting set finds an initial set $A^*$ satisfying $|B_{A^*}(v)| \leq \ell$ for every $v$, so computing all balls and clusters takes time $\sum_{v \in V} |B_{A^*}(v)| = \sum_{w \in V} |C_{A^*}(w)| \leq n\ell$.

There are $\log{n}$ total iterations, for an overall time complexity of $O(n\ell\log{n})$. 

\end{proof}

Finally, we prove \autoref{thm:ballsclusters}.

\begin{proof}[Proof of \autoref{thm:ballsclusters}:]
    It suffices to prove the claim for a weighted, directed graph. First, we use \autoref{lm:getneighbors} to compute $U_q(v)$ for the set $U=V$ and $q = \ell$ in $O((m+n\log{n})\ell)$ time. Then, we apply \autoref{lemma:smallclusters} to compute a set $S_{F}$ of size $O(\frac{n}{\ell}\log{n})$ such that $|C_{S_F}(v)|, |B_{S_F}(v)| \leq \ell$ in $O(n\ell\log{n})$ time.

    Similarly, we compute a set $S_R$ of size $O(\frac{n}{\ell}\log{n})$ such that $|C_{S_R}(v)|, |B_{S_R}(v)| \leq \ell$ in the same running time.

    Finally, we set $S = S_{F} \cup S_R$. Note that balls and clusters are nonincreasing as we add elements to $S$, so $S$ satisfies $|S| = O(\frac{n}{\ell}\log{n})$ and $|C_S(v)|, |B_S(v)|, |C^{R}_S(v)|, |B^{R}_S(v)| = O(\ell)$.

\end{proof}

%% file: 5_additional_diam_approx.tex
\section{New Diameter Approximation Algorithms}

\subsection{Simplified $(3/2,0)$-Diameter Approximation}
We begin by constructing a simple $3/2$-diameter approximation algorithm with no additive error. Our algorithm runs in time $O(mn^{2/3}\log n^{2/3})$, improving by a log factor upon the runtime of the $(3/2,0)$-approximation algorithm of Chechik et al. \cite{chechikdiam} with a vastly simpler algorithm. We derive this algorithm from queries to an adaptation of a $(2,1)$ distance oracle data structure \cite{beyondTZ14} using the classic balls and clusters approach, together with the observation (e.g. \cite{baswanakavitha10}) that iterating over all the \textit{neighbors} of a vertex's ball can be done efficiently, to avoid the original additive error.

\begin{theorem}\label{thm:diam32approx}
    Given a weighted, directed graph with diameter $D$, one can compute in $O(mn^{2/3}\log^{2/3}n)$ time an estimate $\tilde{D}$ satisfying $\frac{2D}{3}\leq \tilde{D}\leq D$.
\end{theorem}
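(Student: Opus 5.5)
The plan follows the Chechik et al.\ template: a small-ball/small-cluster set, exact Dijkstra from that set, and a few searches around one extremal vertex. The additive error is removed by also searching from the out-neighbours of the extremal vertex's ball, following the observation of \cite{baswanakavitha10} and \autoref{clm:totalsizeofplusballs}.

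\textbf{Setup.} Fix $\ell=n^{1/3}\log^{1/3}n$. Apply \autoref{thm:ballsclusters} to obtain $S$ with $|S|=O(\frac{n}{\ell}\log n)$ and all of $B_S,C_S,B^R_S,C^R_S$ of size $O(\ell)$, in $O((m+n\log n)\ell)$ time. Run an out-Dijkstra and an in-Dijkstra from every $x\in S$. This costs $O(m\cdot\frac{n}{\ell}\log n)=O(mn^{2/3}\log^{2/3}n)$, which is the dominant term and fixes the choice of $\ell$. Let $\tilde D$ be the largest distance found by any search performed.

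\textbf{Easy cases.} Fix $s,t$ with $d(s,t)=D$. If some $x\in S$ has $d(s,x)\le D/3$, the triangle inequality gives $d(x,t)\ge 2D/3$, which the out-search from $x$ finds. Symmetrically, if $d(x,t)\le D/3$, the in-search from $x$ finds $d(s,x)\ge 2D/3$. So assume $d(s,S)>D/3$ and $d(S,t)>D/3$.

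\textbf{Main case.} Using the searches from $S$, pick $w$ maximizing $d(w,S)$, so $h:=d(w,S)\ge d(s,S)>D/3$. Run an out-Dijkstra from $w$; if it does not already give $2D/3$, then $d(w,t)<2D/3$. Walk along a shortest $w\to t$ path and let $c$ be the last vertex with $d(w,c)<h$, i.e.\ the last vertex in $B_S(w)$. Let $c'$ be its successor, so $c'\in B^+_S(w)$ and $d(w,c')\ge h>D/3$. Then
\[
d(c',t)=d(w,t)-d(w,c')<2D/3-D/3=D/3,
\]
so an in-Dijkstra from $c'$ finds $d(s,c')\ge D-d(c',t)>2D/3$. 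If $c'\in S$, this case was already handled. Crossing to the out-neighbour $c'$ instead of stopping at $c$ is exactly what removes the additive $M$ term that appears in \cite{rv13}.

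\textbf{Main obstacle.} The argument needs in-searches from all of $B^+_S(w)$, and that set can have size $\sum_{y\in B_S(w)}\deg(y)$, far more than $\ell$. \autoref{clm:totalsizeofplusballs} only bounds these sets on average, so the choice of $w$ must be made more cleverly. I would first try to avoid searching from $B^+_S(w)$ at all by working from the $t$ side. Since $d(c',t)<D/3<d(S,t)$, we have $c'\in B^R_S(t)$. For every $u$, compute all distances from $B^{R+}_S(u)$ to $u$ with local truncated in-searches; by the cluster bounds and \autoref{clm:totalsizeofplusballs} this totals $O(m\ell)$. This lets us identify the relevant boundary vertices and charge them to in-balls rather than to $B^+_S(w)$. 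The aim is to extract a set of $O(n/\ell\cdot\log n)$ or $O(\ell^2)$ candidate sources, each then searched exactly, which stays within the $O(mn^{2/3}\log^{2/3}n)$ budget since $\ell^2\le n^{2/3}\log^{2/3}n$.

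If that charging fails, the fallback is a second level. Take a set $T$ hitting the $\ell^2$-nearest out-neighbourhoods, and choose $w$ by maximizing distance to $T$ among vertices whose $B^+$ is small. This exploits that a vertex whose ball has large total degree is close, in edge count, to many vertices and hence to $S$. This balancing is the step I expect to require the most care.
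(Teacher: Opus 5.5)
\textbf{There is a genuine gap in the main case.} Your easy cases are right, and so is the crossing-edge observation ($c\in B_S(w)$, $c'\in B^+_S(w)$, $d(c',t)<D/3$). But the argument as written needs an in-Dijkstra from every vertex of $B^+_S(w)$, and that set can have size $\Theta(n)$. A single high-out-degree vertex of $B_S(w)$ suffices, for instance $w$ itself, whose out-edges may be heavy. This costs $\Theta(mn)$, because \autoref{clm:totalsizeofplusballs} only bounds $\sum_v |B^+_S(v)|$, not one particular $B^+_S(w)$. Neither repair closes the gap:
\begin{itemize}
\item The ``$t$ side'' charging cannot single out the useful $c'$. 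The only thing distinguishing it from the rest of $B^+_S(w)$ is membership in $B^R_S(t)$ for the unknown endpoint $t$. You also give no argument that some set of $O(\frac{n}{\ell}\log n)$ or $O(\ell^2)$ candidates must contain a good source.
\item The fallback maximizes only over vertices with small $B^+$. This destroys the extremal property that the chosen $w$ is at least as far as $s$ from the hitting set, since $s$ itself may be excluded. The heuristic that large total degree forces closeness to $S$ also fails with arbitrary edge weights.
\end{itemize}

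\textbf{The missing idea.} Use your observation as a certificate precomputed for all pairs, not as a place to launch searches. Define $\hat d(u,v)=\min\{d(u,x)+w(x,y)+d(y,v): x\in B_S(u),\ (x,y)\in E,\ y\in B^R_S(v)\}$, with $\infty$ if no such path exists. Compute it by enumerating $u$, $x\in B_S(u)$, $y\in N(x)$, $v\in C^R_S(y)$. By \autoref{clm:totalsizeofplusballs} and the cluster bound this costs $O(m\ell^2)$, which is exactly where your $\ell^2$ budget goes.

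Always $\hat d\ge d$. Your $c,c'$ argument shows that if $h:=d(u,S)\le d(S,v)$ and $d(u,v)<2h$, then $\hat d(u,v)=d(u,v)$.

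Now set $\hat\epsilon(u)=\max\{\hat d(u,v): d(S,v)\ge d(u,S)\}$ and choose $w$ maximizing $\min(3d(w,S),\hat\epsilon(w))$, plus the symmetric reversed choice $w^R$. Suppose $d(S,t)\ge d(s,S)>D/3$. Then $s$ scores at least $D$, because $\hat d(s,t)\ge D$. So the chosen $w$ has $d(w,S)\ge D/3$ and a witness $v$ with $d(S,v)\ge d(w,S)$ and $\hat d(w,v)\ge D$. Hence $d(w,v)\ge 2D/3$, and one Dijkstra from $w$ finds it. The other orientation uses one Dijkstra into $w^R$. No search from $B^+_S(w)$ is ever needed.
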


Recall the $3/2$ diameter approximation algorithm of Roditty and Vassilevska W. \cite{rv13}. Take a set $S$ of size $\tilde{O}(n/\ell)$ such that every $v\in V$ has $|B_S(v)|=O(\ell)$. Run Dijkstra's algorithm to and from every vertex in $S$ and let $w$ be the vertex maximizing $d(w,S)$. Finally, run Dijkstra's from $w$ and into every vertex in $B_S(w)$ and return the largest distance found. By setting $\ell = \sqrt{n}$ the algorithm achieves a runtime of $\tilde{O}(m\sqrt{n})$. 

The approximation guarantee comes from the following argument. Consider a pair of diameter endpoints $d(s,t)=D$. If $d(s, S) \leq \frac{D}{3}$, then there exists a point $x\in S$ with $d(x,t) \geq \frac{2D}{3}$, which gives us our desired approximation. Otherwise, $d(w, S) > \frac{D}{3}$ as it maximizes this distance, meaning $B_S(w)$ contains all points of distance $\leq \frac{D}{3}$ from $w$. If $w$ has distance greater than $\frac{2D}{3}$ to $t$ we are done, otherwise pick a vertex on the shortest path between $w$ and $t$ that is within distance $\frac{D}{3}$ from $w$ - meaning it is contained in $B_S(w)$ - and to $t$ - meaning it has distance $\geq \frac{2D}{3}$ from $s$. Thus, running Dijsktra's from all vertices in $B_S(w)$ runs a search out of this vertex and thus guarantees finding a large enough distance. However, such a vertex does not necessarily exist, as on the shortest path between $w,t$ there could be an edge $x\sim y$ such that $d(w,x), d(y,t) < \frac{D}{3}$ while $d(x,t),d(w,y) > \frac{D}{3}$. This causes the algorithm to incur an additive error proportional to the weight of the edge $x\sim y$, which could be arbitrarily large.

To address this additive error, Chechik et al. \cite{chechikdiam}  perform a binary search to have an approximate value of the diameter and introduce various new tools in a relatively involved algorithm running in $O(mn^{2/3}\log^{5/3}n)$ time. 

We instead propose a simple algorithm, hinging on the observation that for any pair of vertices $u,v$ that have $d(u,S),d(S,v)>D/3$, if the distance between them is smaller than our target of $2D/3$, then there exists an edge connecting their $S$-balls to each other. We adapt the $(2,1)$-distance oracle of \patrascu{} and Roditty \cite{beyondTZ14} to compute the distances between such pairs of points. 

\begin{proof}[Proof of \autoref{thm:diam32approx}:]
We are now ready to formally state our algorithm. See Algorithm \ref{alg:diam32approx} for the full pseudo-code. 

Using \autoref{thm:ballsclusters}, construct a set $S$ of size $O(\frac{n\log n}{\ell})$, using a parameter $\ell$ to be set later, such that all ingoing and outgoing $S$-balls and clusters are of size $O(\ell)$. Compute $B_S^+(v)$ for every $v\in V$ and run Dijsktra's algorithm to and from all vertices in $S$. 

Next, we compute the distances obtained from paths of the form $u\rightsquigarrow x \to y \rightsquigarrow v$ where $x\in B_S(u),y\in B^R_S(v)$ and $(x,y)\in E$. Denote by $\hat{d}(u,v)$ the length of the shortest path of this form from $u$ to $v$ and set $\hat{d}(u,v)=\infty$ if no such path exists, note that in general $\hat{d}(u,v)\geq d(u,v)$. For every vertex $u$, we compute all $\hat{d}(u,v)$ by scanning $B_S(u)$. For each $x\in B_S(u)$, each $y\in N(x)$ and each $v\in C_S^R(y)$ we compute $\hat{d}(u,v)\leftarrow \min(\hat{d}(u,v), d(u,x) + w(x,y) + d(y,v))$. In fact, this is an adaptation to the preprocessing step of the $(2,1)$ distance oracle of \patrascu{} and Roditty \cite{beyondTZ14}, which computes the distances between pairs of points whose $S$-balls intersect in the undirected setting. We can think of this step as the preprocessing of an approximate distance oracle where we later obtain the values $\hat{d}(u,v)$ by querying the oracle.

Now, for every vertex $w$ compute the largest value $\hat{d}(w,v)$ for a vertex $v$ such that $d(w,S)\leq d(S,v)$. Denote this value by $\hat{\epsilon}(w)$. Similarly, define $\hat{\epsilon}^R(w)$ to be the largest value $\hat{d}(v,w)$ for a vertex $v$ such that $d(S,w)\leq d(v,S)$. We choose the vertex $w$ that maximizes the expression $\min(3d(w,S), \hat{\epsilon}(w))$ and the vertex $w^R$ that maximizes the expression $\min(3d(S,w^R), \hat{\epsilon}^R(w)^R)$. We claim that running Dijkstra's from $w$ and to $w^R$ will result in the desired approximation.


\begin{algorithm}
     \caption{(3/2,0)-Diameter Approximation}\label{alg:diam32approx}
     \begin{algorithmic}[1]
        \item \textbf{Input:} Weighted, directed graph $G = (V, E)$.
        \item \textbf{Output:} Diameter approximation $D\geq \tilde{D}\geq \frac{2D}{3}$.
        \State $S\leftarrow$ set of size $O(\frac{n}{\ell}\log n)$ such that $\forall v\in V~~|B_S(v)|,|C_S(v)|,|B_S^R(v)|,|C_S^R(v)|=O(\ell)$.
        \State Compute $B_S^+(v),B_S^R(v)~\forall v\in V$ and run Dijkstra's to and from every $x\in S$.\label{ln:searchfromS}
        \For{$u\in V,~x\in B_S(u),~y\in N(x),~v\in C_S^R(y)$}
            \State $\hat{d}(u,v)\leftarrow \min(\hat{d}(u,v), d(u,x)+w(x,y)+d(y,v))$.
        \EndFor
        \For{$w\in V$}
            \State $\hat{\epsilon}(w)\leftarrow \max_{v:d(w,S)\leq d(S,v)}\hat{d}(w,v)$.\label{ln:computeeps}
            \State $\hat{\epsilon}^R(w)\leftarrow \max_{v:d(S,w)\leq d(v,S)}\hat{d}(v,w)$.\label{ln:computeepsrev}
        \EndFor
        \State $w\leftarrow \arg \max_w \min(3d(w,S), \hat{\epsilon}(w))$.\label{ln:selectw}
        \State $w^R\leftarrow \arg \max_w \min(3d(S,w), \hat{\epsilon}^R(w))$.\label{ln:selectwrev}
        \State Run Dijsktra's algorithm from $w$ and to $w^R$. \label{ln:lastdijkstra}
        \State \Return Largest distance computed in one of the Dijkstra searches.
        \end{algorithmic}
\end{algorithm}

\paragraph{Correctness:} Clearly the value $\tilde{D}$ returned by the algorithm satisfies $\tilde{D}\leq D$, as it is a distance in the graph. We are left to show that $\tilde{D}\geq 2D/3$.

Fix a pair of diameter endpoints $d(s,t)=D$. If $d(s,S)\leq \frac{D}{3}$, then there exists a point $x\in S$ such that $d(s,x)\leq \frac{D}{3}$ and so by the triangle inequality $d(x,t)\leq \frac{2D}{3}$. Thus, the algorithm will have found a sufficiently large distance in line \ref{ln:searchfromS}. Similarly, if $d(S,t)\leq \frac{D}{3}$ we are done, so we can assume that $d(s,S), d(S,t)>\frac{D}{3}$.

W.l.o.g we can assume $d(s,S)\geq d(S,t)$, as the reverse case is symmetric. Thus we have that $\hat{\epsilon}(s)\geq \hat{d}(s,t)\geq d(s,t)=D$. Since $d(s,S)>\frac{D}{3}$, we conclude that line \ref{ln:selectw} selects a $w$ such that,
\[
\min (3d(w,S), \hat{\epsilon}(w)) \geq \min (3d(s,S), \hat{\epsilon}(s)) \geq D.
\]
Therefore, $d(w, S)\geq \frac{D}{3}$ and $\hat{\epsilon}(w)\geq D$, meaning there exists a vertex $v$ with $d(S,v)\geq d(w,S)>\frac{D}{3}$ such that $\hat{d}(w, v) \geq D$. We claim that $d(w, v) \geq \frac{2D}{3}$ and so running Dijkstra's from $w$ will obtain the desired approximation.

Indeed, if $d(w, v) < \frac{2D}{3}$ then there exists an edge on the shortest path between $w$ and $v$, $(x,y)\in E$ such that $d(w, x) < \frac{D}{3}$ and $d(y, v)< \frac{D}{3}$. Since $d(v, S)\geq d(w, S)\geq \frac{D}{3}$ we have that $x\in B_S(w)$ and $y\in B_S^R(v)$. Therefore the path $w\rightsquigarrow x \to y \rightsquigarrow v$ would have been considered in the computation of $\hat{d}(w, v)$ and we would have $\hat{d}(w, v)\leq \frac{2D}{3}$, contradiction. We conclude that $d(w,v)\geq \frac{2D}{3}$ and the largest distance found from running Dijkstra's out of $w$ obtains the desired approximation.

\paragraph{Runtime:} Computing the set $S$, as well as all $B_S^+,B_S^R$ balls and the distances within them, takes $O(m\ell \log n)$ time. Running Dijsktra's to and from every vertex in $S$ takes $O(\frac{mn}{\ell}\log n)$ assuming $m = \Omega(n\log n)$. If this is not the case and $m=o(n\log n)$ we can compute a $(3/2,0)$-approximation in $\tilde{O}(m^{3/2})\leq O(mn^{2/3})$ time \cite{chechikdiam}.

Computing the values $\hat{d}(u,v)$ takes time:
\[
\sum_{v\in V}\sum_{y\in B_S^+(v)}|C_S(y)|\leq \sum_{v\in V}|B_S^+(v)|\cdot O(\ell)\leq O(m\ell^2).
\]

The values $\hat{\epsilon}(v),\hat{\epsilon}^R(v)$ can be computed at the same time as the $\hat{d}$ values by sorting the vertices in order of distance to/from $S$ ahead of time and only considering the $\hat{d}$ values computed. If not all relevant vertices $v$ have a value $\hat{d}(w,v)$ or $\hat{d}(v,w)$ computed, we assign the value of $\infty$. Finally, lines \ref{ln:selectw}, \ref{ln:selectwrev} and \ref{ln:lastdijkstra} run in near linear time. 

The total runtime of the algorithm therefore comes out to $O(m\ell\log n + \frac{mn}{\ell}\log n + m\ell^2)$. Setting $\ell = (n\log n)^{1/3}$ we get a runtime of $O(mn^{2/3}\log^{2/3}n)$.

\end{proof}

\subsection{New $(5/3,7/3)$-Diameter Approximation}
In the next section we provide the first diameter approximation algorithm with multiplicative error not of the form $2-\frac{1}{2^k}$. Our algorithm runs in time $\tilde{O}(nm^{3/5})$ and obtains a multiplicative error of $5/3$. In the worst case, the algorithm has an additive error of $+7/3$. This algorithm is faster than the current best $3/2$-approximation algorithms (with additive errors) in the regime where $n^{5/4}\leq m \leq n^{5/3}$.

The algorithm uses the structure of Akav and Roditty's $(2+\varepsilon, 5)$-distance oracle \cite{akavroditty2do2020}, together with the balls and clusters techniques of classic diameter approximation algorithms. As we cannot afford to preprocess the entire $(2+\varepsilon, 5)$-approximate distance oracle, we construct a partial distance oracle treating only pairs of points with one endpoint in a particular set. While the original work is randomized, we use a deterministic hitting set construction while building the distance oracle to achieve a deterministic diameter approximation algorithm.

\begin{theorem}\label{thm:diam53approx}
    Given an unweighted, undirected graph with diameter $D$, one can compute in $O(nm^{3/5}\log^{8/5} n)$ time an estimate $\tilde{D}$ satisfying $\frac{3D}{5} - \alpha\leq \tilde{D}\leq D$ for $\alpha =\max(\frac{6}{5}, \frac{5}{3}-\frac{D}{15}))$.
\end{theorem}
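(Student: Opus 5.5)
We follow the template of \autoref{thm:diam32approx}, with a three-level hierarchy in the spirit of the Akav--Roditty $(2+\varepsilon,5)$ oracle. Fix diameter endpoints $s,t$ with $d(s,t)=D$, and write $D=5h+z$. Choose two hitting sets, each built deterministically. The first is a set $S_2$ of size $\tilde O(n/\ell_2)$, obtained by combining \autoref{lm:getneighbors} with \autoref{lm:determhit}, so that every $B_{S_2}(v)$ has size $O(\ell_2)$. The second is a set $S_1$ of size $\tilde O(n/\ell_1)$, obtained via \autoref{thm:ballsclusters}, so that all $S_1$-balls and $S_1$-clusters have size $O(\ell_1)$. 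By \autoref{clm:totalsizeofplusballs}, we can then afford to compute every $B_{S_1}^+(v)$.

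We run BFS from every vertex of $S_2$. If $d(s,S_2)\le 2D/5$ up to rounding, the triangle inequality gives a vertex of $S_2$ at distance $\ge 3D/5-O(1)$ from $t$, and we are done. We then handle $S_1$ with a partial oracle, since we cannot afford BFS from all of $S_1$. Let $w$ be the vertex farthest from $S_2$, and run BFS from $w$ and from all of $B_{S_2}(w)$, at cost $O(m\ell_2)$. If $s$ is far from $S_2$, so is $w$, and hence $B_{S_2}(w)$ contains the whole $\approx 2D/5$-ball around $w$. Walking along a shortest $w$--$t$ (or $w$--$s$) path, as in the proof of \autoref{lm:approxstep}, we either find a far vertex inside $B_{S_2}(w)$ or reduce to the case $d(s,S_1)\approx D/5$. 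That last case still requires a distance between a vertex of $S_1$ near $s$ and $t$, and it must be certified without BFS from all of $S_1$.

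For this case we build the Akav--Roditty-style partial oracle. For each $p\in S_1$ (more precisely, for the pairs restricted to one endpoint set), we estimate $\hat d(p,v)$ through paths $p\rightsquigarrow x\to y\rightsquigarrow v$ with $x\in B_{S_1}(p)$, $y\in B_{S_1}(v)$ and $(x,y)\in E$, exactly as in the $3/2$ algorithm, plus paths routed through the $S_2$-pivot of $p$ at a $+O(1)$ additive loss. This yields an estimate with stretch roughly $(5/3,O(1))$ relative to balls of radius about $D/5$. We then select $w'$ maximizing $\min(c\cdot d(w',S_1),\hat\epsilon(w'))$ for a suitable constant $c$ (about $5$), and run a single BFS from $w'$. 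The argument mirrors the proof of \autoref{thm:diam32approx}. If $d(w',v)<3D/5-\alpha$, then the shortest $w'$--$v$ path decomposes into a ball segment, one edge, and a ball segment, or into segments passing near an $S_2$ vertex whose BFS tree is known. Either decomposition contradicts $\hat d(w',v)\ge D$. The additive term $\alpha=\max(6/5,\,5/3-D/15)$ should come out of the floors in $2D/5$ and $D/5$ together with the extra edge and the pivot steps, and we check it by cases on $z$.

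The running time is $O(m\cdot n/\ell_2\cdot\log n)$ for the BFS from $S_2$, $\tilde O(m\ell_1)$ for the balls, and $\tilde O(|S_1|\cdot \ell_1\cdot m/n\cdot\ell_1)$ or similar for the partial oracle over pairs restricted to $S_1$. Balancing these terms with $m$-dependent choices of $\ell_1,\ell_2$ should give $nm^{3/5}$ up to the $\log^{8/5}n$ factor. The main obstacle will be the definition of the partial oracle: it must be small enough to give the $nm^{3/5}$ bound, yet strong enough that the pivot step loses only the stated additive constant. I would first copy the Akav--Roditty query case analysis, covering both "balls adjacent" and "pivot close", and then tune the threshold constants so that each case yields $3D/5-\alpha$.
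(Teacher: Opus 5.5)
Your proposal has a genuine gap in the running time, which is the whole content of the theorem. Your first two steps are BFS from every vertex of $S_2$, costing $\tilde\Theta(mn/\ell_2)$ since $|S_2|=\tilde\Theta(n/\ell_2)$, and BFS from all of $B_{S_2}(w)$, costing $\Theta(m\ell_2)$; your final accounting drops this second term. Together they are just the Roditty--Vassilevska W.\ tradeoff and cost $\Omega(m(n/\ell_2+\ell_2))=\Omega(m\sqrt n)$ for every $\ell_2$. That exceeds $nm^{3/5}$ exactly when $m>n^{5/4}$, which is the regime where the theorem improves on known $3/2$-approximations, so no choice of $\ell_1,\ell_2$ rescues it.

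In fact these two steps already finish correctness on their own. If $d(s,S_2)>2D/5$, then $B_{S_2}(w)$ contains the $2D/5$-ball around $w$. If moreover $d(w,t)<3D/5$, the vertex $y$ on a shortest $w$--$t$ path with $d(w,y)=\min(\lfloor 2D/5\rfloor,d(w,t))$ has $d(y,t)<D/5+1$, hence $d(y,s)>4D/5-1$. So the ``remaining case'' your partial oracle is meant to treat never arises, which signals that your two sets are parameterized the wrong way round. The oracle is also not well defined as written. For $p\in S_1$ the ball $B_{S_1}(p)$ is empty. And when both endpoints are only about $D/5$ from $S_1$, a path shorter than $3D/5$ leaves an uncovered middle stretch of length about $D/5$, not a single edge, so the $3/2$-style ``ball, edge, ball'' contradiction does not go through.

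The missing idea is to swap the roles of the two sets. The paper runs BFS from every vertex of a \emph{large} set $S_1$ of size $O(\frac{n}{\ell}\log n)$; this handles the case where $d(s,S_1)$ or $d(t,S_1)$ is at most $2D/5+\alpha$. It pairs this with a \emph{small} set $S_2$ of size $O(\ell\log n)$ whose balls have size $O(n/\ell)$, built by \autoref{thm:ballsclusters} with parameter $n/\ell$. BFS from $B_{S_2}(w)$ then also costs $O(mn/\ell)$ and handles $d(s,S_2)>D/5+1-2\alpha$. In the leftover case some $x\in S_2$ has $d(x,t)\ge 4D/5-1+2\alpha$ and $d(x,S_1)+d(t,S_1)>3D/5-1+4\alpha$. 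No BFS in $G$ from a selected vertex is needed here: the algorithm directly outputs the \emph{lower bound} $\min(d_H(x,t)-4,\,d(x,S_1)+d(t,S_1)-5)\le d(x,t)$.

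Here $H$ consists of all edges at vertices of degree at most $L$, plus shortest-path trees of $B_{S_1}(u)$ for $u$ in a deterministic hitting set $T$ of the high-degree neighborhoods. $S_1$ is built to hit the $\ell$-nearest sets of $T$, which is what makes those balls small deterministically. So $H$ has $O(nL+\frac{n\ell}{L}\log n)$ edges, and BFS in $H$ from the $O(\ell\log n)$ vertices of $S_2$ is cheap. The lower bound comes from a two-case argument:
\begin{itemize}
\item If some shortest $u$--$v$ path has five vertices in $B_{S_1}(u)\cap B_{S_1}(v)$, then routing through a $T$-neighbor of the first high-degree vertex on each half gives $d_H(u,v)\le d(u,v)+4$.
\item Otherwise, $d(u,S_1)+d(v,S_1)\le d(u,v)+5$.
\end{itemize}
Balancing $\frac{mn}{\ell}$, $nL\ell$ and $\frac{n\ell^2}{L}$ (up to logarithms) gives the $nm^{3/5}$ bound.
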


The above algorithm gives us a value $\tilde{D}$ such that $D\leq \frac{5\tilde{D}}{3}+ \frac{5\alpha}{3}\leq D + \frac{5\alpha}{3}$, i.e. a $\left(\frac{5}{3}, \max(2, \frac{25-D}{9})\right)$-approximation to the diameter.
If the diameter of the graph is large enough, $D\geq 7$, we have $\alpha = \frac{6}{5}$ and obtain a $(5/3, 2)$ approximation. In the worst case we have $D=4$, as $D=1$ or $D=2$ can be checked in linear time and if $D=3$ a single BFS finds a distance of $\geq 2=\frac{2D}{3}$. In this case we get $\alpha = \frac{7}{5}$ 
and obtain a $(5/3,7/3)$-approximation.\\

The idea of our algorithm is as follows. We construct sets $S_1$, $S_2$ of size $\tilde{O}(n/\ell)$ and $\tilde{O}(\ell)$ respectively. We can afford to run BFS from all the points in $S_1$ and from a ball $B_{S_2}(w)$ of some vertex $w$ in time $\tilde{O}(\frac{mn}{\ell})$. Using the previous ideas for diameter approximations, this gives us the desired approximation if a diameter endpoint is within distance $\frac{2D}{5}$ of $S_1$ or further than $\frac{D}{5}$ away from $S_2$.

For the remaining case, we show that we have a pair of points $x\in S_2$ and $t\in V$ such that both $d(x,t)$ and $d(x, S_1) + d(t,S_1)$ are large. We now construct a restricted, deterministic version of Akav and Roditty's $(2+\varepsilon, 5)$-approximate distance oracle to handle such pairs. The idea of this distance oracle is to construct a spanner $H$ that guarantees an additive 2 approximation to paths that are fully contained within a single $S_1$-ball, and thus a +4 approximations to distances between points whose $S_1$-balls intersect. We now compute all distances out of $S_2$ using the edges of the spanner $H$, this guarantees that the distance obtained, $d_H(u,v)$ is a good approximation for $d(u,v)$ whenever the $S_1$ balls of $u$ and $v$ intersect. If they don't intersect, we can approximate the distance between $u$ and $v$ with $d(u,S_1) + d(v, S_1)$. The distance oracle returns the minimum of these two estimates, to guarantee a lower bound to the true distance between $u$ and $v$. We show for the pair of points $x$ and $t$ mentioned above that this estimate provides the desired approximation.

The construction of the spanner $H$ is also randomized in the original work of \cite{akavroditty2do2020}. The idea of the spanner is to take a hitting set $T$ of size $\tilde{O}(\frac{n}{L})$ that hits the neighborhood of all vertices of degree $>L$. We then compute $B_{S_1}(v)$ for every $v\in T$ and add a shortest path tree spanning  $B_{S_1}(v)$ to $H$. We complete $H$ by adding all edges adjacent to vertices of degree $\leq L$. 

We can derandomize this construction by carefully constructing the sets $T,S_1, S_2$ in a particular order. We first construct $T$ as a deterministic hitting set to all neighborhoods of vertices of degree $>L$. We then compute the $\ell$-nearest neighbors of each vertex in $T$ and construct $S_1$ to be a hitting set for these neighborhoods, to guarantee that all vertices in $T$ have small $S_1$-balls. Finally, for $S_2$ we can afford to use \autoref{thm:ballsclusters} and construct a set such that all vertices have small $S_2$-balls.

\begin{proof}[Proof of \autoref{thm:diam53approx}:] We can now formally state our algorithm. For full pseudo-code, see Algorithm \ref{alg:diam53approx}.

    Let $\ell,L$ be parameters to be set later. Using \autoref{lm:determhit}, construct a set $T$ of size $O(\frac{n}{L}\log n)$ in time $O(nL)$ that hits the neighborhood of every vertex of degree $> L$. For every $u\in T$ compute $W(u)$ to be the $\ell$ closest nodes to $u$. Now, again using \autoref{lm:determhit}, construct a hitting set $S_1$ of size $O(\frac{n}{\ell}\log n)$ in time $O(n\ell)$ to hit $W(u)$ for every $u\in T$. Note that in this case, for every $u\in T$ we have that $|B_{S_1}(u)| \leq |W(u)|= O(\ell)$. Using \autoref{thm:ballsclusters}, construct a set $S_2$ of size $O(\ell\log n)$ in time $O(\frac{mn}{\ell} \log n)$ such that all vertices have $|B_{S_2}(v)|= O(\frac{n}{\ell})$.

    Following the approach of previous diameter approximation algorithms, begin by running BFS from all vertices in $S_1$ and from the set $S_2$. Let $w$ be the furthest point from $S_2$ and run BFS from every point in $B_{S_2}(w)$. 
    
    Fix a pair of diameter endpoints $d(s,t)=D$. If $d(t, S_1)\leq \frac{2D}{5} +\alpha$, then there exists a point $x\in S_1$ such that $d(x,t)\leq \frac{2D}{5}+\alpha$ and so $d(x,s)\geq \frac{3D}{5}-\alpha$ and running BFS from $S_1$ achieves the desired approximation. Similarly, if $d(s, S_1) \leq \frac{2D}{5} + \alpha$ we are done.
    
    Continuing along the analysis of previous diameter approximations, if $d(s,S_2) > \frac{D}{5} + 1- 2\alpha$, then $d(w,S_2) > \frac{D}{5} + 1- 2\alpha$. If $d(w,s) \leq \frac{3D}{5}-\alpha$, then there exists a vertex $y$ on the shortest path between $w$ and $s$ such that $d(w,y) \leq \frac{D}{5} + 1-2\alpha$ and $d(y,s) \leq \frac{2D}{5} + \alpha$. Therefore $y\in B_{S_2}(w)$ and $d(y,t) \geq \frac{3D}{5}-\alpha$, thus running BFS from all points in $B_{S_2}(w)$ obtains the desired approximation.

    We are left to handle the case when $d(s, S_1),d(t, S_1) > \frac{2D}{5} + \alpha$ and $d(s, S_2) \leq \frac{D}{5} + 1 - 2\alpha$. In this case, there exists a vertex $x\in S_2$ such that $d(s,x) \leq \frac{D}{5} + 1 - 2\alpha$, and so $d(x,t) \geq \frac{4D}{5} - 1 + 2\alpha$. Furthermore, since $d(s, S_1) > \frac{2D}{5} + \alpha$, by the triangle inequality we know that $d(x, S_1) \geq d(s,S_1)-d(x,s)>\frac{D}{5}-1 + 3\alpha$.\\

    To handle this final case we construct a restricted, deterministic version of Akav and Roditty's $(2+\varepsilon, 5)$-approximate distance oracle to compute an approximation for distances from $S_2$. We will simply compute an estimate to these distances, but we can think of this step as constructing a distance oracle and querying $\tilde{d}(u,v)$ for every $u\in S_2, v\in V$.

    Construct the following spanner $H$. Initialize the edges of $H$ to contain all edges adjacent to vertices of degree $\leq L$. For every $u\in T$, add a spanning tree of $B_{S_1}(u)$ rooted at $u$ to $H$. As each $S_1$-ball of a vertex in $T$ is of size $O(\ell)$ this results in $H$ having $O(L + \frac{n\ell}{L}\log n)$ edges. Run BFS from every $x\in S_2$ in the spanner $H$ to compute $d_H(x,v)$ for every $x\in S_2, v\in V$. Using a series of claims we will now show that the following distance approximation is a lower bound to the true distance between a pair of points.
    \begin{lemma}\label{lm:distanceestimate}
        \[
        \tilde{d}(u,v) \coloneqq \min(d_{H}(u,v)-4, d(u, S_1) + d(v,S_1)-5)\leq d(u,v).
        \]
    \end{lemma}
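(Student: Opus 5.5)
The claim is that the minimum of the two estimates is at most $d(u,v)$. So it suffices to show that for every pair $u,v$ at least one of the two estimates is at most $d(u,v)$. I will split on whether the $S_1$-balls of $u$ and $v$ "touch" along a shortest path. Fix a shortest $u$--$v$ path $P$ in $G$ and let $d=d(u,v)$.

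\textbf{Case 1: the path leaves both balls.} Suppose some vertex $z$ on $P$ satisfies $d(u,z)\ge d(u,S_1)$ and $d(z,v)\ge d(v,S_1)$. Then $d(u,S_1)+d(v,S_1)\le d(u,z)+d(z,v)=d$, so the second term is already $\le d-5\le d$.

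More generally, the second estimate suffices unless the path is "covered" by the two balls. Concretely, the complementary situation is that the last vertex $x$ on $P$ with $d(u,x)<d(u,S_1)$ and its successor $y$ satisfy $d(y,v)<d(v,S_1)$. In other words, $P$ splits into a prefix inside $B_{S_1}(u)$ and a suffix inside $B_{S_1}(v)$, joined by one edge. If instead some vertex lies in neither ball, we are in Case 1. Since the graph is unweighted, a small amount of slack may be needed at the boundary, but the $-5$ gives plenty of room.

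\textbf{Case 2: $P$ is covered by the two balls.} This is the main step. I will show $d_H(u,v)\le d+4$ by proving an additive $+2$ bound for each of the two halves, plus the connecting edge. The core subclaim is the following.

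\emph{Subclaim.} If $a\in V$ and $b$ lies on a shortest path from $a$ with every vertex of that subpath in $B_{S_1}(a)$, then $d_H(a,b)\le d(a,b)+2$.

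I plan to prove the subclaim by walking along the path. If every vertex on the path has degree $\le L$, all its edges are in $H$ and we are done with no loss. Otherwise, let $z$ be the vertex of degree $>L$ on the path that is closest to $b$. The subpath from $z$ to $b$ uses only edges incident to low-degree vertices, except the first edge, which is incident to $z$ itself. That is fine if the next vertex has low degree; if not, I choose $z$ as the last heavy vertex, so every later edge touches a light vertex and lies in $H$.

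Now $z$ has a neighbor $t\in T$. The vertex $z$ lies in $B_{S_1}(a)$, and I want it also to lie in $B_{S_1}(t)$ so that the shortest path tree of $t$ contains a path to $z$. Instead, I will route through $t$'s tree toward $a$: since $d(t,a)\le d(a,z)+1$ and $d(a,S_1)>d(a,z)$, we get $d(t,S_1)\ge d(a,S_1)-d(a,t)$. Checking that $a\in B_{S_1}(t)$ is the delicate point.

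A cleaner route is to use $B_{S_1}(t)\ni z$ and apply the symmetric argument from the $a$ side. Taking $z$ as the first heavy vertex from $a$ gives $d_H(a,\cdot)$ exact up to $z$. For the rest, I will use $t$'s tree: for vertices $b'$ on the path after $z$ with $d(t,b')<d(t,S_1)$, we have $d_H(t,b')=d(t,b')\le 1+d(z,b')$. This yields $d_H(a,b')\le d(a,z)+1+1+d(z,b')=d(a,b')+2$. The condition $d(t,b')<d(t,S_1)$ follows from $d(t,b')\le 1+d(z,b')$ together with $d(t,S_1)\ge d(z,S_1)-1$. Since $b'$ is in the ball of the path's origin, this requires comparing $d(z,b')$ with $d(z,S_1)$, which holds because $z$ lies on a shortest path from $a$ and $d(a,b')<d(a,S_1)$ implies $d(z,b')<d(z,S_1)$. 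Off-by-one issues here are exactly where I expect the difficulty, and they may force using the boundary slack.

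Applying the subclaim from $u$ to $x$ and from $v$ to $y$, and adding the edge $(x,y)$, gives the bound $d_H(u,v)\le d+4$, which will need the first estimate's $-4$. The edge $(x,y)$ is in $H$ if one endpoint is light; otherwise it is handled by absorbing it into one side's argument.
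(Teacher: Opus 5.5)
Your Case~2 is where the argument fails. Being covered by the two balls does \emph{not} imply $d_H(u,v)\le d(u,v)+4$, and your subclaim (if the $a$--$b$ subpath lies in $B_{S_1}(a)$ then $d_H(a,b)\le d(a,b)+2$) is false. Your own inequalities show why: $d(t,b')\le 1+d(z,b')$, $d(t,S_1)\ge d(z,S_1)-1$ and $d(z,b')\le d(z,S_1)-1$ only give $d(t,b')\le d(z,S_1)$. That is two short of $d(t,b')<d(t,S_1)$, and no boundary slack inside Case~2 recovers it. Concretely, take edges $ab,ap,ps,bp',p's'$, hang $L$ pendant leaves on each of $a$ and $b$, and let $T=\{p,p'\}$, $S_1=\{s,s'\}$ (these meet the hitting requirements for $\ell\ge 3$). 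Then $d(a,S_1)=d(b,S_1)=2$, so the path $(a,b)$ lies in $B_{S_1}(a)\cap B_{S_1}(b)$. But $B_{S_1}(p)=\{p\}$ and $B_{S_1}(p')=\{p'\}$, the edge $ab$ has two endpoints of degree $>L$ and is not in $H$, and $d_H(a,b)=\infty$. The lemma holds here only through the second estimate, $2+2-5\le 1$. Your Case~1 never reaches this, because you use the second estimate only when $d(u,S_1)+d(v,S_1)\le d(u,v)$.

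The missing idea is to spend the $-5$ on moving the case boundary, so that $H$ only has to handle pairs whose balls overlap deeply along $P$. Split on whether $P$ has a vertex $z$ with $d(u,S_1)\ge d(u,z)+3$ and $d(v,S_1)\ge d(z,v)+3$. This is the paper's ``five vertices of a shortest path in $B_{S_1}(u)\cap B_{S_1}(v)$'', with $z$ the middle one.

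If no such $z$ exists, a short triangle-inequality argument gives $d(u,S_1)+d(v,S_1)\le d(u,v)+5$.

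If it does, let $x_i$ be the first vertex of degree $>L$ on $P$ before $z$ (if there is none, $d_H(u,z)=d(u,z)$), and let $p\in T$ be a neighbor of $x_i$. Then $d(p,S_1)\ge d(u,S_1)-(i+1)\ge d(u,z)-i+2>d(u,z)-i+1\ge d(p,z)$. So $z\in B_{S_1}(p)$ and $d_H(u,z)\le i+1+d(p,z)\le d(u,z)+2$. Symmetrically, $d_H(z,v)\le d(z,v)+2$. Both halves end at the common vertex $z$, so no connecting edge is needed, which also removes your problem with a heavy edge $(x,y)$.

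Stating the split through these distance inequalities, rather than by counting five vertices on the $u$--$v$ path, also covers a boundary case the counting version misses. If $P$ lies entirely in $B_{S_1}(u)$, for example on a path graph whose only $S_1$-vertex is four steps beyond $v$, then at most four vertices of $P$ lie in both balls while $d(u,S_1)+d(v,S_1)=d(u,v)+8$.
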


    We prove this lemma by considering two cases. Note that we only use $B_{S_1}(u),B_{S_1}(v)$ for analysis and don't have to compute these sets at any point of the algorithm.
    \begin{claim}
        If no shortest path between $u,v$ contains five vertices in $B_{S_1}(u)\cap B_{S_1}(v)$ then $d(u,S_1) + d(v,S_1) \leq d(u,v) + 5$.
    \end{claim}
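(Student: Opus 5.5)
The plan is a direct counting argument along a single shortest path. Fix a shortest $u$--$v$ path $P=(u=p_0,p_1,\ldots,p_d=v)$ with $d=d(u,v)$. Write $r_u=d(u,S_1)$ and $r_v=d(v,S_1)$. Since the graph is unweighted and $P$ is a shortest path, $d(u,p_i)=i$ and $d(p_i,v)=d-i$. By the definition $B_{S_1}(x)=\{y: d(x,y)<d(x,S_1)\}$, this gives
\[
p_i\in B_{S_1}(u)\iff i\le r_u-1,\qquad p_i\in B_{S_1}(v)\iff i\ge d-r_v+1 .
\]
So the vertices of $P$ lying in $B_{S_1}(u)\cap B_{S_1}(v)$ are exactly those whose index $i$ lies in the integer interval $I=[\max(0,d-r_v+1),\ \min(d,r_u-1)]$.

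The main step is the unclipped case, namely $0\le d-r_v+1$ and $r_u-1\le d$. Here $|I|=(r_u-1)-(d-r_v+1)+1=r_u+r_v-d-1$ whenever this quantity is nonnegative. I argue by contrapositive. Suppose $r_u+r_v>d+5$, i.e.\ $r_u+r_v\ge d+6$. Then $|I|\ge 5$, so $P$ contains five vertices of $B_{S_1}(u)\cap B_{S_1}(v)$, which contradicts the hypothesis. Hence $r_u+r_v\le d+5$, which is the claimed bound $d(u,S_1)+d(v,S_1)\le d(u,v)+5$. Because the hypothesis quantifies over \emph{every} shortest path, applying it to this one fixed $P$ suffices.

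The step I expect to be the real obstacle is the clipped case, where $r_u-1>d$ or $r_v-1>d$, i.e.\ one endpoint already lies inside the other's $S_1$-ball. The triangle inequality gives $|r_u-r_v|\le d$, so in this case both radii exceed roughly $d$ and $I$ becomes all of $\{0,\ldots,d\}$. The count is then only $d+1$ rather than $r_u+r_v-d-1$. If $d\ge 4$, then $d+1\ge 5$ and the hypothesis is again contradicted, so the argument goes through.

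The remaining subcase is $d\le 3$ with every vertex of $P$ in both balls. This is where I would check carefully how the count interacts with the endpoints. I would try first to show that, under the hypothesis, this configuration forces the radii to be small. Failing that, I would note that within \autoref{lm:distanceestimate} such pairs are exactly the ones handled by the companion estimate $d_H(u,v)-4$, since the whole path sits inside $B_{S_1}(u)\cap B_{S_1}(v)$. In that case the minimum defining $\tilde{d}$ remains a valid lower bound even if the $S_1$-term is large.
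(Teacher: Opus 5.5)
Your unclipped case is correct and is exactly the paper's argument. The paper lets $x_i$ be the last vertex of the path in $B_{S_1}(u)$ and $x_j$ the first one in $B_{S_1}(v)$. It reads off $d(u,S_1)=i+1$ and $d(v,S_1)=d(u,v)-j+1$ and uses $i-j\le 3$, which is your count $|I|=r_u+r_v-d-1\le 4$. The step $d(u,S_1)=i+1$ is justified there by $x_{i+1}\notin B_{S_1}(u)$, so it tacitly assumes that $x_{i+1}$ exists, and symmetrically $x_{j-1}$. The paper's proof therefore covers only your unclipped case.

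The clipped case is where your proposal has a genuine gap, and the gap cannot be filled because the claim is false there. If $r_u\ge d+2$, the triangle inequality gives only $r_v\ge r_u-d\ge 2$, not that $r_v$ is close to $d$. In general $I$ is just the last $\min(r_v,d+1)$ vertices of $P$. So the hypothesis constrains at most $r_v$, while $r_u$ may be as large as $r_v+d$.

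Concretely, take the path graph on $p_0,p_1,\ldots,p_d,q_1,q_2,q_3,q_4$ (in this order), with $S_1=\{q_4\}$, $u=p_0$ and $v=p_d$. The unique shortest $u$--$v$ path has at most four vertices in $B_{S_1}(u)\cap B_{S_1}(v)$, yet $d(u,S_1)+d(v,S_1)=(d+4)+4=d(u,v)+8$. With $d\ge 4$ this refutes your assertion that the argument goes through. With $d\le 3$, where all of $P$ lies in both balls, it shows that your plan to force the radii to be small cannot succeed. So the claim is valid only under the extra assumption $d(u,S_1),d(v,S_1)\le d(u,v)+1$, and there your argument is complete.

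Your fallback to the $d_H$ term is the right instinct, but it is a statement about \autoref{lm:distanceestimate}, not about this claim. It is also not justified as written, since the paper's companion claim needs five vertices on the path. A clean repair of the lemma splits on whether $d(u,S_1)+d(v,S_1)\ge d(u,v)+6$, instead of on the five-vertex condition. In that case, using $|d(u,S_1)-d(v,S_1)|\le d(u,v)$, there is an $x_k$ on $P$ with $d(u,S_1)\ge k+3$ and $d(v,S_1)\ge d(u,v)-k+3$. These two inequalities are all that the paper's companion argument uses, so it gives $d_H(u,v)\le d(u,v)+4$ whether or not the balls are clipped.
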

    \begin{proof}
        Consider a shortest path between $u,v$. Label the vertices $u=x_0, x_1, \ldots,x_k=v$. Let $x_i$ be the furthest vertex on the path from $u$ that is still in $B_{S_1}(u)$ and let $x_j$ be the furthest vertex from $v$ that is still in $B_{S_1}(v)$. Then since we cannot have five vertices on the path in $B_{S_1}(u)\cap B_{S_1}(v)$ we have $i-j < 4$. Since $x_{i+1}\notin B_{S_1}(u)$ we know $d(u,S_1)=i+1$ and similarly $d(v,S_1)=d(u,v) - j + 1$.

        Therefore,
        \[
        d(u,S_1) + d(v,S_1) = i + 1 + d(u,v) - j + 1 = d(u,v) + 2 + i-j \leq d(u,v) + 5.
        \]
    \end{proof}

    \begin{claim}
        If a shortest path between $u,v$ contains five vertices in $B_{S_1}(u)\cap B_{S_1}(v)$ then $d_H(u,v) \leq d(u,v) + 4$.
    \end{claim}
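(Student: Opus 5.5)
The plan is to route from $u$ to $v$ in $H$ along the shortest path $P$ wherever $P$ uses only low-degree vertices. Around high-degree vertices we detour through the ball tree of a nearby vertex of $T$, and each detour should cost at most $+2$.

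\textbf{Setup.} Label $P$ as $u=x_0,\dots,x_k=v$. For an unweighted graph, $B_{S_1}(u)\cap P$ is a prefix $x_0,\dots,x_i$ of $P$, and $B_{S_1}(v)\cap P$ is a suffix $x_j,\dots,x_k$. This gives $d(u,S_1)=i+1$ and $d(v,S_1)=k-j+1$, exactly as in the previous claim. The hypothesis of five common vertices gives $i-j\ge 4$, so $j+2\le i-2$; put $c=x_{j+2}$. Call a vertex \emph{heavy} if its degree is $>L$ and \emph{light} otherwise. Every edge of $P$ with a light endpoint lies in $H$. Hence if $P$ has no heavy vertex, $d_H(u,v)=d(u,v)$ and we are done. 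Otherwise let $a=x_r$ be the first heavy vertex on $P$ and $b=x_{r'}$ the last. Then $P[u,a]$ and $P[b,v]$ lie in $H$, since every edge on them has a light endpoint.

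\textbf{Key step (the main obstacle).} Let $t\in T$ be a neighbor of $a$; it exists because $a$ is heavy and $T$ hits its neighborhood. I want to show that every $x_p$ with $r\le p\le i-2$ lies in $B_{S_1}(t)$. By the triangle inequality,
\[
d(t,S_1)\ \ge\ d(a,S_1)-1\ \ge\ d(u,S_1)-d(u,a)-1 \;=\; i-r,
\]
while $d(t,x_p)\le 1+(p-r)$. Thus $d(t,x_p)<d(t,S_1)$ whenever $p\le i-2$. The spanning tree of $B_{S_1}(t)$ added to $H$ is taken to be a BFS tree from $t$. A ball is closed under shortest paths from its center, so tree distances from $t$ equal $d(t,\cdot)$. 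Hence, if $r\le j+2$, the vertices $a$ and $c$ are both in the ball and
\[
d_H(a,c)\le d(a,t)+d(t,c)\le 2+(j+2-r)=d(a,c)+2 .
\]
Symmetrically, using $d(v,S_1)=k-j+1$, a $T$-neighbor $t'$ of $b$ has $x_{j+2},\dots,b$ in its ball whenever $r'\ge j+2$. This gives $d_H(c,b)\le d(c,b)+2$ when $r'\ge j+2$.

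\textbf{Case split and assembly.}
\begin{itemize}
\item If $r\le j+2\le r'$, concatenate $P[u,a]$, the detour from $a$ to $c$, the detour from $c$ to $b$, and $P[b,v]$. This gives $d_H(u,v)\le d(u,v)+4$.
\item If $r>j+2$, then $x_0,\dots,x_{j+2}$ are all light, so $P[u,c]\subseteq H$. Also $r'\ge r>j+2$, so the second detour applies and gives total error $+2$.
\item If $r'<j+2$, the argument is symmetric: use the first detour to $c$, and then $P[c,v]\subseteq H$.
\end{itemize}
The delicate points to verify carefully are the exact indices $d(u,S_1)=i+1$ and $d(v,S_1)=k-j+1$, and that $i-j\ge 4$ is precisely what guarantees $j+2\le i-2$, so that the two ball-covered stretches meet at $c$.
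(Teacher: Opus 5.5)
Your proof is correct and follows essentially the same route as the paper. Your $c=x_{j+2}$ plays the role of the paper's middle vertex $x_k$ of the five shared vertices, and your first/last heavy-vertex case split is equivalent to the paper's step of taking the first heavy vertex on each half $P[u,c]$ and $P[c,v]$ and detouring through the shortest-path tree of a $T$-neighbor's $S_1$-ball at cost $+2$ per half. One small precision point: the equality $d(u,S_1)=i+1$ can fail when all of $P$ lies in $B_{S_1}(u)$, but your argument only uses $d(u,S_1)\ge i+1$ (and likewise $d(v,S_1)\ge k-j+1$), which always holds.
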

    \begin{proof}
        Let $x_0=u, x_1,\ldots, x_r=v$ be a shortest path between $u,v$ such that $x_{k-2},x_{k-1}, x_k, x_{k+1}, x_{k+2} \in B_{S_1}(u)\cap B_{S_1}(v)$. Consider the path between $u$ and $x_k$. If all vertices on this path are of degree $\leq L$ then the path is contained in $H$ and $d_H(u,x)=d(u,x)$.

        Otherwise, let $x_i$ be the first vertex of degree $>L$ on the path from $u$ to $x$ and let $p\in T$ be a neighbor of $x_i$. We claim that $x_{k}\in B_{S_1}(p)$. Indeed, by the triangle inequality, $d(p, S_1) \geq d(u, S_1) - d(u,p)$. Since $d(u,S_1) \geq k+3$ and $d(u,p)\leq i+1$ we have that $d(p,S_1) \geq k-1+2$. On the other hand, $d(p, x_k) \leq d(x_i, x_k) + 1 = k-i + 1$, and so $x_k \in B_{S_1}(p)$. 

        Therefore, $H$ contains all edges on the path $u\rightsquigarrow x_i$ as they are adjacent to vertices of degree $\leq L$. $H$ contains the edge $x_i\sim p$ as it is part of the spanning tree of $B_{S_1}(p)$. Finally, $H$ contains a path of length $d(p,x_k)$ between $p$ and $x_k$ since $x_k\in B_{S_1}(p)$. We conclude that,
        \[
        d_H(u,x_k) \leq d(u,p) + d(p,x_k) = d(u,x_k) + 2.
        \]

        Similarly, we can show that $d_H(x_k,v) \leq d(x_k,v) + 2$ and so we can conclude that: \[
        d_H(u,v) \leq d_H(u,x_k) + d_H(x_k,v) \leq d(u,x_k) + 2 + d(x_k,v) + 2 = d(u,v) + 4.
        \]
    \end{proof}

    Therefore, the estimate $\tilde{d}(u,v)$ gives us a lower bound to $d(u,v)$. Now consider $\tilde{d}(x,t)$ for the vertex $x\in S_2$ and the diameter endpoint $t$ discussed above. Since $\alpha \geq \frac{5}{3}-\frac{D}{15}$,
    \[
    d_H(x,t) - 4 \geq d(x,t)\geq \frac{4D}{5} - 1 + 2\alpha - 4 \geq \frac{3D}{5} -\alpha.
    \]
    Furthermore, since $\alpha \geq \frac{6}{5}$,
    \[
    d(x,S_1) + d(t,S_1) - 5 \geq \frac{D}{5} - 1 + 3\alpha + \frac{2D}{5}+\alpha - 5 \geq \frac{3D}{5}-\alpha.
    \]

    Therefore, $\tilde{d}(x,t)\geq \frac{3D}{5}-\alpha$. So returning the largest $\tilde{d}$ distance estimate computed will obtain the desired approximation.

    \begin{algorithm}
         \caption{$(5/3,7/3)$-Diameter Approximation}\label{alg:diam53approx}
         \begin{algorithmic}[1]
            \item \textbf{Input:} Unweighted, undirected graph $G = (V, E)$.
            \item \textbf{Output:} Diameter approximation $\tilde{D}$ such that $\frac{3D}{5}-\alpha \leq \tilde{D}\leq D$.
            \State $T\leftarrow$ hitting set for all neighborhoods of degree $> L$ vertices.
            \State Compute the $\ell$-nearest neighbors of every vertex in $T$.
            \State $S_1\leftarrow$ hitting set for the $\ell$-nearest neighborhoods of $T$.
            \State $S_2\leftarrow$ set of size $O(\ell\log n)$ such that all vertices have $|B_{S_2}(v)|=O(\frac{n}{\ell})$.
            \State BFS from all vertices in $S_1$ and from the set $S_2$. \label{ln:bfss1}
            \State $w\leftarrow$ furthest vertex from $S_2$, run BFS from all vertices in $B_{S_2}(w)$. \label{ln:bfsballs2}
            \State Initialize a spanner $H$ to contain edges adjacent to vertices of degree $\leq L$.
            \For{$u\in T$} 
                \State Add a spanning shortest paths tree of $B_{S_1}(u)$ to $H$.
            \EndFor
            \For{$x\in S_2$}
                \State Run BFS from $x$ in $H$ to compute $d_H(x,v)~\forall v\in V$.
                \State $\forall v\in V~~\tilde{d}(x,v) \leftarrow \min(d_H(x,v) - 4, d(x, S_1) + d(v, S_1)-5)$.
            \EndFor
            \State \Return Largest distance or distance estimate $\tilde{d}$ computed.
        \end{algorithmic}
    \end{algorithm}

    \paragraph{Correctness:} By \autoref{lm:distanceestimate}, all values $\tilde{d}$ computed are smaller than true distances in the graph. Therefore the value $\tilde{D}$ returned is bounded above by a true distance and so $\tilde{D}\leq D$.

    As discussed above, if $d(s,S_1)\leq \frac{2D}{5}+\alpha$ or $d(t,S_1)\leq \frac{2D}{5}+\alpha$, we find a sufficiently large distance in line \ref{ln:bfss1}. Furthermore, if $d(s,S_2)\geq \frac{D}{5}+1-2\alpha$ we find a sufficiently large distance in line \ref{ln:bfsballs2}.
    Otherwise, we have a point $x\in S_2$ and $t\in V$ such that $\tilde{d}(x,t)\geq \frac{3D}{5}-\alpha$ and thus the largest $\tilde{d}$ distance computed gives our desired approximation.

    \paragraph{Runtime:} Using \autoref{lm:determhit} to construct $T$ and $S_1$ takes $\tilde{O}(nL + n\ell)$ time, which will be dominated by $O(nm^{3/5}\log ^{8/5}n)$ for the $\ell,L$ we set. Constructing $S_2$ using a greedy hitting set takes $O(\frac{mn}{\ell}\log n)$ time.

    Computing the $\ell$-nearest neighbors of all vertices in $T$ takes $O(|T|\cdot \ell^2 )= O(\frac{n\ell^2}{L}\log n)$ time. Running lines \ref{ln:bfss1} and \ref{ln:bfsballs2} take $O(\frac{mn}{\ell}\log n)$ time.

    The spanner $H$ contains $O(nL)$ edges adjacent to low degree vertices. Adding the spanning trees to the spanner adds $O(\frac{n\ell}{L}\log n)$ edges for a total of $O(nL + \frac{n\ell}{L}\log n)$ edges. Therefore, computing the distances $\tilde{d}$ from $O(\ell \log n)$ vertices takes $O(nL\ell\log^2 n + \frac{n\ell^2}{L}\log^3 n)$.

    This gives us a final runtime of \[
    O\left(\frac{mn}{\ell}\log n + nL\ell\log^2 n + \frac{n\ell^2}{L}\log^3 n\right).
    \]
    Setting $\ell = \frac{m^{2/5}}{\log^{3/5} n}, L = m^{1/5}\log^{1/5}n$ we obtain our desired runtime of $O(nm^{3/5}\log^{8/5} n)$.
\end{proof}




%% file: 6_deterministic_cgr_radius_ecc.tex
\section{Deterministic $(2-\frac{1}{2^{k-1}})$-Radius Approximation and $(3-\frac{4}{2^{k}+1})$-All Nodes Eccentricity Approximation} \label{sec:radiuseccapprox}

In this section we show that the algorithm introduced in the proof of \autoref{thm:detcgrdiam} can also approximate the radius and all eccentricities of a given graph. 

While there is an $\tilde{O}(m)$ expected time $2$-approximation algorithm for all eccentricities \cite{choudhary2020extremal,towardsdiam}, it is not known how to derandomize these algorithms. Thus, the deterministic algorithm below presents the best known deterministic approximation algorithms for all nodes eccentricities.

Consider Algorithm \autoref{alg:detcgrradiusecc}, which runs the same graph searches as Algorithm \ref{alg:detcgr} and outputs radius and eccentricities approximations.

\begin{algorithm}
         \caption{($2-\frac{1}{2^{k-1}}$)-Radius Approximation and $(3-\frac{4}{2^{k-1}+1})$-Eccentricities Approximation}\label{alg:detcgrradiusecc}
         \begin{algorithmic}[1]
            \item \textbf{Input:} Weighted, undirected graph $G = (V, E)$.
            \item \textbf{Output:} Radius approximation $\tilde{R}$ and eccentricity approximations $\tilde{\epsilon}_w$ for all $w\in V$.
            \State $A_0\leftarrow V$.
            \For{$i=0,\ldots, k-2$}
                \State $N_i(v)\leftarrow$ the $q=\tilde{O}(n^{1/k})$ closest vertices to $v$ in $A_i$.\label{ln:startfor}
                \State $A_{i+1}\leftarrow$ hitting set for all the sets $N_i(v)$.
                \State $v_i \leftarrow$ furthest vertex in $V$ from $A_{i+1}$.
                \State $B_i(v_i) \leftarrow \{x\in A_i : d(x,v_i) < d(v_i, A_{i+1})\}$.
                \State Run Dijkstra's algorithm from every $y\in B_i(v_i)$.\label{ln:endfor}
            \EndFor
            \State Run Dijkstra's algorithm from every vertex in $A_{k-1}$.
            \State \Return $\tilde{R}\leftarrow \min_{x\in B_0(v_0)\cup \ldots\cup B_{k-2}(v_{k-2})\cup A_{k-1}}\epsilon(x)$.
            \State \Return $\tilde{\epsilon}_w\leftarrow \max(\max_{y\in \{v_0,\ldots, v_{k-2}\}\cup A_{k-1}}d(w, y), \max_{x\in B_0(v_0)\cup \ldots\cup B_{k-2}(v_{k-2}) }\epsilon(x) - d(x,w))$
            \end{algorithmic}
    \end{algorithm}

    We claim that Algorithm \ref{alg:detcgrradiusecc} gives the following result.

    \begin{theorem}\label{thm:detcgrradiusecc}
        Given an undirected graph with edges weight bounded by $M$ and an integer $k\geq 2$, one can compute in \emph{deterministic} $\tilde{O}(mn^{1/k})$ time an estimate $\tilde{R}$ satisfying $R\leq \tilde{R}\leq \frac{2^k-1}{2^{k-1}}R + \frac{2^{k-1}-1}{2^{k-1}}M$, and estimates $\tilde{\epsilon}_w$ for every $w\in V$ satisfying $\frac{2^{k-1}+1}{3\cdot 2^{k-1}-1}\epsilon(w)-\frac{2^k-2}{3\cdot 2^{k-1}-1}M\leq \tilde{\epsilon}_w\leq \epsilon(w)$.
    \end{theorem}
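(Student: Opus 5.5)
The runtime is inherited directly from the proof of \autoref{thm:detcgrdiam}. Algorithm \ref{alg:detcgrradiusecc} performs exactly the same Dijkstra searches as Algorithm \ref{alg:detcgr}. It then computes $\tilde{R}$ and all $\tilde{\epsilon}_w$ in $O(n\cdot(|A_{k-1}|+\sum_i|B_i(v_i)|+k))$ extra time from the stored distance arrays, which is dominated by $\tilde{O}(mn^{1/k})$.

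The upper bounds $R\le\tilde{R}$ and $\tilde{\epsilon}_w\le\epsilon(w)$ are immediate. $\tilde{R}$ is a true eccentricity. Each term of $\tilde{\epsilon}_w$ is either a true distance from $w$ or, by the triangle inequality, satisfies $\epsilon(x)-d(x,w)\le\epsilon(w)$.

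The plan for the two nontrivial inequalities is to rerun the inductive argument of \autoref{thm:detcgrdiam}, with the parameters $\varepsilon,\gamma_i,\alpha_i$ rechosen for each problem.

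\textbf{Radius.} Let $c$ be a center with $\epsilon(c)=R$. Set $\varepsilon'$ so that the target is $\tilde{R}\le (1+\varepsilon')(R+M)$, roughly $\varepsilon'=\frac{2^{k-1}-1}{2^{k-1}}$. It suffices to exhibit a searched vertex $x$ with $d(c,x)\le \varepsilon'(R+M)$, because then $\epsilon(x)\le R+d(c,x)$.
\begin{itemize}
\item \emph{Base case.} If $d(c,A_{k-1})$ is small, we are done. Otherwise $d(v_{k-2},A_{k-1})\ge d(c,A_{k-1})$ is large.
\item \emph{Inductive step} (analogue of \autoref{lm:approxstep}). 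Use $d(v_i,c)\le R$, which holds automatically because $c$ is a center; this replaces the case analysis on the Dijkstra search from $v_i$. Walk along the $v_i$--$c$ path to a vertex $c'$ within $\alpha_{i+1}(R+M)$ of $v_i$. Either $c'$ is within $\gamma_i$ of some $q\in A_i$, which forces $q\in B_i(v_i)$ and $d(c,q)$ small, or $d(v_{i-1},A_i)$ is large.
\item \emph{Final case.} $B_0(v_0)$ contains a vertex on the $v_0$--$c$ path close to $c$.
\end{itemize}
Solving the analogous recursion $\gamma_{i+1}=2\gamma_i+\text{const}$ should give the ratio $\frac{2^k-1}{2^{k-1}}$.

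\textbf{Eccentricities.} Fix $w$ and let $t$ be its farthest vertex. We use two kinds of witnesses. If $d(t,A_{k-1})$ or $d(t,v_i)$ is small, the term $d(w,y)$ with $y\in A_{k-1}\cup\{v_i\}$ is large. If some $x\in B_i(v_i)$ is close to $w$, then $\epsilon(x)-d(x,w)\ge \epsilon(w)-2d(x,w)$ is large. The induction then runs on $d(v_i,A_{i+1})$ as before, but with the path taken from $v_i$ toward $w$, and with the dichotomy at each step being either that $v_i$ is far from $w$, so $d(w,v_i)$ is large, or that a ball vertex is near $w$.

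The main obstacle is this eccentricity case. Because the loss is $2d(x,w)$ rather than $d(x,w)$, the parameters must balance three terms instead of two, and deriving the exact constant $\frac{2^{k-1}+1}{3\cdot2^{k-1}-1}$ requires recomputing the $\gamma$-recursion. My first attempt would be to set a target $\epsilon(w)(1-\delta)$ and to require that $d(v_i,w)\le(1-\delta)\epsilon(w)$ leave slack $\gamma_{i+1}$ to reach within $\delta/2$ of $w$. I would then solve for $\delta$ from the base condition, and finally check the additive $M$ term via the same ``$+M$'' edge-overshoot accounting used in \autoref{lm:approxstep}.
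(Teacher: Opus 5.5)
Your proposal is correct and follows the paper's proof essentially step for step: the runtime is inherited from \autoref{thm:detcgrdiam}; the radius induction uses $d(v_i,c)\le R$ in place of the Dijkstra case split, with $\delta=\frac{2^{k-1}-1}{2^{k-1}}$; and the eccentricity induction uses exactly the paper's dichotomy ($d(w,v_i)$ large, or a ball vertex within $\frac{\delta}{2}(\epsilon(w)+M)$ of $w$, or $d(v_{i-1},A_i)$ large), where the recursion $\gamma_{i+1}=2\gamma_i+(1-\frac{3\delta}{2})$ with $\gamma_1=1-\frac{3\delta}{2}$ and $\gamma_{k-1}=\delta$ does yield $\delta=\frac{2^k-2}{3\cdot 2^{k-1}-1}$. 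One small slip: your radius target should read $(1+\varepsilon')R+\varepsilon' M$ rather than $(1+\varepsilon')(R+M)$, and the former is what your witness condition $d(c,x)\le\varepsilon'(R+M)$ actually delivers.
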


    We showed the runtime of the algorithm in the proof of \autoref{thm:detcgrdiam}, so we are left to show that approximation guarantees, which we do in the subsequent \autoref{lm:cgrradiusapprox} and \autoref{lm:cgreccapprox}. We will follow a very similar proof idea, adjusting the parameters for each problem.

    \begin{lemma}\label{lm:cgrradiusapprox}
        The value $\tilde{R}$ outputted by algorithm \ref{alg:detcgrradiusecc} satisfies $R\leq \tilde{R}\leq \frac{2^k-1}{2^{k-1}}R + \frac{2^{k-1}-1}{2^{k-1}}M$.
    \end{lemma}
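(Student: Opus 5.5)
The lower bound $R\leq\tilde R$ is immediate, since $\tilde R$ is the eccentricity of an actual vertex. For the upper bound I will mirror the inductive argument from the proof of \autoref{thm:detcgrdiam}, with the roles of ``far'' and ``close'' exchanged. Fix a center $c$ with $\epsilon(c)=R$. Set $\delta=\frac{2^{k-1}-1}{2^{k-1}}$, so that the target is $\tilde R\leq (1+\delta)R+\delta M$. The goal is to exhibit a vertex $x$ among the searched vertices $B_0(v_0)\cup\cdots\cup B_{k-2}(v_{k-2})\cup A_{k-1}$ with $d(c,x)\leq \delta(R+M)$. Then $\epsilon(x)\leq \epsilon(c)+d(c,x)\leq (1+\delta)R+\delta M$.

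I introduce thresholds $\beta_1<\cdots<\beta_{k-1}$, playing the role of the $\gamma_i$, together with increments $\alpha_{i+1}=\beta_{i+1}-\beta_i$. They satisfy two conditions.
\begin{itemize}
\item Base case: $\beta_{k-1}=\delta$. If $d(c,A_{k-1})\leq\delta(R+M)$ we are done. Otherwise, since $v_{k-2}$ maximizes the distance to $A_{k-1}$, we get $d(v_{k-2},A_{k-1})>\beta_{k-1}(R+M)$.
\item Inductive step, analogous to \autoref{lm:approxstep}: assume $d(v_i,A_{i+1})>\beta_{i+1}(R+M)$. Then $B_i(v_i)$ contains every vertex of $A_i$ within $\beta_{i+1}(R+M)$ of $v_i$.
\end{itemize}

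For the inductive step, use $d(c,v_i)\leq R$, which holds because $\epsilon(c)=R$. Walk along the shortest $c$--$v_i$ path and pick a vertex $z$ with $d(v_i,z)\leq\alpha_{i+1}(R+M)$ and $d(c,z)\leq R-\alpha_{i+1}(R+M)+M$; the $+M$ absorbs edge granularity. There are two cases.
\begin{itemize}
\item If $d(z,A_i)>\beta_i(R+M)$, then $d(v_{i-1},A_i)>\beta_i(R+M)$ and the induction continues.
\item Otherwise some $q\in A_i$ has $d(z,q)\leq\beta_i(R+M)$. Then $d(v_i,q)\leq\beta_{i+1}(R+M)$, so $q\in B_i(v_i)$. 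Also $d(c,q)\leq (1-\alpha_{i+1}+\beta_i)(R+M)$.
\end{itemize}
So I need the relation $1-\alpha_{i+1}+\beta_i=\delta$, i.e.\ $\beta_{i+1}=2\beta_i+(1-\delta)$, the analogue of \autoref{prop:paramprops}.

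At the bottom, $d(v_0,A_1)>\beta_1(R+M)$, so $B_0(v_0)$ contains all vertices within $\beta_1(R+M)$ of $v_0$. On the shortest path from $v_0$ to $c$, choose $y$ with $d(c,y)\leq\delta(R+M)$ and $d(v_0,y)\leq R-\delta(R+M)+M\leq(1-\delta)(R+M)$. So the final condition is $\beta_1\geq 1-\delta$.

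Solving the recurrence: $\beta_i+(1-\delta)$ doubles at each step. With $\beta_1=1-\delta$ this gives $\beta_i=(2^i-1)(1-\delta)$. The base condition $\beta_{k-1}=\delta$ then reads $(2^{k-1}-1)(1-\delta)=\delta$, i.e.\ $\delta=\frac{2^{k-1}-1}{2^{k-1}}$, which matches the claimed bound.

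The main obstacle is bookkeeping the additive $M$ terms consistently. Choosing the intermediate vertex $z$ on a weighted path gives $d(v_i,z)\leq\alpha_{i+1}(R+M)$ while overshooting on the other side by at most $M$. I must check that writing every threshold as a multiple of $(R+M)$ keeps each inequality valid, in particular $R-\alpha(R+M)+M\leq(1-\alpha)(R+M)$, which holds trivially. Everything else is the same triangle-inequality template as in the diameter proof.
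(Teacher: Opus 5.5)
Your proposal is correct and follows the paper's proof essentially step for step. It uses the same base case at $A_{k-1}$, the same inductive step along the $c$--$v_i$ shortest path with the relation $1-\alpha_{i+1}+\gamma_i=\delta$, and the same final step inside $B_0(v_0)$. Your thresholds $\beta_i=(2^i-1)(1-\delta)$ are exactly the paper's $\gamma_i=\frac{1}{2^{k-1-i}}+\delta-1$; you derive them by solving the recurrence rather than stating them up front.
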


    \begin{proof}
        Since $\tilde{R}$ is the eccentricity of a vertex in the graph we have that $\tilde{R}\leq R$. We are left to show the upper bound.

        Let $c$ be a center of the graph, $\epsilon(c)=R$ and define the values $\delta,\gamma_1, \ldots, \gamma_{k-1}, \alpha_2, \ldots, \alpha_{k-1}$ as follows. Let $\delta = \frac{2^{k-1}-1}{2^{k-1}}$, we want to show that $\tilde{R}\leq (1+\delta)R + \delta M$. For every $s=0,\leq, k-2$ define $\gamma_{k-1-s}=\frac{1}{2^s}+\delta - 1$. Finally, define for every $i=0,1, \ldots, k-2$, $\alpha_{i+1}=\gamma_{i+1}-\gamma_i$. Since $\gamma_i$ are increasing we always have that $\alpha_i>0$ and since $\gamma_1 = 1-\delta = \frac{1}{2^{k-1}}>0$ we have that all $\gamma_i>0$.

        \begin{proposition}\label{prop:radiusparams}
            $1-\alpha_{i+1}+\gamma_i = \delta$.
        \end{proposition}
        \begin{proof}
            \begin{align*}
                \alpha_{i+1}-\gamma_i = \gamma_{i+1}-2\gamma_i = \frac{1}{2^{k-i-2}}+\delta-1 - \frac{2}{2^{k-i-1}}-2\delta+2 = 1-\delta.
            \end{align*}
        \end{proof}

        \begin{claim}\label{clm:radiusbase}
            If there exists a vertex $x\in A_{k-1}$ such that $d(c,x) \leq \delta(R+M)$ then $\epsilon(x) \leq (1+\delta)R + \delta M$.
        \end{claim}
        The proof follows from the triangle inequality, as all vertices are within distance $R$ of $c$. Thus, if after running Dijkstra's from $A_{k-1}$ we don't have a good enough approximation, then $d(v_{k-2}, A_{k-1}) > \delta(R+M) = \gamma_{k-1}(D+M)$.

        \begin{claim} \label{clm:radiusstep}
            Suppose $d(v_i, A_{i+1}) > \gamma_{i+1}(R + M)$, then after running Dijkstra's from $B_i(v_i)$ we either have an estimate $\tilde{R}\leq (1+\delta)R + \delta M$ or $d(v_{i-1}, A_i) > \gamma_i (R + M)$.
        \end{claim}
        \begin{proof}
            Consider the shortest path of length $\leq R$ between $v_i$ and $c$. Let $x$ be a vertex on this path such that $d(v_i, x) \leq \alpha_{i+1}(R+M)$ and $d(x, c) \leq R - \alpha_{i+1}(R+M) + M = (1-\alpha_{i+1})(R + M)$.

            If $d(x, A_i) > \gamma_i(R+M)$ then we have that $d(v_{i-1}, A_i) > \gamma_i (R+M)$. Otherwise, there exists a vertex $q\in A_i$ such that $d(x,q)\leq \gamma_i(R+M)$ and so by the triangle inequality $d(v_i, q) \leq (\alpha_{i+1} + \gamma_i)(R+M) = \gamma_{i+1}(R+M)$. Therefore, $q\in B_i(v_i)$ and so we have computed Dijkstra's from it. Furthermore, by \autoref{prop:radiusparams},
            \[
            d(c,q) \leq (1-\alpha_{i+1} + \gamma_i)(R+M) = \delta(R+M).
            \]
            Therefore, $\epsilon(q)\leq (1+\delta)R + \delta M$ and so we will have found the desired approximation.
        \end{proof}

        By applying \autoref{clm:radiusstep} $(k-1)$ times and using \autoref{clm:radiusbase} as the base case, we conclude that we either compute an estimate $\tilde{R}\leq (1+\delta)R + \delta M$ or have that $d(v_0, A_1) > \gamma_1 (R+M) = (1-\delta)(R+M)$.

        Now consider the shortest path between $c$ and $v_0$. There exists a vertex $x$ on this path such that $d(v_0, x) \leq (1-\delta)(R+M)$ and $d(x,c) \leq R - (1-\delta)(R+M) + M = \delta(R+M)$. Thus, $x\in B_0(v_0)$ and $\epsilon(x) \leq (1+\delta)R + \delta M$ and so we obtain the desired approximation.
    \end{proof}
    
    \begin{lemma}\label{lm:cgreccapprox}
        For every $w\in V$, the value $\tilde{\epsilon}_w$ outputted by algorithm \ref{alg:detcgrradiusecc} satisfies $\frac{2^{k-1}+1}{3\cdot 2^{k-1}-1}\epsilon(w)-\frac{2^k-2}{3\cdot 2^{k-1}-1}M \leq \tilde{\epsilon}_w\leq \epsilon(w)$.
    \end{lemma}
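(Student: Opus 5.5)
The upper bound $\tilde{\epsilon}_w\le\epsilon(w)$ is immediate: every term in the output is a lower bound on $\epsilon(w)$. Indeed $d(w,y)\le\epsilon(w)$, and for any $x$ the triangle inequality gives $\epsilon(x)\le d(x,w)+\epsilon(w)$, so $\epsilon(x)-d(x,w)\le\epsilon(w)$. For the lower bound, fix $w$ and a vertex $t$ with $d(w,t)=\epsilon(w)=:E$. Write $\beta=\frac{2^{k-1}+1}{3\cdot 2^{k-1}-1}$ and aim for $\tilde{\epsilon}_w\ge \beta E-\mu M$ with $\mu=\frac{2^k-2}{3\cdot2^{k-1}-1}$. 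The two kinds of terms will be used as follows.

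The first kind, $d(w,y)$ with $y\in\{v_0,\dots,v_{k-2}\}\cup A_{k-1}$, is used in two ways. First, if some $y\in A_{k-1}$ is close to $t$, then $d(w,y)\ge E-d(y,t)$. Second, if some $v_i$ is far from $w$, we are done directly.

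The second kind is used when some $x\in B_i(v_i)$ is close to $w$. Since $\epsilon(x)\ge d(x,t)\ge E-d(x,w)$, this gives $\epsilon(x)-d(x,w)\ge E-2d(x,w)$. So it suffices to find $x\in\bigcup_i B_i(v_i)$ with $d(x,w)\le\frac{1-\beta}{2}E$, up to additive $M$ terms.

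The proof then mirrors the diameter proof, with an induction on levels and parameters $\gamma_1<\dots<\gamma_{k-1}$ to be determined. The invariant is that either the estimate is already good or $d(v_{i},A_{i+1})>\gamma_{i+1}(E+M)$.

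\textbf{Base case.} If $d(t,A_{k-1})\le(1-\beta)E+\mu M$, the $A_{k-1}$-term suffices. Otherwise $d(v_{k-2},A_{k-1})\ge d(t,A_{k-1})$ is large, because $v_{k-2}$ is the vertex farthest from $A_{k-1}$. This determines $\gamma_{k-1}$.

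\textbf{Inductive step.} Assume $d(v_i,A_{i+1})>\gamma_{i+1}(E+M)$. If $d(w,v_i)\ge\beta E-\mu M$ we are done. Otherwise, walk along a shortest $v_i$–$w$ path and pick a vertex $c$ with $d(v_i,c)\le\alpha_{i+1}(E+M)$. Because consecutive path vertices differ by at most $M$, we also get $d(c,w)\le\beta E-\alpha_{i+1}(E+M)+M$. There are two cases.
\begin{itemize}
\item If $d(c,A_i)>\gamma_i(E+M)$, then $d(v_{i-1},A_i)\ge d(c,A_i)$ is large, and the induction continues.
\item Otherwise, pick $q\in A_i$ with $d(c,q)\le\gamma_i(E+M)$. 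Then $d(v_i,q)\le(\alpha_{i+1}+\gamma_i)(E+M)=\gamma_{i+1}(E+M)<d(v_i,A_{i+1})$, so $q\in B_i(v_i)$. Moreover $d(q,w)\le d(c,w)+\gamma_i(E+M)$, which the parameter choice should make at most $\frac{1-\beta}{2}E+\frac{\mu}{2}M$, giving the needed estimate.
\end{itemize}

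\textbf{Final level.} At level $0$ we have $A_0=V$, so $c$ itself lies in $B_0(v_0)$ once $d(v_0,A_1)>\gamma_1(E+M)$. We choose $c$ on the $v_0$–$w$ path with $d(v_0,c)\le\gamma_1(E+M)$, and the same bound on $d(c,w)$ finishes.

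The main obstacle is choosing $\beta,\mu,\gamma_i,\alpha_i$ consistently. The analogue of \autoref{prop:paramprops} should read $\beta-\alpha_{i+1}+\gamma_i=\frac{1-\beta}{2}$ in the normalized units. Together with $\alpha_{i+1}=\gamma_{i+1}-\gamma_i$, this gives the recursion $\gamma_{i+1}=2\gamma_i+\frac{3\beta-1}{2}$. The base condition ties $\gamma_{k-1}$ to $1-\beta$, and the final level requires $\beta-\gamma_1\le\frac{1-\beta}{2}$. Solving the recursion, $\gamma_{k-1}+c=2^{k-2}(\gamma_1+c)$ with $c=\frac{3\beta-1}{2}$; combining this with $\gamma_{k-1}=1-\beta$ and $\gamma_1=\frac{3\beta-1}{2}$ should yield exactly $\beta=\frac{2^{k-1}+1}{3\cdot2^{k-1}-1}$. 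I would check this computation carefully, and separately track the $+M$ losses incurred each time a path vertex is chosen, to confirm the additive constant $\mu M$.
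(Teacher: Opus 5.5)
Your proposal is correct and is essentially the paper's proof. Your $\beta,\mu$ are the paper's $1-\delta,\delta$, and your relation $\beta-\alpha_{i+1}+\gamma_i=\frac{1-\beta}{2}$ is \autoref{prop:eccparams}. Your choices $\gamma_1=\frac{3\beta-1}{2}$ and $\gamma_{k-1}=1-\beta$ match the paper's, and the base case, inductive step and level-$0$ step are the same. Your level-$0$ argument on the $v_0$--$w$ path is the correct one; the paper's closing paragraph was carried over from the radius proof.

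The only fix needed is in the additive bookkeeping. Keep the $-\mu M$ from the case hypothesis $d(w,v_i)<\beta E-\mu M$. Since $\mu=1-\beta$, this gives $d(c,w)\le(\beta-\alpha_{i+1})(E+M)$, and $d(c,w)\le(\beta-\gamma_1)(E+M)$ at level $0$, which yields exactly $\beta E-\mu M$. The bound $d(c,w)\le\beta E-\alpha_{i+1}(E+M)+M$ as you wrote it only gives $\beta E-3\mu M$.
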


    \begin{proof}
        By the triangle inequality, for any vertex $x$, $\epsilon(x) \leq \epsilon(w)+d(w,x)$, therefore $\epsilon(x)-d(w,x)$ always gives a lower bound to the eccentricity of $w$. Thus, the estimate $\tilde{\epsilon}_w$ computed for each vertex satisfies $\tilde{\epsilon}_w\leq \epsilon(w)$. We are left to show the lower bound.

        Let $\hat{w}$ be the furthest vertex from $w$ in the graph, $d(w,\hat{w})= \epsilon(w)$,  and define the values $\delta,\gamma_1, \ldots, \gamma_{k-1}, \alpha_2, \ldots, \alpha_{k-1}$ as follows.  Let $\delta = \frac{2^k-2}{3\cdot 2^{k-1}-1}$, we want to show that $\tilde{\epsilon}_w\geq (1-\delta)\epsilon(w) - \delta M$. For every $s=0,\leq, k-2$ define $\gamma_{k-1-s}=\frac{1}{2^s}(1-\frac{\delta}{2})+\frac{3\delta}{2} - 1$. Finally, define for every $i=0,1, \ldots, k-2$, $\alpha_{i+1}=\gamma_{i+1}-\gamma_i$. Since $\gamma_i$ are increasing we always have that $\alpha_i>0$ and since $\gamma_1 = 1-\frac{3\delta}{2} = \frac{2}{3\cdot 2^{k-1}-1}>0$ we have that all $\gamma_i>0$.

        \begin{proposition}\label{prop:eccparams}
            $1-\delta -\alpha_{i+1}+\gamma_i = \frac{\delta}{2}$.
        \end{proposition}
        \begin{proof}
            \begin{align*}
                \alpha_{i+1}-\gamma_i = \gamma_{i+1}-2\gamma_i = \left(\frac{1}{2^{k-i-2}}-\frac{2}{2^{k-i-1}}\right)(1-\frac{\delta}{2})+1-\frac{3\delta}{2} = 1-\frac{3\delta}{2}.
            \end{align*}
        \end{proof}

        \begin{claim}\label{clm:eccbase}
            If there exists a vertex $x\in A_{k-1}$ such that $d(\hat{w},x) \geq \delta(\epsilon(w)+M)$ then $d(w,x) \geq (1-\delta)R - \delta M$.
        \end{claim}
        The proof follows from the triangle inequality. Thus, if after running Dijkstra's from $A_{k-1}$ we don't have a good enough approximation, then $d(v_{k-2}, A_{k-1}) > \delta(\epsilon(w)+M) = \gamma_{k-1}(\epsilon(w)+M)$.

        \begin{claim} \label{clm:eccstep}
            Suppose $d(v_i, A_{i+1}) > \gamma_{i+1}(\epsilon(w) + M)$, then after running Dijkstra's from $B_i(v_i)$ we either have an estimate $\tilde{\epsilon}_w\geq (1-\delta)\epsilon(w)- \delta M$ or $d(v_{i-1}, A_i) > \gamma_i (\epsilon(w) + M)$.
        \end{claim}
        \begin{proof}
            If $d(w, v_i) \geq (1-\delta)\epsilon(w) - \delta M$ we are done. Otherwise, consider a shortest path between $v_i$ and $w$. Let $x$ be a vertex on this path such that $d(v_i, x) \leq \alpha_{i+1}(\epsilon(w)+M)$ and $d(x, w) \leq (1-\delta)\epsilon(w) - \delta M - \alpha_{i+1}(\epsilon(w)+M) + M = (1-\delta -\alpha_{i+1})(\epsilon(w) + M)$.

            If $d(x, A_i) > \gamma_i(\epsilon(w)+M)$ then we have that $d(v_{i-1}, A_i) > \gamma_i (\epsilon(w)+M)$. Otherwise, there exists a vertex $q\in A_i$ such that $d(x,q)\leq \gamma_i(\epsilon(w)+M)$ and so by the triangle inequality $d(v_i, q) \leq (\alpha_{i+1} + \gamma_i)(\epsilon(w)+M) = \gamma_{i+1}(\epsilon(w)+M)$. Therefore, $q\in B_i(v_i)$ and so we have computed Dijkstra's from it. Furthermore, by \autoref{prop:eccparams},
            \[
            d(w,q) \leq (1-\delta -\alpha_{i+1} + \gamma_i)(\epsilon(w)+M) = \frac{\delta}{2}(\epsilon(w)+M).
            \]
            Therefore, by the triangle inequality,
            \[
            \tilde{\epsilon}_w \geq \epsilon(q)-d(w,q) \geq \epsilon(w)-2d(w,q) \geq \epsilon(w) - \delta(\epsilon(w) + M).
            \]
            And so we will have found the desired approximation for every $w$.
        \end{proof}

        By applying \autoref{clm:eccstep} $(k-1)$ times and using \autoref{clm:eccbase} as the base case, we conclude that we either compute an estimate $\tilde{R}\leq (1+\delta)R + \delta M$ or have that $d(v_0, A_1) > \gamma_1 (R+M) = (1-\delta)(R+M)$.

        Now consider the shortest path between $c$ and $v_0$. There exists a vertex $x$ on this path such that $d(v_0, x) \leq (1-\delta)(R+M)$ and $d(x,c) \leq R - (1-\delta)(R+M) + M = \delta(R+M)$. Thus, $x\in B_0(v_0)$ and $\epsilon(x) \leq (1+\delta)R + \delta M$ and so we obtain the desired approximation.
    \end{proof}